\keywords{two-player games on graphs, half-positionality, memoryless optimal strategies, B\"uchi automata, \texorpdfstring{$\omega$}{omega}-regularity}
\newcommand{\qedEx}{\hfill\ensuremath{\lhd}}
\newcommand{\IN}{\mathbb{N}}
\renewcommand{\epsilon}{\varepsilon}
\newcommand{\emptyPth}{\ensuremath{\hist}}
\newcommand{\hist}{\ensuremath{\gamma}}
\newcommand{\play}{\ensuremath{\pi}}
\newcommand{\clr}{\ensuremath{c}}
\newcommand{\colors}{\ensuremath{C}}
\newcommand{\goodColors}{\ensuremath{F}}
\newcommand{\s}{\ensuremath{v}}
\newcommand{\states}{\ensuremath{V}}
\newcommand{\edge}{\ensuremath{e}}
\newcommand{\edges}{\ensuremath{E}}
\newcommand{\edgeOut}{\ensuremath{\mathsf{last}}}
\newcommand{\arena}{\ensuremath{\mathcal{A}}}
\newcommand{\arenaFull}{\ensuremath{(\states, \states_1, \states_2, \edges)}}
\newcommand*{\inverse}[1]{#1^{-1}}
\newcommand{\gameFull}{\ensuremath{(\arena, \wc)}}
\newcommand{\strat}{\ensuremath{\sigma}}
\newcommand*{\player}[1]{\ensuremath{\mathcal{P}_{#1}}}
\newcommand{\Pone}{\ensuremath{\player{1}}}
\newcommand{\Ptwo}{\ensuremath{\player{2}}}
\newcommand{\Hists}{\ensuremath{\mathsf{Hists}}}
\newcommand{\alphabet}{\ensuremath{\colors}}
\newcommand{\dba}{\ensuremath{\mathcal{B}}}
\newcommand{\nba}{\ensuremath{\mathcal{B}}}
\newcommand{\struct}{\ensuremath{\mathcal{S}}}
\newcommand{\atmtnStates}{\ensuremath{Q}}
\newcommand{\atmtnState}{\ensuremath{q}}
\newcommand{\atmtnInit}{\ensuremath{\atmtnState_{\mathsf{init}}}}
\newcommand{\atmtnInits}{\ensuremath{\atmtnStates_{\mathsf{init}}}}
\newcommand{\atmtnTrans}{\ensuremath{\Delta}}
\newcommand{\atmtnTransSafe}{\ensuremath{\Delta}_{\accepSet{\text -}\mathsf{free}}}
\newcommand{\atmtnUpd}{\ensuremath{\delta}}
\newcommand{\atmtnUpdWord}{\ensuremath{\atmtnUpd^*}}
\newcommand{\atmtnLang}[1]{\ensuremath{\mathcal{L}(#1)}}
\newcommand{\accepSet}{\ensuremath{\alpha}}
\newcommand{\accepSetNBA}{\ensuremath{\widehat{\accepSet}}}
\newcommand{\dbaFull}{\ensuremath{(\atmtnStates, \alphabet, \atmtnInit, \atmtnUpd, \accepSet)}}
\newcommand{\nbaFull}{\ensuremath{(\atmtnStates, \alphabet, \atmtnInits, \atmtnTrans, \accepSetNBA)}}
\newcommand{\nbaStruct}{\ensuremath{(\atmtnStates, \alphabet, \atmtnInits, \atmtnTrans)}}
\newcommand{\dbaStruct}{\ensuremath{(\atmtnStates, \alphabet, \atmtnInit, \atmtnUpd)}}
\newcommand{\dbaSat}{\ensuremath{\dba_{\mathsf{sat}}}}
\newcommand{\dbaSatFull}{\ensuremath{(\atmtnStates, \alphabet, \atmtnInit, \atmtnUpd, \accepSetSat)}}
\newcommand{\accepSetSat}{\ensuremath{\accepSet_{\mathsf{sat}}}}
\newcommand{\run}{\ensuremath{\varrho}}
\newcommand{\word}{\ensuremath{w}}
\newcommand*{\dbaRun}[1]{\dba(#1)}
\newcommand{\minStateAtmtn}{\ensuremath{\struct_\prefEq}}
\newcommand{\prefClass}{\ensuremath{\minStateAtmtn}}
\newcommand{\minStateStates}{\ensuremath{\atmtnStates_\prefEq}}
\newcommand{\minStateInit}{\ensuremath{\widetilde{\atmtnState}_{\mathsf{init}}}}
\newcommand{\minStateUpd}{\ensuremath{\atmtnUpd_\prefEq}}
\newcommand{\minStateAtmtnFull}{\ensuremath{(\minStateStates, \alphabet, \minStateInit, \minStateUpd)}}
\newcommand{\atmtnStateMax}{\ensuremath{\atmtnState_\mathsf{max}}}
\newcommand{\alphProof}{\ensuremath{\colors}}
\newcommand{\qEquiv}{\ensuremath{\atmtnState_\prefEq}}
\newcommand{\alphProofGood}{\ensuremath{\goodColors_{\eqClass{\qEquiv}}}}
\newcommand{\mainOrd}{\ensuremath{\theta}}
\newcommand{\ord}{\ensuremath{\lambda}}
\newcommand{\cardinal}{\ensuremath{\kappa}}
\newcommand{\ordBis}{\ensuremath{\eta}}
\newcommand{\grSt}{\ensuremath{v}}
\newcommand{\grStates}{\ensuremath{V}}
\newcommand{\grEdges}{\ensuremath{\edges}}
\newcommand*{\grTr}[1]{\ensuremath{\xrightarrow{#1}}}
\newcommand{\univGrStates}{\ensuremath{U}}
\newcommand{\univGr}{\ensuremath{\mathcal{U}}}
\newcommand{\univGrB}{\ensuremath{\univGr_{\dba, \mainOrd}}}
\newcommand{\univGrStatesB}{\ensuremath{\univGrStates_{\dba, \mainOrd}}}
\newcommand{\gr}{\ensuremath{\mathcal{G}}}
\newcommand{\grFull}{\ensuremath{(\grStates, \grEdges)}}
\newcommand{\grPth}{\ensuremath{\pi}}
\newcommand{\finGrPth}{\ensuremath{\hist}}
\newcommand{\morph}{\ensuremath{\phi}}
\newcommand*{\Buchi}[1]{\ensuremath{\mathsf{B\ddot{u}chi}(#1)}}
\newcommand{\wc}{\ensuremath{W}} 
\newcommand*{\comp}[1]{\overline{#1}} 
\newcommand*{\card}[1]{\ensuremath{\lvert#1\rvert}}
\newcommand{\emptyWord}{\ensuremath{\varepsilon}}
\newcommand{\prefEq}{\ensuremath{\sim}}
\newcommand*{\eqClass}[1]{\ensuremath{[#1]}}
\newcommand{\prefOrd}{\ensuremath{\preceq}}
\newcommand{\invPrefOrd}{\ensuremath{\succeq}}
\newcommand{\strictPrefOrd}{\ensuremath{\prec}}
\newcommand{\strictInvPrefOrd}{\ensuremath{\succ}}
\newcommand*{\quotient}[2]{{\raisebox{.2em}{$#1$}\!\left/\raisebox{-.2em}{$#2$}\right.}}
\newcommand*{\safe}[1]{\ensuremath{\accepSet{\text -}\mathsf{Free}_{\dba}(#1)}}
\newcommand*{\safep}[1]{\ensuremath{\accepSet{\text -}\mathsf{Free}_{\dba'}(#1)}}
\newcommand*{\safeCycles}[1]{\ensuremath{\accepSet{\text -}\mathsf{FreeCycles}_{\dba}(#1)}}
\newcommand*{\safeCyclesp}[1]{\ensuremath{\accepSet{\text -}\mathsf{FreeCycles}_{\dba'}(#1)}}
\newcommand*{\safeCyclesSat}[1]{\ensuremath{\accepSet{\text -}\mathsf{FreeCycles}_{\dbaSat}(#1)}}
\newcommand{\bigO}{\ensuremath{\mathcal{O}}}
\tikzstyle{rond}=[draw,circle,minimum height=7mm]
\tikzstyle{oval}=[draw,ellipse,minimum height=7mm]
\tikzstyle{diamant}=[draw,diamond,minimum height=9mm,minimum width=9mm,aspect=1]
\tikzstyle{carre}=[draw,minimum width=6mm,minimum height=6mm]
\newcommand{\ReachAorAA}{
	\begin{tikzpicture}[every node/.style={font=\small,inner sep=1pt}]
		\draw (0,0) node[diamant] (qinit) {$\atmtnInit$};
		\draw ($(qinit.north)+(0,0.4)$) edge[-latex'] (qinit);
		\draw ($(qinit)+(1.5,0)$) node[diamant] (qa) {$\atmtnState_a$};
		\draw ($(qinit)+(3,0)$) node[diamant] (top) {$\atmtnState_{aa}$};
		\draw (qinit) edge[-latex',out=30,in=180-30,accepting] node[above=4pt] {$a$} (qa);
		\draw (qa) edge[-latex',out=180+30,in=-30,accepting] node[below=4pt] {$b$} (qinit);
		\draw (qa) edge[-latex',accepting] node[above=4pt] {$a$} (top);
		\draw (qinit) edge[-latex',out=150,in=210,distance=0.8cm] node[left=4pt] {$b$} (qinit);
		\draw (top) edge[-latex',out=-30,in=30,distance=0.8cm,accepting] node[right=4pt] {$a, b$} (top);
\end{tikzpicture}}
\tikzset{decoration={snake,amplitude=.5mm,segment length=3mm,post length=0.8mm,pre length=0mm}}
\tikzstyle{accepting}=[edge node={node[scale=1.5] {$\bullet$}}]
\theoremstyle{definition}\newtheorem{condition}[thm]{Condition}
\begin{document}
	\title[Half-Positional Objectives Recognized by DBA]{Half-Positional Objectives\texorpdfstring{\\}{ }Recognized by Deterministic B\"uchi Automata\rsuper*}
	\titlecomment{{\lsuper*}Extended version of CONCUR 2022 article~\cite{BCRV22Conf} with the same name.}
	\thanks{This work has been partially supported by the ANR Project MAVeriQ (ANR-20-CE25-0012).
		Mickael Randour is an F.R.S.-FNRS Research Associate and a member of the TRAIL Institute.
		Pierre Vandenhove is an F.R.S.-FNRS Research Fellow.
	}

	\author{Patricia Bouyer\lmcsorcid{0000-0002-2823-0911}}[a]
	\author{Antonio Casares\lmcsorcid{0000-0002-6539-2020}}[b]
	\author{Mickael Randour\lmcsorcid{0000-0001-8777-2385}}[c]
	\author{Pierre Vandenhove\lmcsorcid{0000-0001-5834-1068}}[a,c]

	\address{Université Paris-Saclay, CNRS, ENS Paris-Saclay, Laboratoire Méthodes Formelles, 91190, Gif-sur-Yvette, France}

	\address{LaBRI, Université de Bordeaux, Bordeaux, France}

	\address{F.R.S.-FNRS \& UMONS -- Université de Mons, Mons, Belgium}

	\begin{abstract}
		In two-player games on graphs, the simplest possible strategies are those that can be implemented without any memory.
		These are called \emph{positional strategies}.
		In this paper, we characterize objectives recognizable by deterministic B\"uchi automata (a subclass of \emph{$\omega$-regular objectives}) that are \emph{half-positional}, that is, for which the protagonist can always play optimally using positional strategies (both over finite and infinite graphs).
		Our characterization consists of three natural conditions linked to the language-theoretic notion of \emph{right congruence}.
		Furthermore, this characterization yields a polynomial-time algorithm to decide half-positionality of an objective recognized by a given deterministic B\"uchi automaton.
	\end{abstract}

	\maketitle

	\section{Introduction} \label{sec:intro}

	\paragraph{Graph games and reactive synthesis.}
	We study \emph{zero-sum turn-based games on graphs} in which two players (a protagonist and its opponent) confront each other.
	They interact by moving a pebble in turns through the edges of a graph for an infinite amount of time.
	Each vertex belongs to a player, and the player controlling the current vertex decides on the next state of the game.
	Edges of the graph are labeled with \emph{colors}, and the interaction of the two players therefore produces an infinite sequence of them.
	The objective of the game is specified by a subset of infinite sequences of colors, and the protagonist wins if the produced sequence belongs to this set.
	We are interested in finding a \emph{winning strategy} for the protagonist, that is, a function indicating how the protagonist should move in any situation, guaranteeing the achievement of the objective.

	This game-theoretic model is particularly fitted to study the \emph{reactive synthesis problem}~\cite{BCJ18}: a system (the protagonist) wants to satisfy a specification (the objective) while interacting continuously with its environment (the opponent).
	The goal is to build a controller for the system satisfying the specification, whenever possible.
	This comes down to finding a winning strategy for the protagonist in the derived game.

	\paragraph{Half-positionality.}
	In order to obtain a controller for the system that is simple to implement, we are interested in finding the simplest possible winning strategy.
	Here, we focus on the amount of information that winning strategies have to remember.
	The simplest strategies are then arguably \emph{positional} (also called \emph{memoryless}) strategies, which do not remember anything about the past and base their decisions solely on the current state of the game.
	We intend to understand for which objectives positional strategies suffice for the protagonist to play optimally (i.e., to win whenever it is possible) --- we call these objectives \emph{half-positional}.
	We distinguish half-positionality from \emph{bipositionality} (sometimes called \emph{memoryless-determinacy} or just \emph{positionality}), which refers to objectives for which positional strategies suffice to play optimally for \emph{both} players.

	Many natural objectives have been shown to be bipositional over games on finite and sometimes infinite graphs: e.g., discounted sum~\cite{Sha53}, mean-payoff~\cite{EM79}, parity~\cite{EJ91}, total payoff~\cite{GZ04}, energy~\cite{BFLMS08}, or average-energy games~\cite{BMRLL18}.
	Bipositionality can be established using general criteria and characterizations, over games on both finite~\cite{GZ04,GZ05,AR17} and infinite~\cite{CN06} graphs.
	Yet, there exist many objectives and combinations thereof for which one player, but not both, has positional optimal strategies (Rabin conditions~\cite{KK91RabinMeasures,Kla94}, mean-payoff parity~\cite{CHJ05MPPar}, energy parity~\cite{CD12EP}, some window objectives~\cite{CDRR15,BHR16}, energy mean-payoff~\cite{BHRR19}\ldots), and to which these results do not apply.

	Various attempts have been made to understand common underlying properties of half-positional objectives and provide sufficient conditions~\cite{Kop06,Kop07,KopThesis,BFMM11}, but little more was known until the recent work of Ohlmann~\cite{Ohl23} (discussed below).
	These sufficient conditions are not general enough to prove half-positionality of some very simple objectives, even in the well-studied class of \emph{$\omega$-regular objectives}~\cite[Lemma~13]{BFMM11}.
	Furthermore, multiple questions concerning half-positionality remain open.
	For instance, in~\cite{KopThesis}, Kopczy\'nski conjectured that \emph{prefix-independent} half-positional objectives are closed under finite union (this conjecture was recently refuted for games on finite graphs~\cite{Koz22Refutation}, but is still unsolved for games on infinite graphs).
	Also, Kopczy\'nski showed that given a deterministic parity automaton recognizing a prefix-independent objective $\wc$, we can decide if $\wc$ is half-positional over finite arenas~\cite{Kop07}.
	However, the time complexity of his algorithm is $\bigO(n^{\bigO(n^2)})$, where $n$ is the number of states of the automaton.
	It is unknown whether this can be done in polynomial time, and no algorithm is known for deciding half-positionality over infinite arenas or in the non-prefix-independent case.

	\paragraph{$\omega$-regular objectives and deterministic B\"uchi automata.}
	A central class of objectives, whose half-positionality is not yet completely understood, is the class of \emph{$\omega$-regular objectives}.
	There are multiple equivalent definitions for them: they are the objectives defined, e.g., by $\omega$-regular expressions, by non-deterministic B\"uchi automata~\cite{McN66}, and by deterministic parity automata~\cite{Mos84}.
	These objectives coincide with the class of objectives defined by monadic second-order formulas~\cite{Buchi1962decision}, and they encompass linear-time temporal logic (LTL) specifications~\cite{Pnu77}.
	Part of their interest is due to the landmark result that finite-state machines are sufficient to implement optimal strategies in $\omega$-regular games~\cite{BL69,GH82}, implying the decidability of the monadic second-order theory of natural numbers with the successor relation~\cite{Buchi1962decision} and the decidability of the synthesis problem under LTL specifications~\cite{PR89Synthesis}.

	In this paper, we focus on the subclass of $\omega$-regular objectives recognized by \emph{deterministic B\"uchi automata} (DBA), that we call \emph{DBA-recognizable}.
	DBA-recognizable objectives correspond to the $\omega$-regular objectives that can be written as a countable intersection of open objectives (for the product topology, that is, that are $G_\delta$-sets of the Borel hierarchy); or equivalently, that are the limit of a regular language of finite words~\cite{Landweber69,PP04}.
	Deciding the winner of a game with a DBA-recognizable objective is doable in polynomial time in the size of the arena and the DBA (by solving a B\"uchi game on the product of the arena and the DBA~\cite{BCJ18}).

	We now discuss two technical tools at the core of our approach: \emph{universal graphs} and \emph{right congruences}.

	\paragraph{Universal graphs.}
	One recent breakthrough in the study of half-positionality is the introduction of \emph{well-monotonic universal graphs}, combinatorial structures that can be used to provide a witness of winning strategies in games with a half-positional objective.
	Recently, Ohlmann~\cite{Ohl23} has shown that the existence of a \emph{well-monotonic universal graph} for an objective $\wc$ exactly characterizes half-positionality (under minor technical assumptions on $\wc$).
	Moreover, under these assumptions, a wide class of algorithms, called \emph{value iteration algorithms}, can be applied to solve any game with a half-positional objective~\cite{CFGO22,Ohl23}.

	Although it brings insight into the structure of half-positional objectives, showing half-positionality through the use of universal graphs is not always straightforward, and has not yet been applied in a systematic way to $\omega$-regular objectives.

	\paragraph{Right congruence.}
	Given an objective $\wc$, the \emph{right congruence $\prefEq_\wc$ of $\wc$} is an equivalence relation on finite words: two finite words $\word_1$ and $\word_2$ are equivalent for $\prefEq_\wc$ if for all infinite continuations $\word$, $\word_1\word\in\wc$ if and only if $\word_2\word\in\wc$.
	There is a natural automaton classifying the equivalence classes of the right congruence, which we refer to as the \emph{prefix-classifier}~\cite{Sta83,MS97}.

	In the case of languages of \emph{finite} words, a straightforward adaptation of the right congruence recovers the known Myhill-Nerode congruence.
	This equivalence relation characterizes the regular languages (a language is regular if and only if its congruence has finitely many equivalence classes), and the prefix-classifier is exactly the smallest deterministic finite automaton recognizing a language --- this is the celebrated Myhill-Nerode theorem~\cite{Ner58}.

	Objectives are languages of \emph{infinite} words, for which the situation is not so clear-cut.
	In particular, some $\omega$-regular objectives cannot be recognized by their prefix-classifier along with a natural acceptance condition (B\"uchi, coB\"uchi, parity, Muller\ldots)~\cite{MS97,AF21}.

	\paragraph{Contributions.}
	Our main contribution is a \emph{characterization} of half-positionality for DBA-recognizable objectives through a conjunction of three easy-to-check conditions (Theorem~\ref{thm:mainChar}).
	\begin{enumerate}[(1)]
		\item The equivalence classes of the right congruence are \emph{totally} ordered w.r.t.\ inclusion of their winning continuations.\label{cond:1}
		\item Whenever the set of winning continuations of a finite word $w_1$ is a proper subset of the set of winning continuations of a concatenation $w_1w_2$, the word $w_1(w_2)^\omega$ produced by repeating infinitely often $w_2$ is winning.\label{cond:2}
		\item The objective has to be recognizable by a DBA using the structure of its prefix-classifier.\label{cond:3}%
	\end{enumerate}%

	A few examples of simple DBA-recognizable objectives that were not encompassed by previous half-positionality criteria~\cite{Kop06,BFMM11} are, e.g., reaching a color twice~\cite[Lemma~13]{BFMM11} and weak parity~\cite{Tho08}.
	We also refer to Example~\ref{ex:AAorBuchiA}, which is half-positional but not bipositional, and whose half-positionality is straightforward using our characterization.

	Various corollaries with practical and theoretical interest follow from our characterization.
	\begin{itemize}
		\item We obtain a painless path to show (by checking each of the three conditions) that given a deterministic B\"uchi automaton, the half-positionality (over both finite and infinite arenas) of the objective it recognizes is decidable in time $\bigO(k\cdot n^4)$, where $k$ is the number of colors and $n$ is the number of states of the DBA (Section~\ref{sec:complexity}).
		\item Prefix-independent DBA-recognizable half-positional objectives are exactly the very simple \emph{B\"uchi conditions}, which consist of all the infinite words containing infinitely many occurrences of colors from some fixed subset (Proposition~\ref{prop:charPI}).
		In particular, Kopczy\'nski's conjecture trivializes for DBA-recognizable objectives (the finite union of B\"uchi conditions is a B\"uchi condition).
		\item We obtain a \emph{finite-to-infinite} and \emph{one-to-two-player} lift result (Proposition~\ref{prop:1to2}): in order to check that a DBA-recognizable objective is half-positional over arbitrary --- possibly two-player and infinite --- graphs, it suffices to check the existence of positional optimal strategies over \emph{finite} graphs where all the vertices are controlled by the protagonist.
	\end{itemize}

	\paragraph{Technical overview.}
	Condition~\ref{cond:1} turns out to be equivalent to earlier properties used to study bipositionality and half-positionality~\cite{GZ05,BFMM11} (details in Appendix~\ref{app:relation}).
	Condition~\ref{cond:2}, to the best of our knowledge, is a novel condition.
	A property similar to Condition~\ref{cond:3} has been studied multiple times in the language-theoretic literature, both for itself and for minimization and learning algorithms~\cite{Sta83,Saec90,MS97,AF21}.
	As an example, all objectives defined by deterministic \emph{weak} automata (a restriction on DBA) satisfy Condition~\ref{cond:3}~\cite{Sta83,AF21}.
	This property is sometimes called having an \emph{informative right congruence} (for some given acceptance condition).
	However, its links with positionality had never been explored.

	Conditions~\ref{cond:1} and~\ref{cond:2} are necessary for respectively bipositionality and half-positionality of general objectives.
	Condition~\ref{cond:3} is necessary for half-positionality of DBA-recognizable objectives, but not for all (even $\omega$-regular) objectives in general (see Example~\ref{ex:coBuchiAorB}).
	The proof of its necessity is more involved than for the first two conditions, and will build on automata-theoretic ideas introduced for good-for-games coB\"uchi automata~\cite{AK22Minimizing}.
	Together, the three conditions are sufficient for half-positionality of DBA-recognizable objectives: the proof of sufficiency uses the theory of universal graphs, and consists of building a family of well-monotonic universal graphs~\cite{Ohl23} for objectives satisfying the three properties.

	\paragraph{Other related works.}
	We have discussed relevant literature on half-positionality~\cite{Kop06,Kop07,BFMM11,Ohl23} and bipositionality~\cite{GZ04,GZ05,CN06,AR17}.
	A more general quest is to understand \emph{memory requirements} when positional strategies are not powerful enough: e.g.,~\cite{LPR18,BLORV22,BORV21,BRV23}.

	Memory requirements have been precisely characterized for some classes of $\omega$-regular objectives (not encompassing the class of DBA-recognizable objectives), such as Muller conditions~\cite{DJW97,Zie98,Cas22,CCL22SizeGFG} and safety specifications, i.e., objectives that are closed for the product topology~\cite{CFH14}.
	The latter also uses the order of the equivalence classes of the right congruence as part of its characterization.

	Recently, a link between the prefix-classifier, the memory requirements, and the recognizability of $\omega$-regular objectives was established~\cite{BRV23}.
	However, this result does not provide optimal bounds on the strategy complexity, and is thereby insufficient to study half-positionality.

	Our article is an extended version (with complete proofs and additional examples and remarks) of a preceding conference version~\cite{BCRV22Conf}.

	\paragraph{Structure of the paper.}
	Notations and definitions are introduced in Section~\ref{sec:preliminaries}.
	Our main contributions are presented in Section~\ref{sec:characterization}: we introduce and discuss the three conditions used in our results, then we state our main characterization (Theorem~\ref{thm:mainChar}) and some corollaries, and we end with an explanation on how to use the characterization to decide half-positionality of DBA-recognizable objectives in polynomial time.
	Section~\ref{sec:necessary} and Section~\ref{sec:sufficient} contain the proof of Theorem~\ref{thm:mainChar}: the former shows the necessity of the three conditions for half-positionality of DBA-recognizable objectives, and the latter shows their sufficiency through the use of universal graphs.

	\paragraph{Conference version.}
	Our article extends a work already published as a conference version~\cite{BCRV22Conf} with many details.
	It contains in particular the full proofs of the statements and additional explanations, remarks, and examples.

	\section{Preliminaries} \label{sec:preliminaries}
	In the whole article, letter $\colors$ refers to a (finite or infinite) non-empty set of \emph{colors}.
	Given a set $A$, we write respectively $A^*$, $A^+$, and $A^\omega$ for the set of finite, non-empty finite, and infinite sequences of elements of $A$.
	We denote by $\emptyWord$ the empty word.

	\subsection{Games and positionality}
	\paragraph{Graphs.}
	An \emph{(edge-colored) graph} $\gr=(\grStates, \grEdges)$ is given by a non-empty set of \emph{vertices} $\grStates$ (of any cardinality) and a set of \emph{edges} $\grEdges \subseteq \grStates\times \colors \times \grStates$.
	We write $\grSt\grTr{\clr} \grSt'$ if $(\grSt,\clr,\grSt')\in \grEdges$.
	We assume graphs to be \emph{non-blocking}: for all $\grSt\in\grStates$, there exists $(\grSt, \clr, \grSt')\in\edges$.
	We allow graphs with infinite branching.
	For $\grSt\in\grStates$, an \emph{infinite path of $\gr$ from $\grSt$} is an infinite sequence of edges $\grPth = (\grSt, \clr_1, \grSt_1)(\grSt_1, \clr_2, \grSt_2)(\grSt_2, \clr_3, \grSt_3)\ldots\in \edges^\omega$.
	A \emph{finite path of $\gr$ from $\grSt$} is a finite prefix in $\edges^*$ of an infinite path of $\gr$ from $\grSt$.
	For convenience, we assume that there is a distinct \emph{empty path starting from $\s$} for every $\s\in\states$.
	If $\hist = (\grSt_0, \clr_1, \grSt_1)\ldots(\grSt_{n-1}, \clr_{n}, \grSt_n)$ is a non-empty finite path of $\gr$, we define $\edgeOut(\hist) = \grSt_n$.
	For an empty path $\emptyPth_\s$ starting from $\s$, we define $\edgeOut(\emptyPth_\s) = \s$.
	An infinite (resp.\ finite) path $(\grSt_0, \clr_1, \grSt_1)(\grSt_1, \clr_2, \grSt_2)\ldots$ (resp.\ $(\grSt_0, \clr_1, \grSt_1)\ldots(\grSt_{n-1}, \clr_{n}, \grSt_n)$) is sometimes denoted $\grSt_0 \grTr{\clr_1} \grSt_1 \grTr{\clr_2} \ldots$ (resp.\ $\grSt_0 \grTr{\clr_1} \ldots \grTr{\clr_n} \grSt_n$).
	A graph $\gr=(\grStates, \grEdges)$ is \emph{finite} if both $\grStates$ and $\grEdges$ are finite.
	A graph is \emph{strongly connected} if for every pair of vertices $(\grSt,\grSt')\in \grStates\times \grStates$ there is a path from $\grSt$ to $\grSt'$.
	A \emph{strongly connected component} of $\gr$ is a maximal strongly connected subgraph.

	\paragraph{Arenas and strategies.}
	We consider two players $\Pone$ and $\Ptwo$.
	An \emph{arena} is a tuple $\arena = \arenaFull$ such that $(\states, \edges)$ is a graph and $\states$ is the disjoint union of $\states_1$ and $\states_2$.
	Intuitively, vertices in $\states_1$ are controlled by $\Pone$ and vertices in $\states_2$ are controlled by $\Ptwo$.
	An arena $\arena = \arenaFull$ is a \emph{one-player arena of $\Pone$} (resp.\ \emph{of $\Ptwo$}) if $\states_2 = \emptyset$ (resp.\ $\states_1 = \emptyset$).
	Finite paths of $(\states, \edges)$ are called \emph{histories of $\arena$}.
	For $i\in\{1, 2\}$, we denote by $\Hists_i(\arena)$ the set of histories $\hist$ of $\arena$ such that~$\edgeOut(\hist)\in\states_i$.

	Let $i\in\{1, 2\}$.
	A \emph{strategy of $\player{i}$ on $\arena$} is a function $\strat_i\colon \Hists_i(\arena) \to \edges$ such that for all $\hist\in\Hists_i(\arena)$, the first component of $\strat_i(\hist)$ coincides with $\edgeOut(\hist)$.
	Given a strategy $\strat_i$ of $\player{i}$, we say that an infinite path $\play = \edge_1\edge_2\ldots$ is \emph{consistent with $\strat_i$} if for all finite prefixes $\hist = \edge_1\ldots\edge_j$ of $\play$ such that $\edgeOut(\hist) \in \states_i$, $\strat_i(\hist) = \edge_{j+1}$.
	A strategy $\strat_i$ is \emph{positional} (also called \emph{memoryless} in the literature) if its outputs only depend on the current vertex and not on the whole history, i.e., if there exists a function $f\colon \states_i \to \edges$ such that for $\hist\in\Hists_i(\arena)$, $\strat_i(\hist) = f(\edgeOut(\hist))$.

	\paragraph{Objectives.}
	An \emph{objective} is a set $\wc\subseteq \colors^\omega$ (subsets of $\colors^\omega$ are sometimes also called \emph{languages of infinite words}, \emph{$\omega$-languages}, or \emph{winning conditions} in the literature).
	When an objective $\wc$ is clear in the context, we say that an infinite word $\word\in\colors^\omega$ is \emph{winning} if $\word\in\wc$, and \emph{losing} if $\word\notin\wc$.
	We write $\comp{\wc}$ for the complement $\colors^\omega\setminus\wc$ of an objective $\wc$.
	An objective $\wc$ is \emph{prefix-independent} if for all $\word\in\colors^*$ and $\word'\in\colors^\omega$, $\word'\in\wc$ if and only if $\word\word'\in\wc$.
	An objective that we will often consider is the \emph{B\"uchi condition}: given a subset $\goodColors\subseteq\colors$, we denote by $\Buchi{\goodColors}$ the set of infinite words containing infinitely many occurrences of colors in $\goodColors$.
	Such an objective is prefix-independent.
	A \emph{game} is a tuple $\gameFull$ of an arena $\arena$ and an objective~$\wc$.

	\paragraph{Optimality and half-positionality.}
	Let $\arena = \arenaFull$ be an arena, $\gameFull$ be a game, and $\s\in\states$.
	We say that \emph{a strategy $\strat_1$ of $\Pone$ is winning from $\s$} if for all infinite paths $\s_0 \grTr{\clr_1} \s_1 \grTr{\clr_2} \ldots$ from $\s$ consistent with $\strat_1$, $\clr_1\clr_2\ldots \in \wc$.

	A strategy of $\Pone$ is \emph{optimal for $\Pone$ in $(\arena, \wc)$} if it is winning from all the vertices from which $\Pone$ has a winning strategy.
	We often write \emph{optimal for $\Pone$ in $\arena$} if the objective $\wc$ is clear from the context.
	We stress that this notion of optimality requires a \emph{single} strategy to be winning from \emph{all} the winning vertices (a property sometimes called \emph{uniformity}).
	However, we do not require an optimal strategy to win when starting from some losing vertex, even if winning becomes possible eventually due to opponent's mistakes.

	An objective $\wc$ is \emph{half-positional} if for all arenas $\arena$, there exists a positional strategy of $\Pone$ on $\arena$ that is optimal for $\Pone$ in $\arena$.
	We sometimes only consider half-positionality on a restricted set of arenas (typically, finite and/or one-player arenas).
	For a class of arenas $\mathcal{X}$, an objective $\wc$ is \emph{half-positional over $\mathcal{X}$} if for all arenas $\arena\in\mathcal{X}$, there exists a positional strategy of $\Pone$ on $\arena$ that is optimal for $\Pone$ in $\arena$.

	\begin{rem}[$\epsilon$-edges]\label{rem:eps-transitions-definition}
		Sometimes, arenas are considered to be colored over the alphabet $\colors\cup \{\epsilon\}$, adding the restriction that no cycle is entirely labeled with $\epsilon$~\cite{DJW97,Zie98,Kop07,Horn2009RandomFruits,Cas22}.
		In that case, an infinite word in $(\colors\cup \{\epsilon\})^\omega$ labeling a path is winning if the word obtained by removing the occurrences of $\epsilon$ belongs to $\wc$.
		In this paper, we consider arenas without $\epsilon$-edges, but all our results apply to this other setting (cf.\ Remark~\ref{rem:epsilon-positional}).
		In general, allowing for arenas with $\epsilon$-edges has an effect on strategy complexity~\cite{Cas22}.
		\qedEx
	\end{rem}

	\newpage
	\subsection{B\"uchi automata}
	\paragraph{Automaton structures and B\"uchi automata.}
	A \emph{non-deterministic automaton structure (on $\colors$)} is a tuple $\struct = \nbaStruct$ such that $\atmtnStates$ is a finite set of \emph{states}, $\atmtnInits\subseteq\atmtnStates$ is a non-empty set of \emph{initial states} and $\atmtnTrans\subseteq \atmtnStates \times \colors \times \atmtnStates$ is a set of \emph{transitions}.
	We assume that all states of automaton structures are reachable from an initial state in $\atmtnInits$ by taking transitions in $\atmtnTrans$.

	A \emph{(transition-based) non-deterministic B\"uchi automaton} (NBA) is an automaton structure $\struct$ together with a set of transitions $\accepSetNBA \subseteq \atmtnTrans$.
	The transitions in $\accepSetNBA$ are called \emph{B\"uchi transitions}.
	Given an NBA $\nba = \nbaFull$, a \emph{(finite or infinite) run of $\nba$ on a (finite or infinite) word $\word = \clr_1\clr_2\ldots\in\colors^*\cup\colors^\omega$} is a sequence $(\atmtnState_0,\clr_1,\atmtnState_1)(\atmtnState_1,\clr_2,\atmtnState_2)\ldots\in\atmtnTrans^*\cup\atmtnTrans^\omega$ such that $\atmtnState_0 \in \atmtnInits$.
	An infinite run $(\atmtnState_0,\clr_1,\atmtnState_1)(\atmtnState_1,\clr_2,\atmtnState_2)\ldots \in \atmtnTrans^\omega$ of $\nba$ is \emph{accepting} if for infinitely many $i\ge 0$, $(\atmtnState_i, \clr_{i+1}, \atmtnState_{i+1})\in\accepSetNBA$.
	A word $\word\in\colors^\omega$ is \emph{accepted} by $\nba$ if there exists an accepting run of $\nba$ on $\word$ --- if not, it is \emph{rejected}.
	We denote the set of infinite words accepted by $\nba$ by $\atmtnLang{\nba}$, and we then say that $\atmtnLang{\nba}$ is the objective \emph{recognized by $\nba$}.

	Here, we take the definition of an \emph{$\omega$-regular objective} as an objective that can be recognized by an NBA (the classical definition uses \emph{$\omega$-regular expressions}, but our definition is well-known to be equivalent~\cite{McN66}).
	Given an automaton structure $\struct = \nbaStruct$, we say that an NBA $\nba$ is \emph{built on top of $\struct$} if there exists $\accepSetNBA \subseteq \atmtnTrans$ such that $\nba = (\atmtnStates, \colors, \atmtnInits, \atmtnTrans, \accepSetNBA)$.

	\begin{rem}
		Notice that for generality, we allow the set of colors $\colors$ to be infinite (as opposed to the finite set of states $\atmtnStates$).
		This is uncommon for $\omega$-regular objectives and automata, but has no real impact: when an objective is specified by an NBA, there will be at most finitely many equivalence classes of colors for the equivalence relation ``inducing exactly the same transitions in the NBA''.
		In Section~\ref{sec:generalCase}, it will be helpful to consider infinite sets of colors.
		\qedEx
	\end{rem}

	\paragraph{Deterministic automata.}
	An automaton structure $\struct = \nbaStruct$ is \emph{deterministic} if $\card{\atmtnInits} = 1$ and, for each $q\in \atmtnStates$ and $\clr\in \colors$, there is exactly one $q'\in \atmtnStates$ such that $(q,\clr,q')\in \atmtnTrans$ (we remark that, without loss of generality, we define deterministic automaton structures so that for each state and each color there is one outgoing transition --- such automata are sometimes called \emph{complete} or \emph{total}).
	For deterministic automaton structures, we use the simplified notation $\struct = \dbaStruct$, which corresponds to structure $\nbaStruct$ where $\atmtnInit$ is the unique element of $\atmtnInits$ and $\atmtnUpd\colon\atmtnStates\times\colors \to \atmtnStates$ is the \emph{update function} that associates to $(\atmtnState,\clr)\in \atmtnStates\times\colors$ the unique $\atmtnState'\in \atmtnStates$ such that $(\atmtnState,\clr,\atmtnState')\in \atmtnTrans$.
	We denote by $\atmtnUpdWord$ the natural extension of $\atmtnUpd$ to finite words --- by induction, the function $\atmtnUpdWord\colon \atmtnStates \times \colors^*\to \atmtnStates$ is such that $\atmtnUpdWord(\atmtnState, \emptyWord) = \atmtnState$, and for $\word\in\colors^*$, $\clr\in\colors$, $\atmtnUpdWord(\atmtnState, \word\clr) = \atmtnUpd(\atmtnUpdWord(\atmtnState, \word), \clr)$.

	A \emph{deterministic B\"uchi automaton} (DBA) is an NBA whose underlying automaton structure is deterministic.
	For DBA, we use the simplified notation $\dba = \dbaFull$, which corresponds to NBA $\nbaFull$ where $\atmtnInit$ and $\atmtnUpd$ are defined as for deterministic structures and $\accepSet \subseteq \atmtnStates \times \colors$ is the projection of $\accepSetNBA$ to the first two components (which does not lose information thanks to determinism).

	For a DBA $\dba$, a state $\atmtnState\in \atmtnStates$, and a word $\word=\clr_1\clr_2\ldots\in \colors^*\cup \colors^\omega$, we denote by $\dbaRun{\atmtnState,\word} = (\atmtnState,\clr_1,q_1)(q_1,\clr_2,q_2)\ldots \in \atmtnTrans^*\cup \atmtnTrans^\omega$ the only run on $\word$ starting from $\atmtnState$.

	An objective $\wc$ is \emph{DBA-recognizable} if there exists a DBA $\dba$ such that $\wc = \atmtnLang{\dba}$.
	For $\goodColors\subseteq\colors$, notice that $\Buchi{\goodColors}$ is DBA-recognizable: it is recognized by the DBA $(\{\atmtnInit\}, \colors, \atmtnInit, \atmtnUpd, \accepSet)$ with a \emph{single} state such that $(\atmtnInit, \clr)\in\accepSet$ if and only if $\clr\in\goodColors$.

	\begin{rem} \label{rem:transitionBased}
		The fact that an automaton with a single state suffices to recognize $\Buchi{\goodColors}$ relies on the assumption that our DBA are \emph{transition-based} and not \emph{state-based} ($\accepSet$ is a set of transitions, not of states).
		Indeed, apart from the trivial cases $\goodColors = \emptyset$ and $\goodColors = \colors$, a state-based DBA recognizing $\Buchi{\goodColors}$ requires two states.

		For standard acceptance conditions, including the B\"uchi one, there is no difference of expressivity between state-based and transition-based automata: every state-based automaton can be converted to an equivalent transition-based automaton (with as many states in general) and every transition-based automaton can be converted to an equivalent state-based automaton (with more states in general).
		Considering transition-based acceptance is especially useful in our paper as it enables to state Condition~\ref{cond:3} of the upcoming characterization (Theorem~\ref{thm:mainChar}); an equivalent formulation for state-based automata would arguably be more technical and less intuitive.
		For instance, Condition~\ref{cond:3} does not apply as stated to the simple $\Buchi{\goodColors}$ if we only consider state-based DBA.
		Moreover, the use of transition-based automata allows the use of techniques such as the saturation of automata with B\"uchi transitions, which is used repeatedly via the application of Lemma~\ref{lem:saturatedAutomaton}.
		\qedEx
	\end{rem}

	In this paper, all automata will be deterministic, and the term ``automaton'' will stand for ``deterministic automaton'' by default.

	\begin{rem}
		Deterministic B\"uchi automata recognize a proper subset of the $\omega$-regular objectives.
		That is, not every non-deterministic B\"uchi automaton can be converted into a deterministic one recognizing the same objective~\cite{Wagner1979omega}.
		\qedEx
	\end{rem}

	\paragraph{Right congruence and prefix preorder.}
	Let $\wc\subseteq\colors^\omega$ be an objective.
	For a finite word $\word\in\colors^*$, we write $\inverse{\word}\wc = \{\word'\in\colors^\omega\mid \word\word'\in\wc\}$ for the set of \emph{winning continuations of $\word$}.
	We define the \emph{right congruence~${\prefEq_\wc} \subseteq \colors^*\times\colors^*$ of $\wc$} as~$\word_1\prefEq_\wc\word_2$ if $\inverse{\word_1}\wc = \inverse{\word_2}\wc$ (meaning that $\word_1$ and $\word_2$ have the same winning continuations).
	Relation~$\prefEq_\wc$ is an equivalence relation.
	When~$\wc$ is clear from the context, we write~$\prefEq$ for~$\prefEq_\wc$ and, for $\word\in\colors^*$, we denote by $\eqClass{\word} \subseteq \colors^*$ its equivalence class of $\prefEq$.

	When $\prefEq$ has finitely many equivalence classes, we can associate a natural deterministic automaton structure $\minStateAtmtn = \minStateAtmtnFull$ to $\prefEq$ such that $\minStateStates$ is the set of equivalence classes of $\prefEq$, $\minStateInit = \eqClass{\emptyWord}$, and $\atmtnUpd_\prefEq(\eqClass{\word}, \clr) = \eqClass{\word\clr}$~\cite{Sta83,MS97}.
	The transition function $\atmtnUpd_\prefEq$ is well-defined since it follows from the definition of $\prefEq$ that if $\word_1 \prefEq \word_2$, then for all $\clr\in\colors$, $\word_1\clr \prefEq \word_2\clr$.
	Hence, the choice of representatives for the equivalence classes does not have an impact on this definition.
	We call the automaton structure $\minStateAtmtn$ the \emph{prefix-classifier of $\wc$}.

	\begin{rem} \label{rem:PIClass}
		Equivalence relation $\prefEq_\wc$ has only one equivalence class if and only if $\wc$ is prefix-independent.
		In particular, an objective has a prefix-classifier with a single state if and only if it is prefix-independent.
		\qedEx
	\end{rem}

	An important property of $\prefEq$ is the following.
	\begin{lem} \label{lem:alwaysEquiv}
		Let $\word_1, \word_2\in\colors^*$.
		If $\word_1 \prefEq \word_2$, then for all $\word\in\colors^*$, $\word_1\word \prefEq \word_2\word$.
	\end{lem}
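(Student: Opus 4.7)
The plan is to prove this directly by unfolding the definition of the right congruence. By definition, $\word_1 \prefEq \word_2$ means $\inverse{\word_1}\wc = \inverse{\word_2}\wc$, i.e., for every $\word'\in\colors^\omega$, $\word_1\word'\in\wc$ if and only if $\word_2\word'\in\wc$. To show $\word_1\word \prefEq \word_2\word$, I need to check that $\inverse{(\word_1\word)}\wc = \inverse{(\word_2\word)}\wc$, i.e., that for every $\word'\in\colors^\omega$, $\word_1\word\word' \in \wc$ if and only if $\word_2\word\word' \in \wc$.

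The key observation is that $\word\word'$ is itself an element of $\colors^\omega$ (since $\word \in \colors^*$ and $\word' \in \colors^\omega$). Thus applying the hypothesis to the infinite continuation $\word'' := \word\word'$ yields immediately $\word_1 \word'' \in \wc \iff \word_2 \word'' \in \wc$, which is the required equivalence. This is essentially a one-line argument once the definitions are spelled out.

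Alternatively, one could prove the statement by induction on $|\word|$, with base case $\word = \emptyWord$ being trivial, and the inductive step relying on the single-letter fact already noted in the paragraph preceding the statement (``if $\word_1 \prefEq \word_2$, then for all $\clr\in\colors$, $\word_1\clr \prefEq \word_2\clr$''). Both routes are elementary; there is no real obstacle here, since the lemma is simply the observation that the congruence of $\wc$, defined via a quantification over infinite continuations, is automatically compatible with right-concatenation of finite prefixes because inserting a fixed finite word in front of an arbitrary infinite continuation yields another infinite continuation to which the hypothesis applies directly.
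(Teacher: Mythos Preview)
Your proposal is correct and matches the paper's own proof essentially verbatim: the paper simply observes that if $\word_1$ and $\word_2$ have the same winning continuations, then in particular they have the same winning continuations starting with $\word$, which is exactly your direct argument via $\word'' = \word\word'$.
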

	\begin{proof}
		If $\word_1$ and $\word_2$ have the same winning continuations, they have in particular the same winning continuations starting with $\word$.
	\end{proof}

	We define the \emph{prefix preorder $\prefOrd_\wc$ of $\wc$}: for $\word_1, \word_2\in\colors^*$, we write $\word_1\prefOrd_\wc\word_2$ if $\inverse{\word_1}\wc \subseteq \inverse{\word_2}\wc$ (meaning that any continuation that is winning after $\word_1$ is also winning after $\word_2$).
	Intuitively, $\word_1\prefOrd_\wc\word_2$ means that a game starting with $\word_2$ is always preferable to a game starting with $\word_1$ for $\Pone$, as there are more ways to win after $\word_2$.
	When $\wc$ is clear from the context, we write $\prefOrd$ for $\prefOrd_\wc$.
	Relation ${\prefOrd} \subseteq {\colors^*\times \colors^*}$ is a preorder.
	Notice that $\prefEq$ is equal to ${\prefOrd} \cap {\invPrefOrd}$.
	We also define the strict preorder ${\strictPrefOrd} = {\prefOrd} \setminus {\prefEq}$.

	Given a DBA $\dba = \dbaFull$ recognizing the objective $\wc$, observe that for $\word, \word'\in\colors^*$ such that $\atmtnUpdWord(\atmtnInit, \word) = \atmtnUpdWord(\atmtnInit, \word')$, we have $\word \prefEq \word'$.
	In this case, equivalence relation $\prefEq$ has at most $\card{\atmtnStates}$ equivalence classes.
	For $\atmtnState\in\atmtnStates$, we write $\dba^\atmtnState$ for the automaton $(\atmtnStates, \alphabet, \atmtnState, \atmtnUpd, \accepSet)$, with $\atmtnState$ as the initial state and everything else as in $\dba$.
	Observe that objective $\atmtnLang{\dba^\atmtnState}$ equals $\inverse{\word}\wc$ for any word $\word\in\colors^*$ such that $\atmtnUpdWord(\atmtnInit, \word) = \atmtnState$.
	This means that as long as $\atmtnState$ is reachable from $\atmtnInit$, objective $\atmtnLang{\dba^\atmtnState}$ is a set of winning continuations of $\wc$.
	We extend the equivalence relation~$\prefEq$ and preorder~$\prefOrd$ to elements of $\atmtnStates$ (we sometimes write $\prefEq_\dba$ and $\prefOrd_\dba$ to avoid any ambiguity): $\atmtnState \prefEq \atmtnState'$ if $\atmtnLang{\dba^\atmtnState} = \atmtnLang{\dba^{\atmtnState'}}$, and $\atmtnState \prefOrd \atmtnState'$ if $\atmtnLang{\dba^\atmtnState} \subseteq \atmtnLang{\dba^{\atmtnState'}}$.
	Transitions of $\dba$ following a given word are non-decreasing w.r.t.\ states.

	\begin{lem} \label{lem:monotonTrans}
		Let $\dba = \dbaFull$ be a DBA.
		For every $\word\in\colors^*$, function $\atmtnUpdWord(\cdot, \word)$ is non-decreasing for $\prefOrd_\dba$ (i.e., for $\atmtnState_1, \atmtnState_2\in\atmtnStates$, if $\atmtnState_1 \prefOrd_\dba \atmtnState_2$, then $\atmtnUpdWord(\atmtnState_1, \word) \prefOrd_\dba \atmtnUpdWord(\atmtnState_2, \word)$).
	\end{lem}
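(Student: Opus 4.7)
The proof is essentially a direct unfolding of the definitions, relying on the key observation that acceptance of a run of a B\"uchi automaton is insensitive to any finite prefix of transitions: the run $\dbaRun{\atmtnState, \word\word'}$ of $\dba$ on $\word\word'$ starting from $\atmtnState$ is accepting if and only if the run $\dbaRun{\atmtnUpdWord(\atmtnState, \word), \word'}$ starting from $\atmtnUpdWord(\atmtnState, \word)$ on $\word'$ is accepting, since these two runs differ only on the finite prefix of length $|\word|$, and a B\"uchi condition depends only on the infinitely-often behavior.

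My plan is as follows. Fix $\word\in\colors^*$ and $\atmtnState_1, \atmtnState_2\in\atmtnStates$ with $\atmtnState_1 \prefOrd_\dba \atmtnState_2$. Let $\atmtnState_i' = \atmtnUpdWord(\atmtnState_i, \word)$ for $i\in\{1,2\}$. I need to show $\inverse{\atmtnState_1'}\wc \subseteq \inverse{\atmtnState_2'}\wc$. Pick an arbitrary $\word' \in \inverse{\atmtnState_1'}\wc$; by definition, this means $\dbaRun{\atmtnState_1', \word'}$ is accepting. By the observation above (applied to $\atmtnState_1$ and the word $\word\word'$), the run $\dbaRun{\atmtnState_1, \word\word'}$ is also accepting, so $\word\word' \in \inverse{\atmtnState_1}\wc$.

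Now I invoke the hypothesis $\inverse{\atmtnState_1}\wc \subseteq \inverse{\atmtnState_2}\wc$ (which is the definition of $\atmtnState_1 \prefOrd_\dba \atmtnState_2$, after extending $\prefOrd$ from words to states as noted just before the lemma) to conclude $\word\word' \in \inverse{\atmtnState_2}\wc$, i.e., $\dbaRun{\atmtnState_2, \word\word'}$ is accepting. Applying the key observation once more, this time to $\atmtnState_2$, yields that $\dbaRun{\atmtnState_2', \word'}$ is accepting, so $\word' \in \inverse{\atmtnState_2'}\wc$. Since $\word'$ was arbitrary, $\inverse{\atmtnState_1'}\wc \subseteq \inverse{\atmtnState_2'}\wc$, which is exactly $\atmtnUpdWord(\atmtnState_1, \word) \prefOrd_\dba \atmtnUpdWord(\atmtnState_2, \word)$.

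There is no real obstacle here; the proof is a two-step sandwich (strip $\word$, apply hypothesis, prepend $\word$). The only thing worth being careful about is the formal passage between the preorder on words and the preorder on states of $\dba$, which was set up in the paragraph immediately preceding the lemma via the identification $\inverse{\atmtnState}\wc = \inverse{\word}\wc$ for any $\word$ with $\atmtnUpdWord(\atmtnInit, \word) = \atmtnState$. Everything else is a routine consequence of the B\"uchi acceptance condition being tail-based.
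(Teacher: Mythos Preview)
Your proof is correct and follows essentially the same approach as the paper's own proof: both pick $\word'\in\inverse{(\atmtnState_1')}\wc$, deduce $\word\word'\in\inverse{\atmtnState_1}\wc$, apply the hypothesis to get $\word\word'\in\inverse{\atmtnState_2}\wc$, and conclude $\word'\in\inverse{(\atmtnState_2')}\wc$. The only difference is that you spell out explicitly why prepending or stripping a finite prefix preserves acceptance (via the tail-based nature of the B\"uchi condition), which the paper leaves implicit.
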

	\begin{proof}
		Let $\wc$ be the objective that $\dba$ recognizes.
		Let $\word\in\colors^*$ and $\atmtnState_1, \atmtnState_2\in\atmtnStates$ be such that $\atmtnState_1 \prefOrd_\dba \atmtnState_2$.
		Let $\atmtnState_1' = \atmtnUpdWord(\atmtnState_1, \word)$ and $\atmtnState_2' = \atmtnUpdWord(\atmtnState_2, \word)$.
		We show that $\atmtnState_1' \prefOrd_\dba \atmtnState_2'$, i.e., that $\atmtnLang{\dba^{\atmtnState_1'}} \subseteq \atmtnLang{\dba^{\atmtnState_2'}}$.
		Let $\word' \in \atmtnLang{\dba^{\atmtnState_1'}}$.
		This implies that $\word\word'\in\atmtnLang{\dba^{\atmtnState_1}}$.
		As $\atmtnState_1 \prefOrd_\dba \atmtnState_2$, we also have that $\word\word'\in\atmtnLang{\dba^{\atmtnState_2}}$.
		This implies that $\word' \in \atmtnLang{\dba^{\atmtnState_2'}}$.
	\end{proof}

	\paragraph{$\accepSet$-free words.}
	Let $\dba = \dbaFull$ be a DBA.
	We say that a finite run $\run$ of $\dba$ is \emph{$\accepSet$-free} if it does not contain any transition from $\accepSet$.
	For $\atmtnState \in \atmtnStates$, we define
	\begin{align*}
		\safe{\atmtnState} &= \{\word \in \colors^* \mid \text{$\dbaRun{\atmtnState, \word}$ is $\accepSet$-free}\},\\
		\safeCycles{\atmtnState} &= \{\word \in \colors^* \mid \word\in\safe{\atmtnState}\ \text{and}\ \atmtnUpdWord(\atmtnState, \word) = \atmtnState\}.
	\end{align*}
	We call the words in the first set the \emph{$\accepSet$-free words from $\atmtnState$}, and the words in the second set the \emph{$\accepSet$-free cycles from $\atmtnState$}.
	We state an important property of $\accepSet$-free words.
	\begin{lem} \label{lem:safeCongruence}
		Let $\atmtnState_1, \atmtnState_2\in\atmtnStates$ be such that $\safe{\atmtnState_1} = \safe{\atmtnState_2}$.
		Then for all $\word\in\safe{\atmtnState_1}$, $\safe{\atmtnUpdWord(\atmtnState_1, \word)} = \safe{\atmtnUpdWord(\atmtnState_2, \word)}$.
	\end{lem}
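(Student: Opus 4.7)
The plan is to prove set equality by showing that an arbitrary word $u \in \colors^*$ lies in $\safe{\atmtnUpdWord(\atmtnState_1, \word)}$ if and only if it lies in $\safe{\atmtnUpdWord(\atmtnState_2, \word)}$, using the concatenation-decomposition property of $\accepSet$-free runs.

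First, I would fix $\word \in \safe{\atmtnState_1} = \safe{\atmtnState_2}$ and set $p_i = \atmtnUpdWord(\atmtnState_i, \word)$ for $i \in \{1,2\}$. The key observation is the following factorization: for any $u \in \colors^*$ and any $\atmtnState \in \atmtnStates$ with $\word \in \safe{\atmtnState}$, the run $\dbaRun{\atmtnState, \word u}$ decomposes as the concatenation of $\dbaRun{\atmtnState, \word}$ and $\dbaRun{\atmtnUpdWord(\atmtnState, \word), u}$. Since the first part is $\accepSet$-free by hypothesis, the full run is $\accepSet$-free if and only if the second part is. This yields the equivalence $\word u \in \safe{\atmtnState}$ iff $u \in \safe{\atmtnUpdWord(\atmtnState, \word)}$, valid whenever $\word \in \safe{\atmtnState}$.

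Applying this equivalence to both $\atmtnState_1$ and $\atmtnState_2$ (both of which satisfy $\word \in \safe{\atmtnState_i}$ by assumption), we get, for any $u \in \colors^*$:
\[
u \in \safe{p_1} \iff \word u \in \safe{\atmtnState_1} \iff \word u \in \safe{\atmtnState_2} \iff u \in \safe{p_2},
\]
where the middle equivalence uses the hypothesis $\safe{\atmtnState_1} = \safe{\atmtnState_2}$. This gives $\safe{p_1} = \safe{p_2}$, as required.

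There is no real obstacle here: the argument is a direct unfolding of the definition of $\accepSet$-freeness combined with determinism of $\dba$ (which ensures that the run on $\word u$ from $\atmtnState$ is uniquely the concatenation of the runs on $\word$ and then on $u$). The only minor care needed is in stating the factorization precisely, since this lemma will then be used repeatedly to justify that the map $\atmtnState \mapsto \safe{\atmtnState}$ induces a well-defined quotient structure on $\dba$.
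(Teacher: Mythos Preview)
Your proof is correct and follows essentially the same approach as the paper: both use the decomposition of the run on $\word u$ into the $\accepSet$-free run on $\word$ followed by the run on $u$, then invoke the hypothesis $\safe{\atmtnState_1} = \safe{\atmtnState_2}$. The paper phrases it as a one-sided inclusion (with the other direction left to symmetry), whereas you write out the full iff-chain, but the argument is the same.
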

	\begin{proof}
		Let $\word\in\safe{\atmtnState_1}$.
		Let $q_1'=\atmtnUpdWord(\atmtnState_1, \word)$, $q_2'=\atmtnUpdWord(\atmtnState_2, \word)$, and $\word'\in \safe{q_1'}$.
		Since both runs $\dbaRun{q_1, \word}$ and $\dbaRun{q_1',\word'}$ are $\accepSet$-free, we have $\word\word'\in \safe{\atmtnState_1} = \safe{\atmtnState_2}$.
		Therefore, the finite run $\dbaRun{q_2, \word\word'}$ is $\accepSet$-free, so the run $\dbaRun{q_2', \word'}$ is $\accepSet$-free as well and $\word'\in\safe{\atmtnState'_2}$.
	\end{proof}

	\paragraph{Saturation of DBA with B\"uchi transitions.}
	In what follows, we will make extensive use of a ``normal form'' of B\"uchi automata satisfying that any $\accepSet$-free path can be extended to an $\accepSet$-free cycle.
	Such a normal form can be produced by saturating a given DBA~$\dba$ with B\"uchi transitions~\cite{KS15,AK22Minimizing}.
	To do so, we add to $\accepSet$ all transitions that do not appear in an $\accepSet$-free cycle of $\dba$.
	These transitions can be easily identified: we remove all B\"uchi transitions from~$\dba$, and decompose the resulting automaton in strongly connected components; transitions that we can add to $\accepSet$ are exactly those that are not included in any of those SCCs.


	We say that $\dba=\dbaFull$ is \emph{saturated} if for every $\accepSet' \varsupsetneq \accepSet$, the automaton obtained by replacing $\accepSet$ with $\accepSet'$ does not recognize $\atmtnLang{\dba}$.

	An \emph{$\accepSet$-free component} of $\dba$ is a strongly connected component of the graph obtained by removing the B\"uchi transitions from $\dba$.
	That is, it is a strongly connected component of the graph $(\atmtnStates, \atmtnTransSafe)$, with $\atmtnTransSafe = \{(\atmtnState, \clr, \atmtnUpd(\atmtnState, \clr)) \in \atmtnStates \times \colors \times \atmtnStates \mid (\atmtnState, \clr) \notin \accepSet\}$.

	\begin{lem}\label{lem:saturatedAutomaton}
		Let $\dba=\dbaFull$ be a DBA.
		There is a unique set $\accepSetSat\subseteq \atmtnStates \times \colors$ such that the automaton $\dbaSat = \dbaSatFull$ satisfies that:
		\begin{enumerate}
			\item $\atmtnLang{\dbaSat} = \atmtnLang{\dba}$.
			\item $\dbaSat$ is saturated.
		\end{enumerate}
		Moreover, $\accepSetSat$ is the set of transitions not appearing in any $\accepSet$-free component of $\dba$, and it can be computed in $\bigO(\card{\colors}\cdot\card{\atmtnStates})$ time when $\colors$ is finite.
	\end{lem}
	\begin{proof}
		We first prove the existence of such an $\accepSetSat$.
		Let $(\atmtnStates_1, \atmtnTrans_1), \ldots, (\atmtnStates_k,\atmtnTrans_k)$ be the $\accepSet$-free components of $\dba$.
		We consider the automaton $\dbaSat$ whose B\"uchi transitions are those that do not appear in any $\accepSet$-free component $(\atmtnStates_i,\atmtnTrans_i)$, that is, we let \[\accepSetSat = (\atmtnStates \times \colors) \setminus \{(\atmtnState, \clr) \mid (\atmtnState, \clr, \atmtnUpd(\atmtnState, \clr)) \in \bigcup\limits_{i=1}^{k} \Delta_i\} \; \text{ and } \; \dbaSat=\dbaSatFull.\]
		We show that $\atmtnLang{\dbaSat} = \atmtnLang{\dba}$.
		Since $\accepSet \subseteq \accepSetSat$, it holds that $\atmtnLang{\dba}\subseteq \atmtnLang{\dbaSat}$.
		For the other inclusion, let $\word\notin \atmtnLang{\dba}$.
		There are $w_0\in \colors^*, w'\in \colors^\omega$ such that $\word=w_0w'$ and the finite run $\dbaRun{q_0, w'}$ produced by reading~$w'$ from $q_0 = \atmtnUpdWord(\atmtnInit, w_0)$ does not visit any B\"uchi transition.
		In particular, $\dbaRun{q_0, w'}$ is an infinite path in the finite graph $(\atmtnStates, \atmtnTransSafe)$.
		This implies that eventually, run $\dbaRun{q_0, w'}$ reaches and stays in the same strongly connected component of graph $(\atmtnStates, \atmtnTransSafe)$.
		Formally, there are $\word_1\in\colors^*, \word_2\in\colors^\omega$ such that $\word' = \word_1\word_2$ and the infinite run $\dbaRun{q_1,\word_2}$ produced by reading~$w_2$ from $q_1=\atmtnUpdWord(\atmtnState_0,\word_1)$ lies entirely in some $\accepSet$-free component $(\atmtnStates_i,\atmtnTrans_i)$.
		Thus, all transitions in $\dbaRun{q_1, w_2}$ lie in $\atmtnTrans_i$.
		Therefore, $\dbaSat(\atmtnInit,w)$ is also a rejecting run of~$\dbaSat$ and $\word \notin \atmtnLang{\dbaSat}$.

		We prove that $\dbaSat$ is saturated and the uniqueness of $\accepSetSat$ at the same time.
		Let $\accepSet'$ be another acceptance set such that $\accepSet' \nsubseteq \accepSetSat$ and let $\dba'$ be the automaton obtained by replacing $\accepSetSat$ with $\accepSet'$ in $\dbaSat$.
		Let $(q,\clr)\in \accepSet'\setminus \accepSetSat$.
		Since $(q,\clr)\notin \accepSetSat$, there is an $\accepSet$-free component $(\atmtnStates_i, \atmtnTrans_i)$ such that $(q, \clr, \atmtnUpd(q, \clr))\in \atmtnTrans_i$.
		We can therefore consider a word $w\in \safeCyclesSat{q}$ labeling an $\accepSet$-free cycle including the transition $(q,\clr,\atmtnUpd(q, \clr))$.
		Let $w_0\in \colors^*$ such that $\atmtnUpdWord(\atmtnInit,w_0) = q$.
		Then, $w_0w^\omega\notin \atmtnLang{\dbaSat}$, whereas $w_0w^\omega\in \atmtnLang{\dba'}$, so $\dba'$ does not recognize the same objective as $\dbaSat$.

		When $\colors$ is finite, the set of transitions $\dbaSat$ can be computed in time $\bigO(\card{\colors}\cdot\card{\atmtnStates})$, as it consists of decomposing a graph with at most $\card{\colors}\cdot\card{\atmtnStates}$ transitions into strongly connected components~\cite{Tar72}.
	\end{proof}
	\begin{figure}[tbh]
		\centering
		\begin{minipage}{0.5\columnwidth}
			\centering
			\begin{tikzpicture}[every node/.style={font=\small,inner sep=1pt}]
				\draw (0,0) node[diamant] (qinit) {$\atmtnInit$};
				\draw ($(qinit.north)+(0,0.4)$) edge[-latex'] (qinit);
				\draw ($(qinit)+(1.5,0)$) node[diamant] (qa) {$\atmtnState_a$};
				\draw ($(qinit)+(3,0)$) node[diamant] (top) {$\atmtnState_{aa}$};
				\draw (qinit) edge[-latex',out=30,in=180-30,accepting] node[above=4pt] {$a$} (qa);
				\draw (qa) edge[-latex',out=180+30,in=-30] node[below=4pt] {$b$} (qinit);
				\draw (qa) edge[-latex'] node[above=4pt] {$a$} (top);
				\draw (qinit) edge[-latex',out=150,in=210,distance=0.8cm] node[left=4pt] {$b$} (qinit);
				\draw (top) edge[-latex',out=-30,in=30,distance=0.8cm,accepting] node[right=4pt] {$a, b$} (top);
			\end{tikzpicture}
		\end{minipage}%
		\begin{minipage}{0.5\columnwidth}
			\centering
			\ReachAorAA
		\end{minipage}%
		\caption{A DBA (left) and its unique saturation (right).
			Transitions labeled with a $\bullet$ symbol are the B\"uchi transitions.
			In figures, automaton states are depicted with diamonds.}
		\label{fig:saturation}
	\end{figure}

	In Figure~\ref{fig:saturation}, we show an example of the saturation process presented in the proof of Lemma~\ref{lem:saturatedAutomaton}.

	The following simple lemma follows, which holds true in saturated DBA: every word that is $\accepSet$-free from a state can be completed into an $\accepSet$-free cycle from the same state.
	This will be a key technical lemma in the upcoming proofs.
	\begin{lem} \label{lem:safeComp}
		Let $\dba = \dbaFull$ be a saturated DBA.
		Let $\atmtnState\in\atmtnStates$ and $\word \in \safe{\atmtnState}$.
		There exists $\word' \in \colors^*$ such that $\word\word'\in\safeCycles{\atmtnState}$.
	\end{lem}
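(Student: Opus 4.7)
The plan is to exploit the structural characterization of saturation provided by Lemma~\ref{lem:saturatedAutomaton}: once $\dba$ is saturated, every non-B\"uchi transition lies inside some $\accepSet$-free component. I would first show that the whole run $\dbaRun{\atmtnState, \word}$ stays within a single such component, and then close the loop by picking a return path inside that component.

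More concretely, let $\atmtnState' = \atmtnUpdWord(\atmtnState, \word)$. By assumption, $\dbaRun{\atmtnState, \word}$ uses only transitions from $\atmtnTransSafe = \atmtnTrans \setminus \accepSet$. Since $\dba$ is saturated, Lemma~\ref{lem:saturatedAutomaton} tells us that $\accepSet = \accepSetSat$ consists exactly of the transitions not appearing in any $\accepSet$-free component; dually, every transition in $\atmtnTransSafe$ must sit inside some $\accepSet$-free component, and in particular its endpoints belong to the same component. Because $\accepSet$-free components partition $\atmtnStates$, a direct induction along $\dbaRun{\atmtnState, \word}$ shows that $\atmtnState$, $\atmtnState'$, and every intermediate state all lie in a common $\accepSet$-free component $(\atmtnStates_i, \atmtnTrans_i)$.

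Finally, since $(\atmtnStates_i, \atmtnTrans_i)$ is strongly connected, there exists a finite path in $\atmtnTrans_i$ from $\atmtnState'$ back to $\atmtnState$; letting $\word'\in\colors^*$ be its label yields $\atmtnUpdWord(\atmtnState, \word\word') = \atmtnState$ with $\dbaRun{\atmtnState, \word\word'}$ entirely $\accepSet$-free, i.e.\ $\word\word' \in \safeCycles{\atmtnState}$. I do not anticipate any real obstacle: essentially all the work has been packed into Lemma~\ref{lem:saturatedAutomaton}. The only mild care concerns the degenerate case where the component of $\atmtnState$ is a trivial SCC with no self-loop; then the induction forces $\atmtnState = \atmtnState'$, and the empty word $\word' = \emptyWord$ (which labels the empty path allowed by our definition) already works.
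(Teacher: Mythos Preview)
Your proposal is correct and follows essentially the same approach as the paper's proof: both invoke Lemma~\ref{lem:saturatedAutomaton} to conclude that every transition of $\atmtnTransSafe$ lies within a single $\accepSet$-free component, hence the $\accepSet$-free run from $\atmtnState$ to $\atmtnState'$ keeps both states in the same component, and strong connectivity furnishes the return word $\word'$. Your explicit induction along the run and your treatment of the trivial-SCC corner case add a little more detail than the paper gives, but the underlying argument is identical.
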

	\begin{proof}
		Let $q'=\atmtnUpdWord(q,\word)$.
		Thanks to the saturation property and by Lemma~\ref{lem:saturatedAutomaton}, $\accepSet$ contains all transitions that do not belong to an $\accepSet$-free component.
		This implies that any two states connected by an $\accepSet$-free run are in the same $\accepSet$-free component.
		In particular, as $q'$ is reachable from $q$ through an $\accepSet$-free run, $q'$ must belong to the same $\accepSet$-free component as $q$.
		Therefore, there exists an $\accepSet$-free run from $q'$ to $q$.
		Taking the word $\word'$ labeling this run, we obtain the desired result.
	\end{proof}

	\section{Half-positionality characterization for DBA-recognizable objectives} \label{sec:characterization}
	In this section, we present our main contribution in Theorem~\ref{thm:mainChar}, by giving three conditions that exactly characterize half-positional DBA-recognizable objectives.
	These conditions are presented in Section~\ref{sec:three-conditions}.
	Theorem~\ref{thm:mainChar} and several consequences of it are stated in Section~\ref{sec:characterization-corollaries} (the proof of Theorem~\ref{thm:mainChar} is postponed to Sections~\ref{sec:necessary} and~\ref{sec:sufficient}).
	In Section~\ref{sec:complexity}, we use this characterization to show that we can decide the half-positionality of a DBA in polynomial time.

	\subsection{Three conditions for half-positionality} \label{sec:three-conditions}
	We define the three conditions of objectives at the core of our characterization.

	\begin{condition}[Total prefix preorder] \label{cond1}
		We say that an objective $\wc \subseteq \colors^\omega$ has a \emph{total prefix preorder} if for all $\word_1, \word_2 \in \colors^*$, $\word_1 \prefOrd_\wc \word_2$ or $\word_2 \prefOrd_\wc \word_1$.
	\end{condition}

	\begin{exa}[Not total prefix preorder] \label{ex:abbC}
		Let $\colors = \{a, b\}$.
		We consider the objective $\wc$ recognized by the DBA $\dba$ depicted in Figure~\ref{fig:abbC} (left).
		It consists of the infinite words starting with $aa$ or $bb$.
		This objective does not have a total prefix preorder: words $a$ and $b$ are incomparable for $\prefOrd_\wc$.
		Indeed, $a^\omega$ is winning after $a$ but not after $b$, and $b^\omega$ is winning after $b$ but not after $a$.
		In terms of automaton states, we have that $\atmtnState_a$ and $\atmtnState_b$ are incomparable for~$\prefOrd_\dba$.
		This objective is not half-positional, as witnessed by the arena on the right of Figure~\ref{fig:abbC}.
		In this arena, $\Pone$ is able to win when the game starts in $\s_1$ by playing $a$ in $\s_3$, and when the game starts in $\s_2$ by playing $b$.
		However, no positional strategy wins from both $\s_1$ and~$\s_2$.
		\qedEx
	\end{exa}
	\begin{figure}[tbh]
		\centering
		\begin{minipage}{0.55\columnwidth}
			\centering
			\begin{tikzpicture}[every node/.style={font=\small,inner sep=1pt}]
				\draw (0,0) node[diamant] (qinit) {$\atmtnInit$};
				\draw ($(qinit.north)+(0,0.4)$) edge[-latex'] (qinit);
				\draw ($(qinit)+(2,0.7)$) node[diamant] (qa) {$\atmtnState_a$};
				\draw ($(qinit)+(2,-0.7)$) node[diamant] (qb) {$\atmtnState_b$};
				\draw ($(qa)+(2,0)$) node[diamant] (top) {$\atmtnState_{\mathsf{win}}$};
				\draw ($(qb)+(2,0)$) node[diamant] (bot) {$\atmtnState_{\mathsf{lose}}$};
				\draw (qinit) edge[-latex'] node[above=4pt] {$a$} (qa);
				\draw (qinit) edge[-latex'] node[below=4pt] {$b$} (qb);
				\draw (qa) edge[-latex'] node[above=4pt] {$a$} (top);
				\draw (qb) edge[-latex'] node[above=4pt] {$b$} (top);
				\draw (qb) edge[-latex'] node[below=4pt] {$a$} (bot);
				\draw (qa) edge[-latex'] node[below=4pt] {$b$} (bot);
				\draw (top) edge[-latex',out=-30,in=30,distance=0.8cm,accepting] node[right=4pt] {$a, b$} (top);
				\draw (bot) edge[-latex',out=-30,in=30,distance=0.8cm] node[right=4pt] {$a, b$} (bot);
			\end{tikzpicture}
		\end{minipage}%
		\begin{minipage}{0.45\columnwidth}
			\centering
			\begin{tikzpicture}[every node/.style={font=\small,inner sep=1pt}]
				\draw (0,0) node[rond] (s1) {$\s_1$};
				\draw ($(s1)+(0,-1.4)$) node[rond] (s2) {$\s_2$};
				\draw ($(s1)!0.5!(s2)+(2,0)$) node[rond] (s3) {$\s_3$};
				\draw (s1) edge[-latex'] node[above=4pt] {$a$} (s3);
				\draw (s2) edge[-latex'] node[below=4pt] {$b$} (s3);
				\draw (s3) edge[-latex',out=60,in=120,distance=0.8cm] node[above=4pt] {$a$} (s3);
				\draw (s3) edge[-latex',out=-120,in=-60,distance=0.8cm] node[below=4pt] {$b$} (s3);
			\end{tikzpicture}
		\end{minipage}%
		\caption{DBA $\dba$ recognizing objective $\wc = (aa+bb)\colors^\omega$ (left), and an arena in which positional strategies do not suffice for $\Pone$ to play optimally for this objective (right).
			In figures, circles represent arena vertices controlled by $\Pone$.}
		\label{fig:abbC}
	\end{figure}

	\begin{rem} \label{rem:totalSym}
		The prefix preorder of an objective $\wc$ is total if and only if the prefix preorder of its complement $\comp{\wc}$ is total.
		\qedEx
	\end{rem}

	\begin{rem}
		Having a total prefix preorder is equivalent to the \emph{strong monotony} notion~\cite{BFMM11} in general, and equivalent to \emph{monotony}~\cite{GZ05} for $\omega$-regular objectives.
		We discuss in more depth the relation between the conditions appearing in the characterization and other properties from the literature studying half-positionality in Appendix~\ref{app:relation}.
		\qedEx
	\end{rem}

	A straightforward result for an objective $\wc$ recognized by a DBA $\dba$ is that it has a total prefix preorder if and only if the (reachable) states of $\dba$ are totally ordered for $\prefOrd_\dba$.

	\begin{condition}[Progress-consistency] \label{cond2}
		We say that an objective $\wc$ is \emph{progress-consistent} if for all $\word_1\in\colors^*$ and $\word_2\in\colors^+$ such that $\word_1 \strictPrefOrd \word_1\word_2$, we have $\word_1(\word_2)^\omega\in\wc$.
	\end{condition}

	Intuitively, this means that whenever a word $\word_2$ can be used to make progress after starting with a word $\word_1$ (in the sense of getting to a position in which more continuations are winning), then repeating this word has to be winning.

	\begin{exa}[Not progress-consistent] \label{ex:AA}
		Let $\colors = \{a, b\}$.
		We consider the objective $\wc=\colors^*aa\colors^\omega$ recognized by the DBA with three states in Figure~\ref{fig:AA} (left).
		This objective consists of the words containing, at some point, twice the color $a$ in a row.
		Notice that the prefix preorder of this objective is total ($\atmtnInit \strictPrefOrd \atmtnState_a \strictPrefOrd \atmtnState_{aa}$).
		This objective is not progress-consistent: we have $\emptyWord \strictPrefOrd ba$, but $(ba)^\omega\notin\wc$.
		This objective is not half-positional: if $\Pone$ plays in an arena with a choice among two cycles $ba$ and $ab$ depicted in Figure~\ref{fig:AA} (right), it is possible to win by playing $ba$ and then $ab$, but a positional strategy can only achieve words $(ba)^\omega$ or $(ab)^\omega$, which are both losing.
		\qedEx
	\end{exa}
	\begin{figure}[tbh]
		\centering
		\begin{minipage}{0.5\columnwidth}
			\centering
			\begin{tikzpicture}[every node/.style={font=\small,inner sep=1pt}]
				\draw (0,0) node[diamant] (qinit) {$\atmtnInit$};
				\draw ($(qinit.north)+(0,0.4)$) edge[-latex'] (qinit);
				\draw ($(qinit)+(1.5,0)$) node[diamant] (qa) {$\atmtnState_a$};
				\draw ($(qinit)+(3,0)$) node[diamant] (top) {$\atmtnState_{aa}$};
				\draw (qinit) edge[-latex',out=30,in=180-30] node[above=4pt] {$a$} (qa);
				\draw (qa) edge[-latex',out=180+30,in=-30] node[below=4pt] {$b$} (qinit);
				\draw (qa) edge[-latex',accepting] node[above=4pt] {$a$} (top);
				\draw (qinit) edge[-latex',out=150,in=210,distance=0.8cm] node[left=4pt] {$b$} (qinit);
				\draw (top) edge[-latex',out=-30,in=30,distance=0.8cm,accepting] node[right=4pt] {$a, b$} (top);
			\end{tikzpicture}
		\end{minipage}%
		\begin{minipage}{0.5\columnwidth}
			\centering
			\begin{tikzpicture}[every node/.style={font=\small,inner sep=1pt}]
				\draw (0,0) node[rond] (s) {$\s$};
				\draw (s) edge[-latex',out=150,in=210,distance=0.8cm,decorate] node[left=4pt] {$ab$} (s);
				\draw (s) edge[-latex',out=-30,in=30,distance=0.8cm,decorate] node[right=4pt] {$ba$} (s);
			\end{tikzpicture}
		\end{minipage}%
		\caption{A DBA recognizing the set of words containing $aa$ at some point (left), an arena in which positional strategies do not suffice for $\Pone$ to play optimally for this objective (right).
			Squiggly arrows indicate a sequence of edges or transitions (here, a sequence of two edges).}
		\label{fig:AA}
	\end{figure}

	\begin{exa}[Progress-consistent objective] \label{ex:AAorBuchiA}
		We consider a slight modification of the previous example by adding two B\"uchi transitions: see the DBA in Figure~\ref{fig:AAorReachAA}.
		The objective recognized by this DBA is $\wc=\Buchi{\{a\}}\cup \colors^*aa\colors^\omega$: $\wc$ is the set of words containing infinitely many occurrences of $a$, or containing $a$ twice in a row at some point.
		The sets of winning continuations of $\wc$ are $\atmtnLang{\dba^{\atmtnInit}}=\wc$, $\atmtnLang{\dba^{\atmtnState_{a}}}=a\colors^\omega\cup \wc$ and $\atmtnLang{\dba^{\atmtnState_{aa}}}=\colors^\omega$.
		This objective is progress-consistent: any word reaching $\atmtnState_{aa}$ is straightforwardly accepted when repeated infinitely often, and any word $\word$ such that $\atmtnUpdWord(\atmtnInit, \word) = \atmtnState_a$ necessarily contains at least one $a$, and thus is accepted when repeated infinitely often.
		Objective $\wc$ is half-positional, which will be readily shown with our upcoming characterization (Theorem~\ref{thm:mainChar}).

		Here, notice that the complement $\comp{\wc}$ of $\wc$ is not progress-consistent.
		Indeed, $a \strictPrefOrd_{\comp{\wc}} a(bab)$, but $a(bab)^\omega\notin \comp{\wc}$.
		Unlike having a total prefix preorder, progress-consistency can hold for an objective but not its complement.

		Note that half-positionality of $\wc$ cannot be shown using existing half-positionality criteria~\cite{Kop06,BFMM11} (it is neither prefix-independent nor \emph{concave}) or bipositionality criteria, as it is simply not bipositional.
		\qedEx
	\end{exa}
	\begin{figure}[tbh]
		\centering
		\ReachAorAA
		\caption{A DBA recognizing the set of words containing infinitely many occurrences of $a$, or containing $aa$ at some point.}
		\label{fig:AAorReachAA}
	\end{figure}

	\begin{condition}[Recognizability by the prefix-classifier] \label{cond3}
		Being recognized by a B\"uchi automaton built on top of the prefix-classifier is our third condition.
		In other words, for a DBA-recognizable objective $\wc\subseteq\colors^\omega$ and its prefix-classifier $\prefClass = \minStateAtmtnFull$, this condition requires that there exists $\accepSet_\prefEq \subseteq \minStateStates \times \colors$ such that $\wc$ is recognized by DBA~$(\minStateStates, \colors, \minStateInit, \minStateUpd, \accepSet_\prefEq)$.
	\end{condition}
	We show an example of a DBA-recognizable objective that satisfies the first two conditions (having a total prefix preorder and progress-consistency), but not this third condition, and which is not half-positional.

	\begin{exa}[Not recognizable by the prefix-classifier] \label{ex:BuchiABuchiB}
		Let $\colors = \{a, b\}$.
		We consider the objective $\wc = \Buchi{\{a\}} \cap \Buchi{\{b\}}$ recognized by the DBA in Figure~\ref{fig:BuchiABuchiB}.
		This objective is prefix-independent: as such (Remark~\ref{rem:PIClass}), there is only one equivalence class of $\prefEq$.
		This implies that the prefix preorder is total, and that $\wc$ is progress-consistent (the premise of the progress-consistency property can never be true).
		This objective is not half-positional, as witnessed by the arena in Figure~\ref{fig:BuchiABuchiB} (right): $\Pone$ has a winning strategy from $\s$, but it needs to take infinitely often both $a$ and $b$.

		Any DBA recognizing this objective has at least two states, but all their (reachable) states are equivalent for $\prefEq$ --- no matter the state we choose as an initial state, the recognized objective is the same (by prefix-independence).
		As it is prefix-independent, its prefix-classifier $\minStateAtmtn$ has only one state.
		\qedEx
	\end{exa}
	\begin{figure}[tbh]
		\centering
		\begin{minipage}{0.5\columnwidth}
			\centering
			\begin{tikzpicture}[every node/.style={font=\small,inner sep=1pt}]
				\draw (0,0) node[diamant] (qa) {$\atmtnState_1$};
				\draw ($(qa.north)+(0,0.4)$) edge[-latex'] (qa);
				\draw ($(qinit)+(2,0)$) node[diamant] (qb) {$\atmtnState_2$};
				\draw (qa) edge[-latex',out=30,in=180-30,accepting] node[above=4pt] {$b$} (qb);
				\draw (qb) edge[-latex',out=180+30,in=-30,accepting] node[below=4pt] {$a$} (qa);
				\draw (qa) edge[-latex',out=150,in=210,distance=0.8cm] node[left=4pt] {$a$} (qa);
				\draw (qb) edge[-latex',out=-30,in=30,distance=0.8cm] node[right=4pt] {$b$} (qb);
			\end{tikzpicture}
		\end{minipage}%
		\begin{minipage}{0.5\columnwidth}
			\centering
			\begin{tikzpicture}[every node/.style={font=\small,inner sep=1pt}]
				\draw (0,0) node[rond] (s) {$\s$};
				\draw (s) edge[-latex',out=150,in=210,distance=0.8cm] node[left=4pt] {$a$} (s);
				\draw (s) edge[-latex',out=-30,in=30,distance=0.8cm] node[right=4pt] {$b$} (s);
			\end{tikzpicture}
		\end{minipage}%
		\caption{DBA recognizing the objective $\Buchi{\{a\}} \cap \Buchi{\{b\}}$ (left), and an arena in which positional strategies do not suffice for $\Pone$ to play optimally for this objective (right).}
		\label{fig:BuchiABuchiB}
	\end{figure}

	As will be shown formally, being recognized by a DBA built on top of the prefix-classifier is necessary for half-positionality of \emph{DBA-recognizable} objectives over finite one-player arenas.
	Unlike the two other conditions, it is in general not necessary for half-positionality of general objectives, including objectives recognized by other standard classes of $\omega$-automata.%
	\begin{exa} \label{ex:coBuchiAorB}
		We consider the complement $\comp{\wc}$ of the objective $\wc = \Buchi{\{a\}} \cap \Buchi{\{b\}}$ of Example~\ref{ex:BuchiABuchiB}, which consists of the words ending with $a^\omega$ or $b^\omega$.
		Objective $\comp{\wc}$ is not DBA-recognizable (a close proof can be found in~\cite[Theorem~4.50]{BK08}).
		Still, it is recognizable by a \emph{deterministic coB\"uchi automaton} similar to the automaton in Figure~\ref{fig:BuchiABuchiB}, but which accepts infinite words that visit transitions labeled with $\bullet$ only finitely often.
		This objective is half-positional, which can be shown using~\cite[Theorem~6]{DJW97}.
		However, its prefix-classifier has just one state, and there is no way to recognize $\comp{\wc}$ by building a coB\"uchi (or even parity) automaton on top of it.
		\qedEx
	\end{exa}

	\subsection{Characterization and corollaries}\label{sec:characterization-corollaries}
	We have now defined the three conditions required for our characterization.
	\begin{thm} \label{thm:mainChar}
		Let $\wc\subseteq\colors^\omega$ be a DBA-recognizable objective.
		Objective $\wc$ is half-positional (over all arenas) if and only if
		\begin{itemize}
			\item its prefix preorder $\prefOrd$ is total,
			\item it is progress-consistent, and
			\item it can be recognized by a B\"uchi automaton built on top of its prefix-classifier $\prefClass$.
		\end{itemize}
	\end{thm}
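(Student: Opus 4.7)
The plan is to prove the ``only if'' and ``if'' directions of Theorem~\ref{thm:mainChar} separately. For necessity, I would show by contraposition that each of the three conditions must hold: when a condition fails, I would exhibit a finite arena (one-player or two-player) in which \Pone\ has a winning strategy but no positional optimal one. For sufficiency, I would invoke Ohlmann's characterization of half-positionality via well-monotonic universal graphs~\cite{Ohl23} and construct such a graph from a DBA built on top of the prefix-classifier.

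The necessity of a total prefix preorder and of progress-consistency follow by generalizing Examples~\ref{ex:abbC} and~\ref{ex:AA}, respectively. If $\word_1, \word_2\in\colors^*$ are $\prefOrd$-incomparable, I can pick ultimately periodic continuations $u_1 \in \inverse{\word_1}\wc \setminus \inverse{\word_2}\wc$ and $u_2 \in \inverse{\word_2}\wc \setminus \inverse{\word_1}\wc$ via Lemma~\ref{lem:ultPer}, and build a finite two-player arena with two \Pone-entries spelling $\word_1$ and $\word_2$ that merge at a common vertex offering the choice between the two ultimately periodic tails: both entries are winning, but no positional choice at the merge vertex wins from both. If $\word_1 \strictPrefOrd \word_1\word_2$ yet $\word_1(\word_2)^\omega \notin \wc$, I pick an ultimately periodic $u \in \inverse{\word_1\word_2}\wc \setminus \inverse{\word_1}\wc$ and build a finite one-player arena in which \Pone\ reaches, after reading $\word_1$, a vertex with a self-loop spelling $\word_2$ and an exit spelling $u$; winning plays must loop at least once before exiting, so positional strategies fail.

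The necessity of the third condition is where I expect the main obstacle. Assuming $\wc$ admits no DBA built on top of $\prefClass$, I would work with the canonical saturated DBA $\dbaSat$ from Lemma~\ref{lem:saturatedAutomaton} and extract two states $\atmtnState_1, \atmtnState_2$ with $\atmtnState_1 \prefEq \atmtnState_2$ that ``disagree'' on $\accepSet$-free behaviour, for instance differing in their sets $\safeCyclesSat{\cdot}$ or in how the $\accepSet$-free components are distributed around them. The challenge is to translate such a semantic discrepancy into a concrete finite arena that forces memory. I would lean here on the automaton-theoretic toolbox developed for good-for-games coB\"uchi automata~\cite{AbuRadiKupferman19Minimizing,AbuRadiKupferman20Canonicity}, using Lemma~\ref{lem:safeComp} to stitch $\accepSet$-free paths into cycles and craft plays whose acceptance depends on remembering which equivalent state was actually reached.

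For sufficiency, fix a DBA $\dba$ built on top of $\prefClass$ recognizing $\wc$, saturated via Lemma~\ref{lem:saturatedAutomaton}. I would construct a family of well-monotonic universal graphs $\univGrB$ parameterized by an ordinal $\mainOrd$ large enough to cover the cardinality of the arenas at hand. A natural candidate has vertices of the form $(\atmtnState, \ord)$ with $\atmtnState \in \minStateStates$ and $\ord < \mainOrd$, with transitions that respect the total order on $\prefClass$, strictly decrease $\ord$ on $\accepSet$-free steps, and reset $\ord$ upon crossing a B\"uchi transition. Progress-consistency is exactly what makes this coherent: any infinite play that strictly climbs in $\prefOrd$ infinitely often is winning, so an ordinal-decrease discipline on $\accepSet$-free segments aligns with the acceptance condition. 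Applying Ohlmann's theorem~\cite{Ohl23} to this graph then yields half-positionality of $\wc$ over arbitrary arenas.
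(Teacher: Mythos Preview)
Your plan matches the paper on three of the four pieces. The two easy necessity arguments are exactly Propositions~\ref{prop:orderedNecessary} and~\ref{prop:progCons} (one small correction: the paper uses finite \emph{one-player} arenas for both, not two-player; this matters, since Proposition~\ref{prop:1to2} relies on all three necessity proofs living over finite one-player arenas). The sufficiency construction you sketch --- vertices $(\atmtnState,\ord)$, ordinal decrease on $\accepSet$-free transitions, reset on B\"uchi transitions, plus a $\top$ and edges down to smaller states for monotonicity --- is precisely the graph $\univGrB$ of Section~\ref{sec:sufficient}, and progress-consistency enters exactly where you say it does (Lemmas~\ref{lem:monotonic} and~\ref{lem:winningCycle}).

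The gap is in the third necessity condition, and it is a real one. Your proposed route is by contraposition: find two $\prefEq$-equivalent states disagreeing on $\accepSet$-free behaviour and turn that into a bad arena. You flag this as the hard step and give no mechanism; in fact the paper does \emph{not} proceed this way. It argues the direct implication: assuming half-positionality, it shows each $\prefEq$-class can be collapsed to a single state while preserving the language. The key notion you are missing is the \emph{$\accepSet$-free-cycle-maximum} --- a state $\atmtnStateMax$ whose $\accepSet$-free cycles contain those of every other state in its class. For the prefix-independent case (Section~\ref{sec:PI}) the paper builds the one-player arena $\arena_\dba$ whose moves are \emph{all} non-empty $\accepSet$-free cycles from \emph{any} state; each such cycle repeated is rejecting, so no positional strategy wins, hence by half-positionality no strategy wins, and \emph{that} non-winning fact is what forces an $\accepSet$-free-cycle-maximum $\atmtnStateMax$ to exist (Lemma~\ref{ref:existenceStrongSafeMaximum}). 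One then reads off $\goodColors$ from the outgoing transitions of $\atmtnStateMax$ to get $\wc=\Buchi{\goodColors}$. The general case (Section~\ref{sec:generalCase}) reduces each $\prefEq$-class to the prefix-independent case by passing to a fresh alphabet of class-returning words. Your ``disagreeing states $\to$ bad arena'' sketch does not surface any of this structure, and I do not see how to complete it without the $\accepSet$-free-cycle-maximum idea or something equivalent.
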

	\begin{proof}
		The proof of the necessity of the three conditions can be found in Section~\ref{sec:necessary}, respectively in Propositions~\ref{prop:orderedNecessary},~\ref{prop:progCons}, and~\ref{prop:necessaryDBA}.
		The proof of the sufficiency of the conjunction of the three conditions can be found in Section~\ref{sec:sufficient}, Proposition~\ref{prop:sufficient}.
	\end{proof}

	This characterization is valuable to prove (and disprove) half-positionality of DBA-recognizable objectives.
	Examples~\ref{ex:abbC},~\ref{ex:AA}, and~\ref{ex:BuchiABuchiB} are all not half-positional, and each of them falsifies exactly one of the three conditions from the statement.
	On the other hand, Example~\ref{ex:AAorBuchiA} is half-positional.
	We have already discussed its progress-consistency, but it is also straightforward to verify that its prefix preorder is total and that it is recognizable by its prefix-classifier: the right congruence has three totally ordered equivalence classes corresponding to the states of the DBA in Figure~\ref{fig:AAorReachAA}.

	We state two notable consequences of Theorem~\ref{thm:mainChar} and of its proof technique.
	The first one is the specialization of Theorem~\ref{thm:mainChar} to prefix-independent objectives.
	It states that all prefix-independent, DBA-recognizable objectives that are half-positional are of the kind $\Buchi{\goodColors}$ for some $\goodColors\subseteq \colors$.
	Prefix-independence of objectives is a frequent assumption in the literature~\cite{Kop06,CN06,GK14,CFGO22} --- we show that under this assumption, half-positionality of DBA-recognizable objectives is very easy to understand and characterize.
	\begin{prop} \label{prop:charPI}
		Let $\wc\subseteq\colors^\omega$ be a prefix-independent, DBA-recognizable objective.
		Objective $\wc$ is half-positional if and only if there exists $\goodColors \subseteq \colors$ such that $\wc = \Buchi{\goodColors}$.
	\end{prop}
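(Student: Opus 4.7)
The plan is to derive this proposition as a direct corollary of Theorem~\ref{thm:mainChar}, specialized via Remark~\ref{rem:PIClass}.

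For the ``if'' direction, I would invoke (or briefly verify) that every B\"uchi objective $\Buchi{\goodColors}$ is half-positional. This is a classical fact on B\"uchi games, but it also falls out of Theorem~\ref{thm:mainChar}: $\Buchi{\goodColors}$ is prefix-independent and is recognized by a one-state DBA whose B\"uchi transitions are the self-loops labeled by $\goodColors$ (as noted right after the definition of DBA-recognizability); by Remark~\ref{rem:PIClass} its prefix-classifier also has a single state, so this DBA is built on top of the prefix-classifier. The prefix preorder has a single equivalence class (hence is total), and progress-consistency is vacuous because no two finite words are strictly ordered by $\strictPrefOrd$.

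For the ``only if'' direction, suppose $\wc$ is prefix-independent, DBA-recognizable, and half-positional. By Theorem~\ref{thm:mainChar}, $\wc$ is recognized by some DBA $\dba$ built on top of its prefix-classifier $\prefClass$. Since $\wc$ is prefix-independent, Remark~\ref{rem:PIClass} tells us that $\prefClass$ has a single state. Therefore $\dba$ has a single state $\atmtnState$, its transitions are exactly the self-loops $\{(\atmtnState, \clr, \atmtnState) \mid \clr \in \colors\}$, and its set $\accepSet$ of B\"uchi transitions is in bijection with some subset $\goodColors \subseteq \colors$. An infinite word is then accepted by $\dba$ precisely when it visits a color of $\goodColors$ infinitely often, so $\wc = \atmtnLang{\dba} = \Buchi{\goodColors}$.

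There is no serious obstacle here: the nontrivial content sits inside Theorem~\ref{thm:mainChar}, and the present proposition is merely its clean specialization once Remark~\ref{rem:PIClass} collapses the prefix-classifier to a single state.
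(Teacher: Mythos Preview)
Your proposal is correct and matches the paper's own proof essentially line for line: both directions hinge on Theorem~\ref{thm:mainChar} together with Remark~\ref{rem:PIClass}, collapsing the prefix-classifier to a single state and reading off $\goodColors$ from the B\"uchi transitions. The only cosmetic difference is that the paper dispatches the ``if'' direction by citing the classical half-positionality of parity (hence B\"uchi) games rather than re-verifying the three conditions.
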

	\begin{proof}
		The right-to-left implication follows from the known half-positionality of objectives of the kind $\Buchi{\goodColors}$ (this is a special case of a parity game~\cite{EJ91}).
		For the left-to-right implication, we assume that $\wc$ is a prefix-independent, DBA-recognizable, half-positional objective.
		By Theorem~\ref{thm:mainChar}, it is recognized by a DBA $\dba$ built on top of $\prefClass$.
		As $\wc$ is prefix-independent (Remark~\ref{rem:PIClass}), its prefix-classifier has just one state, and there is a single transition from and to this single state for each color.
		Hence, $\wc = \Buchi{\goodColors}$, where $\goodColors$ is the set of colors whose only transition is a B\"uchi transition of $\dba$.
	\end{proof}
	\begin{rem}
		A corollary of this result is that when $\wc$ is prefix-independent, DBA-recognizable, and half-positional, we also have that $\comp{\wc}$ is half-positional.
		Indeed, the complement of objective $\wc = \Buchi{\goodColors}$ is a so-called \emph{coB\"uchi objective}, which is also known to be half-positional~\cite{EJ91}.
		This statement does not hold in general when $\wc$ is not prefix-independent, as was shown in Example~\ref{ex:AAorBuchiA}.
		Moreover, the reciprocal of the statement also does not hold, as was shown in Example~\ref{ex:coBuchiAorB}.%
		\qedEx
	\end{rem}

	\begin{rem}
		A second corollary is that prefix-independent DBA-recognizable half-positional objectives are closed under finite union (since a finite union of B\"uchi conditions is a B\"uchi condition).
		This settles Kopczy\'nski's conjecture for DBA-recognizable objectives.
		\qedEx
	\end{rem}

	A second consequence of Theorem~\ref{thm:mainChar} and its proof technique shows that half-positionality of DBA-recognizable objectives can be reduced to half-positionality over the restricted class of \emph{finite, one-player arenas}.
	Results reducing strategy complexity in two-player arenas to the easier question of strategy complexity in one-player arenas are sometimes called \emph{one-to-two-player lifts} and appear in multiple places in the literature~\cite{GZ05,BLORV22,Koz22,BRV23}.
	\begin{prop}[One-to-two-player and finite-to-infinite lift] \label{prop:1to2}
		Let $\wc\subseteq\colors^\omega$ be a DBA-recognizable objective.
		If objective $\wc$ is half-positional over finite one-player arenas, then it is half-positional over all arenas (of any cardinality).
	\end{prop}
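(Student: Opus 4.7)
The plan is to combine the two directions of Theorem~\ref{thm:mainChar}. Since the sufficiency direction already shows that the conjunction of the three conditions (total prefix preorder, progress-consistency, and recognizability by the prefix-classifier) implies half-positionality over all arenas, it suffices to establish their necessity from the weaker hypothesis that \wc is half-positional over finite one-player arenas of \Pone. In other words, I will recast the necessity proofs of Section~\ref{sec:necessary} so that, whenever a condition fails, the counterexample arena constructed is itself a finite one-player arena of \Pone.

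For a failure of the total prefix preorder, let $w_1, w_2 \in \colors^*$ be \prefOrd-incomparable, and use Lemma~\ref{lem:ultPer} to extract ultimately periodic continuations witnessing the incomparability. I would then assemble a finite one-player arena in the spirit of Example~\ref{ex:abbC}: two distinct initial \Pone-vertices reading $w_1$ and $w_2$, merging into a shared gadget in which \Pone can reproduce either ultimately periodic witness. A positional strategy must commit to a single behavior on the shared gadget, precluding simultaneous optimality from both initial vertices. For a failure of progress-consistency, if $w_1 \strictPrefOrd w_1 w_2$ while $w_1(w_2)^\omega \notin \wc$, Lemma~\ref{lem:ultPer} provides an ultimately periodic $u$ with $w_1 w_2 u \in \wc$ and $w_1 u \notin \wc$; a finite one-player arena in the style of Example~\ref{ex:AA} then suffices, where after reading $w_1$ the player is placed at a vertex offering, at each visit, the choice of looping $w_2$ once more or committing to the periodic part of $u$. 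Every positional response is losing, whereas winning requires first looping $w_2$ at least once and only then switching to $u$.

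The main obstacle is the necessity of the third condition. If \wc cannot be recognized by a DBA built on top of \prefClass, I would work with the saturated DBA \dbaSat given by Lemma~\ref{lem:saturatedAutomaton}; the obstruction then manifests as two \prefEq-equivalent states of \dbaSat whose outgoing \accepSet-transitions differ on some color. Exploiting this asymmetry together with Lemma~\ref{lem:safeComp}, which closes \accepSet-free prefixes into \accepSet-free cycles, and in the spirit of the good-for-games coB\"uchi analyses of~\cite{AbuRadiKupferman19Minimizing,AbuRadiKupferman20Canonicity}, I aim to synthesise a finite one-player arena on which no positional strategy of \Pone is optimal. This case is the delicate one: unlike the first two, the counterexample is not a two-branch gadget extracted from one bad word but must be built from the fine structure of \dbaSat. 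Once all three conditions are established, the sufficiency direction of Theorem~\ref{thm:mainChar} delivers half-positionality of \wc over arbitrary (possibly infinite, two-player) arenas, completing the lift.
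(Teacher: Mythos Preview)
Your approach is correct and matches the paper's exactly: deduce the three conditions from half-positionality over finite one-player arenas, then invoke the sufficiency direction of Theorem~\ref{thm:mainChar}. The paper's proof is in fact shorter than your outline because Propositions~\ref{prop:orderedNecessary}, \ref{prop:progCons}, and~\ref{prop:necessaryDBA} are \emph{already stated and proved} for finite one-player arenas, so no recasting is needed---one simply cites them. Your sketches for the first two conditions reproduce those proofs faithfully; for the third, your identification of the obstruction (two $\prefEq$-equivalent states of the saturated DBA disagreeing on some $\accepSet$-transition) is correct, but be aware that turning this into a finite one-player counterexample is the substantial content of Section~\ref{sec:DBAnecessary} (via the $\accepSet$-free-cycle-maximum argument and the reduction to the prefix-independent case), not a short gadget construction.
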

	\begin{proof}
		When showing the necessity of the three conditions for half-positionality of DBA-recognizable objectives in Section~\ref{sec:necessary} (Propositions~\ref{prop:orderedNecessary},~\ref{prop:progCons}, and~\ref{prop:necessaryDBA}), we actually show their necessity for half-positionality over \emph{finite one-player} arenas.
		Hence, assuming half-positionality over finite one-player arenas, we have the three conditions from the characterization of Theorem~\ref{thm:mainChar}, so we have half-positionality over all arenas.
	\end{proof}

	One-to-two-player lifts from the literature all require an assumption on the strategy complexity of \emph{both} players, and are either stated solely over finite arenas, or solely over infinite arenas.
	Proposition~\ref{prop:1to2}, albeit set in the more restricted context of DBA-recognizable objectives, displays stronger properties than the known one-to-two-player lifts.
	\begin{itemize}
		\item It is \emph{asymmetric} in the sense that we simply need a hypothesis on \emph{one} player: \emph{half-}positionality of DBA-recognizable objectives over one-player arenas implies their \emph{half-}positionality over two-player arenas.
		\item It shows that half-positionality of DBA-recognizable objectives over \emph{finite} arenas implies half-positionality over \emph{infinite} arenas.
	\end{itemize}
	Both these properties do not hold for general objectives.
	\begin{itemize}
		\item Some objectives are half-positional over one-player but not over two-player arenas~\cite[Section~7]{GK14Old} --- we have that this is not possible for DBA-recognizable objectives.
		\item Some objectives are half-positional over finite but not over infinite arenas (see, e.g., the mean-payoff objective~\cite{EM79,Put94}) --- we have that this is not possible for DBA-recognizable objectives.
	\end{itemize}

	\subsection{Deciding half-positionality in polynomial time} \label{sec:complexity}
	In this section, we assume that $\colors$ is finite.
	We show that the problem of deciding, given a DBA $\dba = \dbaFull$ as an input, whether $\atmtnLang{\dba}$ is half-positional can be solved in polynomial time, and more precisely in time $\bigO(\card{\colors}\cdot\card{\atmtnStates}^4)$.
	We will use the following classical result about the complexity of checking the inclusion of languages recognized by DBA.

	\begin{lem} \label{lem:containmentComplexity}
		Let $\dba = \dbaFull$ and $\dba' = (\atmtnStates', \colors, \atmtnInit', \atmtnUpd', \accepSet')$ be two DBA on the same finite set of colors $\colors$.
		Deciding whether $\atmtnLang{\dba} \subseteq \atmtnLang{\dba'}$ can be done in time $\bigO(\card{\colors} \cdot \card{\atmtnStates} \cdot \card{\atmtnStates'})$.
	\end{lem}
	\begin{proof}
		Observe first that $\atmtnLang{\dba} \subseteq \atmtnLang{\dba'}$ if and only if $\atmtnLang{\dba} \cap \comp{\atmtnLang{\dba'}} = \emptyset$.
		Therefore, we first want to complement $\dba'$ (as an NBA $\overline{\nba'}$), and then check the emptiness of the intersection of the languages of $\dba$ and $\overline{\nba'}$.

		We use the fact that DBA can be easily complemented (as NBA): there is an NBA $\overline{\nba'}$ with at most $2\cdot\card{\atmtnStates'}$ states and at most two non-deterministic transitions for each element of $\colors\times\atmtnStates'$ such that $\atmtnLang{\overline{\nba'}} = \comp{\atmtnLang{\dba'}}$~\cite{Kur87}.\footnote{The construction to complement a DBA as an NBA is simple: keep the original DBA, add a copy of it in which the transitions in $\accepSet'$ are removed, add non-deterministic transitions from the original to the copy mimicking the original transitions, and define as B\"uchi transitions only the transitions left in the copy.
			In this way, accepted words in the new NBA are the ones that, from some point on (guessed non-deterministically), move to the copy and only use transitions that were not B\"uchi transitions of the original DBA.}

		Then, we use the fact that we can construct an NBA $\nba_\cap$ with at most $2\cdot \card{\atmtnStates}\cdot(2\cdot\card{\atmtnStates'}) = \bigO(\card{\atmtnStates}\cdot\card{\atmtnStates'})$ states and $\bigO(\card{\colors}\cdot \card{\atmtnStates} \cdot \card{\atmtnStates'})$ transitions such that $\atmtnLang{\nba_\cap} = \atmtnLang{\dba} \cap \atmtnLang{\overline{\nba'}}$~\cite[Section~3.1]{Bok18}.

		The emptiness of the language of $\nba_\cap$ can finally be checked in time $\bigO(\card{\atmtnStates}\cdot\card{\atmtnStates'} + \card{\colors}\cdot \card{\atmtnStates} \cdot \card{\atmtnStates'}) = \bigO(\card{\colors}\cdot \card{\atmtnStates} \cdot \card{\atmtnStates'})$ by decomposing the graph of the NBA into strongly connected components~\cite{Tar72}.
	\end{proof}

	We investigate how to verify each property used in the characterization of Theorem~\ref{thm:mainChar}.
	Let $\dba = \dbaFull$ be a DBA (we assume w.l.o.g.\ that all states in $\atmtnStates$ are reachable from $\atmtnInit$) and $\wc = \atmtnLang{\dba}$ be the objective it recognizes.
	Our algorithm first verifies that the prefix preorder is total and recognizability by $\prefClass$, and then, under these first two assumptions, progress-consistency.
	For each condition, we sketch an algorithm to decide it, and we discuss the time complexity of this algorithm.%

	\paragraph{Total prefix preorder.}
	To check that $\wc$ has a total prefix preorder, it suffices to check that the states of $\dba$ are totally preordered by $\prefOrd_\dba$.
	We start by computing, for each pair of states $\atmtnState, \atmtnState'\in\atmtnStates$, whether $\atmtnState \prefOrd_\dba \atmtnState'$, $\atmtnState' \prefOrd_\dba \atmtnState$, or none of these.
	This can be rephrased as an \emph{inclusion problem} for two DBA-recognizable objectives: if $\dba^\atmtnState = (\atmtnStates, \colors, \atmtnState, \atmtnUpd, \accepSet)$ and $\dba^{\atmtnState'} = (\atmtnStates, \colors, \atmtnState', \atmtnUpd, \accepSet)$,
	we have that $\atmtnState \prefOrd_\dba \atmtnState'$ if and only if $\atmtnLang{\dba^\atmtnState} \subseteq \atmtnLang{\dba^{\atmtnState'}}$.
	By Lemma~\ref{lem:containmentComplexity}, this can therefore be decided in time $\bigO(\card{\colors} \cdot \card{\atmtnStates}^2)$.
	We can therefore know for all $\card{\atmtnStates}^2$ pairs $\atmtnState, \atmtnState'\in\atmtnStates$ whether $\atmtnState \prefOrd_\dba \atmtnState'$, $\atmtnState' \prefOrd_\dba \atmtnState$, $\atmtnState' \prefEq_\dba \atmtnState$ (as ${\prefEq_\dba} = {\prefOrd_\dba} \cap {\invPrefOrd_\dba}$), or none of these in time $\bigO(\card{\atmtnStates}^2\cdot(\card{\colors}\cdot\card{\atmtnStates}^2)) = \bigO(\card{\colors}\cdot\card{\atmtnStates}^4)$.
	In particular, the prefix preorder is total if and only if for all $\atmtnState, \atmtnState' \in \atmtnStates$, we have $\atmtnState \prefOrd_\dba \atmtnState'$ or $\atmtnState' \prefOrd_\dba \atmtnState$.

	\paragraph{Recognizability by the prefix-classifier.}
	After all the relations $\prefOrd_\dba$ and $\prefEq_\dba$ between pairs of states are computed in the previous step, we can compute the states and transitions of the prefix-classifier $\prefClass = \minStateAtmtnFull$ by merging all the equivalence classes of $\prefEq_\dba$.
	We assume for simplicity that $\minStateStates = \quotient{\atmtnStates}{\prefEq_\dba}$.

	We now wonder whether it is possible to recognize $\wc$ with a B\"uchi automaton built on top of $\prefClass$, i.e., if it possible to find a set $\accepSet_\prefEq \subseteq \minStateStates \times \colors$ of transitions of $\prefClass$ such that the language of DBA $(\minStateStates, \colors, \minStateInit, \minStateUpd, \accepSet_\prefEq)$ is $\wc$.
	We simplify the search for such a set with the following result, which shows that when $\dba$ is saturated, it suffices to try with one specific set $\accepSet_\prefEq$.
	We can then simply check whether $\wc = \atmtnLang{(\minStateStates, \colors, \minStateInit, \minStateUpd, \accepSet_\prefEq)}$, an equivalence query which can be performed in time $\bigO(\card{\colors}\cdot\card{\atmtnStates}^2)$ by checking two inclusion queries as in Lemma~\ref{lem:containmentComplexity}.

	\begin{lem} \label{lem:recoComp}
		We assume that $\dba$ is saturated and that $\wc$ is recognized by a DBA built on top of the prefix-classifier $\prefClass=\minStateAtmtnFull$.
		We define \[
		\accepSet_\prefEq = \{(\eqClass{\atmtnState}, \clr) \in \minStateStates \times \colors \mid \forall \atmtnState'\in\eqClass{\atmtnState}, (\atmtnState', \clr) \in \accepSet\}.
		\]
		Then, $\wc$ is recognized by $(\minStateStates, \colors, \minStateInit, \minStateUpd, \accepSet_\prefEq)$.
	\end{lem}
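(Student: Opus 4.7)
The plan is to prove both inclusions $\atmtnLang{\dba_\prefEq} \subseteq \wc$ and $\wc \subseteq \atmtnLang{\dba_\prefEq}$, where $\dba_\prefEq$ denotes the DBA $(\minStateStates, \colors, \minStateInit, \minStateUpd, \accepSet_\prefEq)$ from the statement. The pivotal observation used throughout is that $\prefClass$ is obtained by quotienting $\dba$ under $\prefEq_\dba$: the state visited by $\dba_\prefEq$ on a prefix $\word_0$ of a word $\word$ is exactly the equivalence class $\eqClass{\atmtnUpdWord(\atmtnInit,\word_0)}$ of the $\dba$-state reached on the same prefix.

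The first inclusion is immediate from the definition of $\accepSet_\prefEq$: every $\accepSet_\prefEq$-transition seen by $\dba_\prefEq$ on $\word$ lifts, via the linkage above, to an $\accepSet$-transition taken by $\dba$ on $\word$, so infinitely many B\"uchi visits in $\dba_\prefEq$ force acceptance in $\dba$. We dispatch this step first as a warm-up.

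For the converse, we invoke the hypothesis that some DBA $\dba' = (\minStateStates,\colors,\minStateInit,\minStateUpd,\accepSet')$ built on top of $\prefClass$ recognizes $\wc$; by Lemma~\ref{lem:saturatedAutomaton}, we may freely assume $\dba'$ is saturated. The goal is then to show $\accepSet' \subseteq \accepSet_\prefEq$, which immediately yields $\wc = \atmtnLang{\dba'} \subseteq \atmtnLang{\dba_\prefEq}$, since enlarging the set of B\"uchi transitions only enlarges the accepted language.

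The main effort lies in proving $\accepSet' \subseteq \accepSet_\prefEq$, and we argue contrapositively. If $(\eqClass{\atmtnState},\clr) \notin \accepSet_\prefEq$, then some $\atmtnState' \in \eqClass{\atmtnState}$ satisfies $(\atmtnState',\clr) \notin \accepSet$, and since $\dba$ is saturated, Lemma~\ref{lem:safeComp} yields an $\accepSet$-free cycle in $\dba$ through this transition, labeled by some $u = \clr u'$ with $\atmtnUpdWord(\atmtnState', u) = \atmtnState'$. Picking any $w_0$ with $\atmtnUpdWord(\atmtnInit,w_0) = \atmtnState'$, the word $w_0 u^\omega$ is rejected by $\dba$, so it lies outside $\wc$, so it is rejected by $\dba'$ as well. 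The run of $\dba'$ on $w_0u^\omega$ is eventually periodic and cycles forever through the $\prefClass$-cycle labeled by $u$ starting from $\eqClass{\atmtnState}$; only finitely many of its transitions can lie in $\accepSet'$, so by periodicity the entire cycle must be $\accepSet'$-free, placing its initial transition $(\eqClass{\atmtnState},\clr)$ inside an $\accepSet'$-free component of $\dba'$. Saturation of $\dba'$ together with Lemma~\ref{lem:saturatedAutomaton} then gives $(\eqClass{\atmtnState},\clr) \notin \accepSet'$, as required. The only step calling for any real care is this periodicity argument; everything else is mechanical once the setup is in place.
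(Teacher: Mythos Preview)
Your proof is correct and follows essentially the same route as the paper: both arguments hinge on the observation that the run of $\dba_\prefEq$ tracks the $\prefEq$-class of the run of $\dba$, and both use saturation of $\dba$ (via Lemma~\ref{lem:safeComp}) to extract an $\accepSet$-free cycle $\clr\word'$ from any $(\atmtnState',\clr)\notin\accepSet$, then argue that $(\clr\word')^\omega$ is rejected to conclude $(\eqClass{\atmtnState},\clr)\notin\accepSet'$. The only cosmetic difference is that the paper proves the full equality $\accepSet' = \accepSet_\prefEq$ (handling the direction $\accepSet_\prefEq\subseteq\accepSet'$ via saturation of $\dba'$), whereas you replace that direction by the direct language inclusion $\atmtnLang{\dba_\prefEq}\subseteq\wc$; your final appeal to saturation of $\dba'$ is in fact unnecessary, since once the cycle labeled $u$ at $\eqClass{\atmtnState}$ is $\accepSet'$-free, its first transition $(\eqClass{\atmtnState},\clr)$ is already outside $\accepSet'$ by definition.
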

	\begin{proof}
		We assume that $\wc$ is recognized by a DBA built on top of $\prefClass$.
		We start by saturating this DBA, which yields a set of B\"uchi transitions $\accepSet'$ such that $\wc$ is also recognized by the saturated DBA $\dba' = (\minStateStates, \colors, \minStateInit, \minStateUpd, \accepSet')$ (Lemma~\ref{lem:saturatedAutomaton}).
		To prove the claim, we show that $\accepSet' = \accepSet_\prefEq$.

		We first show that $\accepSet' \subseteq \accepSet_\prefEq$.
		Let $(\eqClass{\atmtnState}, \clr) \notin \accepSet_\prefEq$ --- we show that $(\eqClass{\atmtnState}, \clr) \notin \accepSet'$.
		As $(\eqClass{\atmtnState}, \clr) \notin \accepSet_\prefEq$, by definition of $\accepSet_\prefEq$, there is $\atmtnState'\in\atmtnStates$ such that $(\atmtnState', \clr) \notin \accepSet$.
		As $\dba$ is saturated, by Lemma~\ref{lem:safeComp}, there exists $\word'\in\colors^*$ such that $\clr\word'\in\safeCycles{\atmtnState'}$.
		By construction of the prefix-classifier, $\atmtnUpd_\prefEq^*(\eqClass{\atmtnState}, \clr\word') = \eqClass{\atmtnState}$.
		Also, as $\wc = \atmtnLang{\dba'}$, word $(\clr\word')^\omega$ must be rejected from $\eqClass{\atmtnState}$ in $\dba'$.
		Therefore, $(\eqClass{\atmtnState}, \clr)$ cannot be a B\"uchi transition of $\dba'$ and is not in $\accepSet'$.

		We now show that $\accepSet_\prefEq \subseteq \accepSet'$.
		Let $(\eqClass{\atmtnState}, \clr) \notin \accepSet'$ --- we show that $(\eqClass{\atmtnState}, \clr) \notin \accepSet_\prefEq$.
		As $\dba'$ is saturated, by Lemma~\ref{lem:safeComp}, there exists $\word'\in\colors^*$ such that $\clr\word'\in\safeCyclesp{\eqClass{\atmtnState}}$.
		As $\wc = \atmtnLang{\dba'}$, word $(\clr\word')^\omega$ is rejected from any state in $\eqClass{\atmtnState}$ in $\dba$.
		If for all $\atmtnState'\in\eqClass{\atmtnState}$, $(\atmtnState', \clr)$ was in $\accepSet$, $(\clr\word')^\omega$ would be accepted from all states in $\eqClass{\atmtnState}$ in $\dba$.
		Hence, there exists $\atmtnState'\in\eqClass{\atmtnState}$ such that $(\atmtnState', \clr)\notin\accepSet$.
		We conclude that $(\eqClass{\atmtnState}, \clr) \notin \accepSet_\prefEq$.
	\end{proof}

	\paragraph{Progress-consistency.}
	We assume that we have already checked that $\wc$ is recognizable by a B\"uchi automaton built on top of $\prefClass$, and that we know the (total) ordering of the states.
	We show that checking progress-consistency, under these two hypotheses, can be done in polynomial time.
	We prove a lemma reducing the search for words witnessing that $\wc$ is not progress-consistent to a problem computationally easier to investigate.

	\begin{lem} \label{lem:progConsComplexity}
		We assume that $\dba$ is built on top of the prefix-classifier $\prefClass$ and that the prefix preorder of $\wc$ is total.
		Then, $\wc$ is progress-consistent if and only if for all $\atmtnState, \atmtnState'\in\atmtnStates$ with $\atmtnState \strictPrefOrd_\dba \atmtnState'$,
		\[\{\word\in\colors^+ \mid \atmtnUpdWord(\atmtnState, \word) = \atmtnState'\} \cap \safeCycles{\atmtnState'} = \emptyset.\]
	\end{lem}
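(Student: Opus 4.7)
The plan is to prove both implications by translating between the word-level statement of progress-consistency and the state-level statement on the right-hand side, using that $\dba$ being built on top of $\prefClass$ gives a bijection between states of $\dba$ and equivalence classes of $\prefEq_\wc$, so that for every $\word\in\colors^*$, the state $\atmtnUpdWord(\atmtnInit,\word)$ encodes the class $\eqClass{\word}$; in particular, $\word_1 \strictPrefOrd \word_1\word_2$ if and only if $\atmtnUpdWord(\atmtnInit,\word_1) \strictPrefOrd_\dba \atmtnUpdWord(\atmtnInit,\word_1\word_2)$.

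For the ``only if'' direction (contrapositive), assume there exist $\atmtnState\strictPrefOrd_\dba\atmtnState'$ and $\word\in\colors^+$ with $\atmtnUpdWord(\atmtnState,\word)=\atmtnState'$ and $\word\in\safeCycles{\atmtnState'}$. Pick any $\word_1\in\colors^*$ with $\atmtnUpdWord(\atmtnInit,\word_1)=\atmtnState$ (such $\word_1$ exists since all states are reachable). Then $\atmtnUpdWord(\atmtnInit,\word_1\word)=\atmtnState'\strictPrefOrd_\dba\atmtnState$'s target, yielding $\word_1\strictPrefOrd \word_1\word$. But reading $\word^\omega$ from $\atmtnState'$ stays at $\atmtnState'$ and is $\accepSet$-free (as $\word\in\safeCycles{\atmtnState'}$), so $\word_1\word^\omega\notin\wc$, contradicting progress-consistency.

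For the ``if'' direction (also contrapositive), assume $\wc$ is not progress-consistent and pick witnesses $\word_1\in\colors^*$, $\word_2\in\colors^+$ with $\word_1\strictPrefOrd \word_1\word_2$ and $\word_1(\word_2)^\omega\notin\wc$. Let $\atmtnState_i = \atmtnUpdWord(\atmtnInit,\word_1(\word_2)^i)$, so $\atmtnState_0\strictPrefOrd_\dba \atmtnState_1$. By Lemma~\ref{lem:monotonTrans} applied iteratively with the word $\word_2$, the sequence $(\atmtnState_i)_{i\ge 0}$ is non-decreasing for $\prefOrd_\dba$; since the prefix preorder is total and $\atmtnStates$ is finite, this sequence stabilizes at some state $\atmtnState^*$ from some index $N\ge 1$ on. Thus $\atmtnUpdWord(\atmtnState^*,\word_2)=\atmtnState^*$. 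Moreover, since $\word_1(\word_2)^\omega\notin\wc$, the run of $\dba$ eventually sees no more B\"uchi transitions, so the cycles $\atmtnState^*\xrightarrow{\word_2}\atmtnState^*$ (from index $N$ onward) are $\accepSet$-free, hence $\word_2\in\safeCycles{\atmtnState^*}$.

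Now take $\word=(\word_2)^N\in\colors^+$ and the two states $\atmtnState_0\strictPrefOrd_\dba\atmtnState^*$ (strict because $\atmtnState_0\strictPrefOrd_\dba\atmtnState_1\prefOrd_\dba\atmtnState^*$). We have $\atmtnUpdWord(\atmtnState_0,\word)=\atmtnState^*$, and iterating the safe cycle from $\atmtnState^*$ yields $\word\in\safeCycles{\atmtnState^*}$, which contradicts the right-hand side of the equivalence. The main subtlety is the stabilization argument: one must use the total order together with finiteness of $\atmtnStates$ and monotonicity of transitions (Lemma~\ref{lem:monotonTrans}) to reduce an infinite repetition of $\word_2$ to a single safe cycle from some state strictly above $\atmtnState_0$.
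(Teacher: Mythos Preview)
Your proof is correct and follows essentially the same approach as the paper: both directions proceed by contrapositive, and the substantive direction uses the stabilization of the sequence $\atmtnState_i = \atmtnUpdWord(\atmtnInit,\word_1(\word_2)^i)$ under Lemma~\ref{lem:monotonTrans} together with finiteness of $\atmtnStates$. The only cosmetic difference is in the final witness: the paper takes the pair $(\atmtnState_n,\atmtnState_{n+1})$ of consecutive states just before and at stabilization together with the original word $\word_2$, whereas you take the pair $(\atmtnState_0,\atmtnState^*)$ together with the power $(\word_2)^N$; both choices work for the same reason.
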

	\begin{proof}
		We prove the left-to-right implication by showing the contrapositive.
		We assume that there exist $\atmtnState, \atmtnState'\in\atmtnStates$ with $\atmtnState \strictPrefOrd_\dba \atmtnState'$ and $\word\in\colors^+$ such that $\atmtnUpdWord(\atmtnState, \word) = \atmtnState'$ and $\word\in\safeCycles{\atmtnState'}$.
		Let $\word_\atmtnState\in\colors^*$ be a word such that $\atmtnUpdWord(\atmtnInit, \word_\atmtnState) = \atmtnState$.
		We have that $\word_\atmtnState \strictPrefOrd \word_\atmtnState \word$, but $\word_\atmtnState \word^\omega$ is not accepted by $\dba$ as $\word$ is a cycle on $\atmtnState'$ that does not contain any B\"uchi transition.
		Hence, $\wc$ is not progress-consistent.

		For the right-to-left implication, we again prove the contrapositive.
		We assume that $\wc$ is not progress-consistent.
		Thus, there exist $\word'\in\colors^*$ and $\word\in\colors^+$ such that $\word'\strictPrefOrd \word'\word$ and $\word'\word^\omega \notin \wc$.
		Let $\atmtnState_1 = \atmtnUpdWord(\atmtnInit, \word')$ and $\atmtnState_2 = \atmtnUpdWord(\atmtnState_1, \word)$ --- we have $\atmtnState_1 \strictPrefOrd \atmtnState_2$.
		As $\atmtnState_1 \strictPrefOrd \atmtnState_2$, by Lemma~\ref{lem:monotonTrans}, we have $\atmtnUpdWord(\atmtnState_1, \word) = \atmtnState_2 \prefOrd \atmtnUpdWord(\atmtnState_2, \word)$.
		We distinguish two cases, using the fact that there is exactly one state per equivalence class of $\prefEq_\dba$.
		We represent what happens in Figure~\ref{fig:progConsComplexity}.
		\begin{itemize}
			\item If $\atmtnState_2 = \atmtnUpdWord(\atmtnState_2, \word)$, we then have that $\word\in\safeCycles{\atmtnState_2}$, and we have what we want with $\atmtnState = \atmtnState_1$ and $\atmtnState' = \atmtnState_2$.
			\item If not, we have that $\atmtnState_2 \strictPrefOrd \atmtnUpdWord(\atmtnState_2, \word)$.
			Let $\atmtnState_3 = \atmtnUpdWord(\atmtnState_2, \word)$.
			We can repeat the argument on $\atmtnState_2$ and $\atmtnState_3$: either $\word\in\safeCycles{\atmtnState_3}$ and we are done, or $\atmtnState_3 \strictPrefOrd \atmtnUpdWord(\atmtnState_3, \word)$.
			As there are finitely many states, this process necessarily ends with two states $\atmtnState = \atmtnState_n$ and $\atmtnState' = \atmtnState_{n+1}$ such that $\atmtnUpdWord(\atmtnState, \word) = \atmtnState'$ and $\word\in\safeCycles{\atmtnState'}$.
			\qedhere
		\end{itemize}
	\end{proof}
	\begin{figure}[tbh]
		\centering
		\begin{tikzpicture}[every node/.style={font=\small,inner sep=1pt}]
			\draw (0,0) node[diamant] (q1) {$\atmtnState_1$};
			\draw ($(q1)+(2,0)$) node[diamant] (q2) {$\atmtnState_2$};
			\draw ($(q2)+(2,0)$) node[diamant] (q3) {$\atmtnState_3$};
			\draw ($(q3)+(2,0)$) node[diamant,draw=none] (q4) {$\cdots$};
			\draw ($(q4)+(2,0)$) node[diamant] (qn) {$\atmtnState_n$};
			\draw ($(qn)+(2,0)$) node[diamant] (qn1) {$\atmtnState_{n+1}$};
			\draw ($(q1)!0.5!(q2)$) node[] () {$\strictPrefOrd$};
			\draw ($(q2)!0.5!(q3)$) node[] () {$\strictPrefOrd$};
			\draw ($(q3)!0.5!(q4)$) node[] () {$\strictPrefOrd$};
			\draw ($(q4)!0.5!(qn)$) node[] () {$\strictPrefOrd$};
			\draw ($(qn)!0.5!(qn1)$) node[] () {$\strictPrefOrd$};
			\draw (q1) edge[-latex',decorate,out=30,in=150] node[above=4pt] {$\word$} (q2);
			\draw (q2) edge[-latex',decorate,out=30,in=150] node[above=4pt] {$\word$} (q3);
			\draw (q3) edge[-latex',decorate,out=30,in=150] node[above=4pt] {$\word$} (q4);
			\draw (q4) edge[-latex',decorate,out=30,in=150] node[above=4pt] {$\word$} (qn);
			\draw (qn) edge[-latex',decorate,out=30,in=150] node[above=4pt] {$\word$} (qn1);
			\draw (qn1) edge[-latex',decorate,out=-30,in=30,distance=0.8cm] node[right=4pt] {$\word$} (qn1);
		\end{tikzpicture}
		\caption{Situation in the proof of Lemma~\ref{lem:progConsComplexity}.}
		\label{fig:progConsComplexity}
	\end{figure}

	Notice that for each pair of states $\atmtnState, \atmtnState'\in\atmtnStates$, the sets $\{\word\in\colors^+ \mid \atmtnUpd(\atmtnState, \word) = \atmtnState'\}$ and $\safeCycles{\atmtnState'}$ are both regular languages recognized by deterministic finite automata with at most $\card{\atmtnStates}$ states.
	The emptiness of their intersection can be decided in time $\bigO(\card{\colors}\cdot\card{\atmtnStates}^2)$ (by solving a reachability problem in the product of the two automata, using an argument similar to but easier than Lemma~\ref{lem:containmentComplexity})~\cite{RS59}.
	Thanks to Lemma~\ref{lem:progConsComplexity}, we can therefore decide whether $\dba$ is progress-consistent in time $\bigO(\card{\atmtnStates}^2\cdot(\card{\colors}\cdot\card{\atmtnStates}^2)) = \bigO(\card{\colors}\cdot\card{\atmtnStates}^4)$:
	for all $\card{\atmtnStates}^2$ pairs of states $\atmtnState, \atmtnState'\in\atmtnStates$, if $\atmtnState \strictPrefOrd \atmtnState'$, we test the emptiness of the intersection of these two regular languages.

	\paragraph{Complexity wrap-up.}
	By checking the three conditions as explained and in this order, the time complexities are respectively $\bigO(\card{\colors}\cdot\card{\atmtnStates}^4)$, $\bigO(\card{\colors}\cdot\card{\atmtnStates}^2)$, and $\bigO(\card{\colors}\cdot\card{\atmtnStates}^4)$.
	This yields a time complexity of $\bigO(\card{\colors}\cdot\card{\atmtnStates}^4)$ for the whole algorithm.

	\section{Necessity of the conditions}
	\label{sec:necessary}
	We prove that each of the three conditions introduced in Section~\ref{sec:three-conditions} is necessary for half-positionality over finite one-player arenas of DBA-recognizable objectives.
	Each condition gets its own devoted subsection.
	For the first two conditions (having a total prefix preorder and progress-consistency), we also show for completeness that they are necessary for half-positionality of general objectives over countably infinite one-player arenas, and necessary for half-positionality of $\omega$-regular objectives over finite one-player arenas.
	This distinction is worthwhile, as there exist half-positional objectives that are not $\omega$-regular (see, e.g., \emph{finitary B\"uchi} objectives~\cite{CF13}).

	We start with a well-known lemma about $\omega$-regular objectives: if two $\omega$-regular objectives are not equal, then they are distinguished by an ultimately periodic word.
	Ultimately periodic words can easily be finitely represented, and this lemma will be used throughout this section to force some behaviors to appear in \emph{finite} arenas.
	\begin{lem} \label{lem:ultPer}
		Let $\wc_1$, $\wc_2$ be two $\omega$-regular objectives.
		If $\wc_1 \neq \wc_2$, then there exist $\word_1\in\colors^*$ and $\word_2\in\colors^+$ such that either $\word_1(\word_2)^\omega \in \wc_1 \setminus \wc_2$, or $\word_1(\word_2)^\omega \in \wc_2 \setminus \wc_1$.
		In particular, if $\wc_1 \not\subseteq \wc_2$, then there exist $\word_1\in\colors^*$ and $\word_2\in\colors^+$ such that $\word_1(\word_2)^\omega \in \wc_1\setminus \wc_2$.
	\end{lem}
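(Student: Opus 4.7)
The plan is to reduce the statement to the classical fact that every non-empty $\omega$-regular objective contains an ultimately periodic word, and then invoke closure of the class of $\omega$-regular objectives under Boolean operations.

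First I would argue that, since $\wc_1 \neq \wc_2$, at least one of the two differences $\wc_1 \setminus \wc_2$ and $\wc_2 \setminus \wc_1$ is non-empty. By the well-known closure of $\omega$-regular objectives under complementation (which is a theorem about NBA, provable via determinization into a parity automaton, or more directly via B\"uchi's complementation construction) and under finite intersection (a straightforward product construction on NBA), both $\wc_1 \setminus \wc_2 = \wc_1 \cap \comp{\wc_2}$ and $\wc_2 \setminus \wc_1 = \wc_2 \cap \comp{\wc_1}$ are $\omega$-regular.

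The heart of the proof is then to show that any non-empty $\omega$-regular objective $\wc$ contains a word of the form $\word_1(\word_2)^\omega$ with $\word_1\in\colors^*$ and $\word_2\in\colors^+$. Fix an NBA $\nba = \nbaFull$ recognizing $\wc$ and pick some word in $\wc$; it admits an accepting run, which must traverse some transition $(\atmtnState, \clr, \atmtnState') \in \accepSet$ infinitely often. In particular, there is a state $\atmtnState$ visited infinitely often and such that some B\"uchi transition occurs infinitely often between two visits to $\atmtnState$. By the pigeonhole principle, I can extract a finite prefix run from an initial state to $\atmtnState$, labeled by some word $\word_1$, followed by a non-empty finite cycle on $\atmtnState$ that visits a B\"uchi transition, labeled by some word $\word_2 \in \colors^+$. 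Concatenating the prefix with infinitely many copies of this cycle yields an accepting run on $\word_1(\word_2)^\omega$, so $\word_1(\word_2)^\omega \in \wc$. Applying this to whichever of $\wc_1\setminus\wc_2$ or $\wc_2\setminus\wc_1$ is non-empty gives the first claim, and applying it specifically to $\wc_1\setminus\wc_2$ under the hypothesis $\wc_1 \not\subseteq \wc_2$ gives the second.

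There is no real obstacle here; the only point that needs care is to justify the existence of a B\"uchi transition inside the extracted loop (rather than just a repeated state), which is why I insist on first picking an accepting B\"uchi transition visited infinitely often and only then applying pigeonhole to one of its endpoints.
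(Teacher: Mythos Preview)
Your proof is correct and follows essentially the same route as the paper: both reduce to closure of $\omega$-regular objectives under Boolean operations together with the classical fact that every non-empty $\omega$-regular objective contains an ultimately periodic word. The paper simply cites McNaughton's theorem for this fact and then derives the second statement by instantiating the first with the pair $(\wc_1\setminus\wc_2,\emptyset)$, whereas you give a self-contained pigeonhole argument on an accepting NBA run; this is a presentational difference, not a different strategy.
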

	\begin{proof}
		The first statement is standard and follows from McNaughton's theorem~\cite{McN66}.
		For the second statement, if $\wc_1 \not\subseteq \wc_2$, then $\wc_1 \setminus \wc_2 \neq \emptyset$.
		Objective $\wc_1 \setminus \wc_2\neq\emptyset$ is $\omega$-regular (as $\omega$-regular objectives are closed by complement and intersection) and so is $\emptyset$.
		By the first statement, we take $\word_1\in\colors^*$ and $\word_2\in\colors^+$ such that either $\word_1(\word_2)^\omega \in (\wc_1 \setminus \wc_2) \setminus \emptyset$, or $\word_1(\word_2)^\omega \in \emptyset \setminus (\wc_1 \setminus \wc_2)$.
		As $\emptyset \setminus (\wc_1 \setminus \wc_2) = \emptyset$, we have $\word_1(\word_2)^\omega \in (\wc_1 \setminus \wc_2) \setminus \emptyset = \wc_1 \setminus \wc_2$.
	\end{proof}

	\subsection{Total prefix preorder} \label{sec:totalNecessary}
	Having a total prefix preorder is necessary in general for half-positionality over countably infinite arenas, and even over finite arenas for $\omega$-regular objectives.

	\begin{prop} \label{prop:orderedNecessary}
		Let $\wc\subseteq \colors^\omega$ be an objective.
		If $\wc$ is half-positional over countably infinite one-player arenas, then its prefix preorder is total.
		If $\wc$ is $\omega$-regular and half-positional over finite one-player arenas, then its prefix preorder is total.
	\end{prop}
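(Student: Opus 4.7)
The plan is to prove both claims by the same construction, differing only in whether the witnessing infinite continuations are arbitrary (countably infinite case) or ultimately periodic (finite $\omega$-regular case).

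Suppose $\wc$ does not have a total prefix preorder. Then there exist $\word_1, \word_2 \in \colors^*$ such that $\word_1 \not\prefOrd \word_2$ and $\word_2 \not\prefOrd \word_1$, meaning there are infinite words $u \in \inverse{\word_1}\wc \setminus \inverse{\word_2}\wc$ and $v \in \inverse{\word_2}\wc \setminus \inverse{\word_1}\wc$. I would then build a one-player arena of $\Pone$ consisting of two initial vertices $\s_1, \s_2$, a shared ``hub'' vertex $\s$ reached from $\s_1$ by a path labeled $\word_1$ and from $\s_2$ by a path labeled $\word_2$, and two outgoing edges from $\s$ leading respectively into an infinite path labeled $u$ and an infinite path labeled $v$. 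From $\s_1$, $\Pone$ has a winning strategy (route through $\s$ to the $u$-branch, yielding $\word_1 u \in \wc$), and symmetrically from $\s_2$ (take the $v$-branch, yielding $\word_2 v \in \wc$), so both vertices are winning. However, any positional strategy commits to a single choice at $\s$: if it picks the $u$-branch then $\word_2 u \notin \wc$ from $\s_2$, and if it picks the $v$-branch then $\word_1 v \notin \wc$ from $\s_1$. Hence no positional strategy is optimal, contradicting half-positionality.

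For the first statement, the continuations $u, v$ are arbitrary infinite words, so the two branches out of $\s$ are countably infinite simple paths, yielding a countably infinite one-player arena. For the second statement, when $\wc$ is $\omega$-regular, the sets $\inverse{\word_1}\wc$ and $\inverse{\word_2}\wc$ are also $\omega$-regular (change the initial state of any automaton recognizing $\wc$), and so are the differences $\inverse{\word_1}\wc \setminus \inverse{\word_2}\wc$ and $\inverse{\word_2}\wc \setminus \inverse{\word_1}\wc$ by closure under intersection and complement. Applying Lemma~\ref{lem:ultPer} to each nonempty difference, I may choose $u$ and $v$ to be ultimately periodic, i.e.\ $u = u_1 (u_2)^\omega$ and $v = v_1 (v_2)^\omega$ with $u_2, v_2 \in \colors^+$. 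This lets me replace the two infinite branches by finite ``lasso'' gadgets (a finite path followed by a self-loop on the period), producing a finite one-player arena, with the same argument refuting half-positionality.

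No step looks delicate; the only point requiring care is the $\omega$-regularity argument that makes the arena finite, which is exactly where Lemma~\ref{lem:ultPer} is invoked. The construction itself is a standard ``fork'' witness for non-positionality.
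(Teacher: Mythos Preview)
Your proof is correct and follows essentially the same approach as the paper: a fork arena with two entry paths labeled $\word_1,\word_2$ into a shared hub, and two outgoing branches for the incomparable continuations, with Lemma~\ref{lem:ultPer} invoked in the $\omega$-regular case to make the branches finite lassos.
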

	\begin{proof}
		By contrapositive, we assume that the prefix preorder of $\wc$ is not total.
		Then, there exist two finite words $\word_1, \word_2\in\colors^*$ such that $\word_1 \not\prefOrd \word_2$ and $\word_2 \not\prefOrd \word_1$.
		We can find two infinite continuations $\word_1', \word_2'\in\colors^\omega$ such that $\word_1\word_1' \in \wc$, $\word_2\word_1' \notin \wc$, $\word_2\word_2' \in \wc$, and $\word_1\word_2' \notin \wc$.
		Using these four words, we build a countably infinite one-player arena depicted in Figure~\ref{fig:ordered} (left) for which $\Pone$ has no positional optimal strategy.
		Indeed, if the game started with $\word_1$, $\Pone$ needs to reply with $\word_1'$ in $\s_3$ to win, but if the game started with $\word_2$, $\Pone$ needs to reply with $\word_2'$ in $\s_3$ to win.

		Moreover, if $\wc$ is $\omega$-regular, so are $\inverse{\word_1}\wc$ and $\inverse{\word_2}\wc$.
		By Lemma~\ref{lem:ultPer}, we can therefore assume w.l.o.g.\ that $\word_1' = x_1(y_1)^\omega$ and $\word_2' = x_2(y_2)^\omega$ are ultimately periodic, and we can carry out similar arguments with the finite arena depicted in Figure~\ref{fig:ordered} (right).
	\end{proof}
	\begin{figure}[tbh]
		\centering
		\begin{minipage}{0.5\columnwidth}
			\centering
			\begin{tikzpicture}[every node/.style={font=\small,inner sep=1pt}]
				\draw (0,0) node[rond] (s3) {$\s_3$};
				\draw ($(s3)+(-2,0.8)$) node[rond] (s1) {$\s_1$};
				\draw ($(s3)+(-2,-0.8)$) node[rond] (s2) {$\s_2$};
				\draw ($(s3)+(1.8,0.7)$) node[rond,draw=none] (s4) {$\cdots$};
				\draw ($(s3)+(1.8,-0.7)$) node[rond,draw=none] (s5) {$\cdots$};
				\draw (s1) edge[-latex',decorate] node[above=4pt,xshift=1pt] {$\word_1$} (s3);
				\draw (s2) edge[-latex',decorate] node[below=4pt,xshift=1pt] {$\word_2$} (s3);
				\draw (s3) edge[-latex',decorate,bend left=20pt] node[above=4pt,xshift=-1pt] {$\word_1'$} (s4);
				\draw (s3) edge[-latex',decorate,bend right=20pt] node[below=4pt,xshift=-1pt] {$\word_2'$} (s5);
			\end{tikzpicture}
		\end{minipage}%
		\begin{minipage}{0.5\columnwidth}
			\centering
			\begin{tikzpicture}[every node/.style={font=\small,inner sep=1pt}]
				\draw (0,0) node[rond] (s3) {$\s_3$};
				\draw ($(s3)+(-2,0.8)$) node[rond] (s1) {$\s_1$};
				\draw ($(s3)+(-2,-0.8)$) node[rond] (s2) {$\s_2$};
				\draw ($(s3)+(1.8,0.8)$) node[rond] (s4) {$\s_4$};
				\draw ($(s3)+(1.8,-0.8)$) node[rond] (s5) {$\s_5$};
				\draw (s1) edge[-latex',decorate] node[above=4pt,xshift=1pt] {$\word_1$} (s3);
				\draw (s2) edge[-latex',decorate] node[below=4pt,xshift=1pt] {$\word_2$} (s3);
				\draw (s3) edge[-latex',decorate] node[above=4pt,xshift=-1pt] {$x_1$} (s4);
				\draw (s3) edge[-latex',decorate] node[below=4pt,xshift=-1pt] {$x_2$} (s5);
				\draw (s4) edge[-latex',decorate,in=30,out=-30,distance=0.8cm] node[right=4pt] {$y_1$} (s4);
				\draw (s5) edge[-latex',decorate,in=30,out=-30,distance=0.8cm] node[right=4pt] {$y_2$} (s5);
			\end{tikzpicture}
		\end{minipage}%
		\caption{Arenas in which $\Pone$ cannot play optimally with a positional strategy used in the proof of Proposition~\ref{prop:orderedNecessary}.
			The arena on the right is used for the $\omega$-regular case.}
		\label{fig:ordered}
	\end{figure}

	\subsection{Progress-consistency} \label{sec:progConsNecessary}
	Progress-consistency is necessary for half-positionality over countably infinite arenas in general, and even over finite arenas for $\omega$-regular objectives.

	\begin{prop} \label{prop:progCons}
		Let $\wc\subseteq \colors^\omega$ be an objective.
		If $\wc$ is half-positional over countably infinite one-player arenas, then it is progress-consistent.
		If $\wc$ is $\omega$-regular and half-positional even over finite one-player arenas, then it is progress-consistent.
	\end{prop}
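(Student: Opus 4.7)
The plan is to prove the contrapositive, mirroring the structure of the proof of Proposition~\ref{prop:orderedNecessary}. Assume $\wc$ is not progress-consistent: there exist $\word_1\in\colors^*$ and $\word_2\in\colors^+$ with $\word_1 \strictPrefOrd \word_1\word_2$ and $\word_1(\word_2)^\omega \notin \wc$. By definition of $\strictPrefOrd$, $\inverse{\word_1}\wc \subsetneq \inverse{\word_1\word_2}\wc$, so we can pick an infinite word $\word'\in\colors^\omega$ with $\word_1\word_2\word'\in\wc$ but $\word_1\word'\notin\wc$.

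I would then construct a countably infinite one-player arena $\arena$ as follows. Start with a vertex $\s_1$ from which a forced path labeled by $\word_1$ leads to a vertex $\s_2$ controlled by $\Pone$. From $\s_2$, $\Pone$ has two choices: either traverse a cycle labeled $\word_2$ that returns to $\s_2$, or leave through an infinite ray labeled $\word'$. Then:
\begin{itemize}
\item $\Pone$ has a winning strategy from $\s_1$ that takes the $\word_2$-cycle exactly once and then exits, producing the word $\word_1\word_2\word'\in\wc$.
\item Any positional strategy of $\Pone$ either always loops at $\s_2$, producing $\word_1(\word_2)^\omega\notin\wc$, or immediately exits, producing $\word_1\word'\notin\wc$.
\end{itemize}
Thus no positional strategy is optimal, contradicting half-positionality over countably infinite one-player arenas.

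For the $\omega$-regular case I would refine the choice of $\word'$ to be ultimately periodic, so that the exit ray collapses to a finite arena. Since $\wc$ is $\omega$-regular, the sets $\inverse{\word_1}\wc$ and $\inverse{\word_1\word_2}\wc$ are $\omega$-regular, and so is their set-theoretic difference $\inverse{\word_1\word_2}\wc\setminus\inverse{\word_1}\wc$. As this difference is nonempty, Lemma~\ref{lem:ultPer} (second statement) yields an ultimately periodic word $\word' = x(y)^\omega$ witnessing the strict inclusion. The arena from the previous paragraph then becomes finite by replacing the infinite ray with a finite path $\s_2 \grTr{x} \s_3$ followed by a self-loop $\s_3 \grTr{y} \s_3$, and the same argument applies.

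There is no substantial obstacle here: the construction is direct and parallels the argument for Proposition~\ref{prop:orderedNecessary}. The only care needed is the verification that the three relevant words ($\word_1(\word_2)^\omega$ losing, $\word_1\word'$ losing, $\word_1\word_2\word'$ winning) all follow from the failure of progress-consistency and the chosen witness $\word'$, and that reducing $\word'$ to an ultimately periodic word in the $\omega$-regular case is justified by closure of $\omega$-regular objectives under Boolean operations together with Lemma~\ref{lem:ultPer}.
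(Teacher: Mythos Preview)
Your proposal is correct and follows essentially the same approach as the paper: a contrapositive argument using the same arena (a forced $\word_1$-path into a vertex with a $\word_2$-cycle and a $\word'$-exit), with the $\omega$-regular case handled by replacing $\word'$ with an ultimately periodic witness via Lemma~\ref{lem:ultPer}. Your justification for the ultimately periodic $\word'$ is slightly more explicit than the paper's, but the argument is the same.
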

	\begin{proof}
		By contrapositive, we assume that $\wc$ is not progress-consistent.
		Then there exist $\word_1 \in \colors^*$ and $\word_2 \in \colors^+$ such that $\word_1 \strictPrefOrd \word_1\word_2$, but $\word_1(\word_2)^\omega \notin \wc$.
		As $\word_1 \strictPrefOrd \word_1\word_2$, there exists an infinite continuation $\word'\in\colors^\omega$ such that $\word_1\word' \notin \wc$ and $\word_1\word_2\word' \in \wc$.
		Using these three words, we build a countably infinite one-player arena depicted in Figure~\ref{fig:progCons} (left).
		In this arena, from vertex $\s_1$, a positional strategy can only achieve words $\word_1(\word_2)^\omega$ or $\word_1\word'$, which are both losing.
		However, there is a (non-positional) winning strategy achieving word~$\word_1\word_2\word'$.

		If $\wc$ is additionally $\omega$-regular, using Lemma~\ref{lem:ultPer}, we can assume w.l.o.g.\ that $\word' = xy^\omega$ is ultimately periodic, and we can carry out similar arguments with the finite arena depicted in Figure~\ref{fig:progCons} (right).
	\end{proof}
	\begin{figure}[tbh]
		\centering
		\begin{minipage}{0.5\columnwidth}
			\centering
			\begin{tikzpicture}[every node/.style={font=\small,inner sep=1pt}]
				\draw (0,0) node[rond] (s1) {$\s_1$};
				\draw ($(s1)+(2,0)$) node[rond] (s2) {$\s_2$};
				\draw ($(s2)+(2,0)$) node[rond,draw=none] (s3) {$\cdots$};
				\draw (s1) edge[-latex',decorate] node[above=4pt] {$\word_1$} (s2);
				\draw (s2) edge[-latex',decorate,out=60,in=120,distance=0.8cm] node[above=4pt] {$\word_2$} (s2);
				\draw (s2) edge[-latex',decorate] node[above=4pt] {$\word'$} (s3);
			\end{tikzpicture}
		\end{minipage}%
		\begin{minipage}{0.5\columnwidth}
			\centering
			\begin{tikzpicture}[every node/.style={font=\small,inner sep=1pt}]
				\draw (0,0) node[rond] (s1) {$\s_1$};
				\draw ($(s1)+(2,0)$) node[rond] (s2) {$\s_2$};
				\draw ($(s2)+(2,0)$) node[rond] (s3) {$\s_3$};
				\draw (s1) edge[-latex',decorate] node[above=4pt] {$\word_1$} (s2);
				\draw (s2) edge[-latex',decorate,out=60,in=120,distance=0.8cm] node[above=4pt] {$\word_2$} (s2);
				\draw (s2) edge[-latex',decorate] node[above=4pt] {$x$} (s3);
				\draw (s3) edge[-latex',decorate,out=-30,in=30,distance=0.8cm] node[right=4pt] {$y$} (s3);
			\end{tikzpicture}
		\end{minipage}%
		\caption{Arenas in which $\Pone$ cannot play optimally with a positional strategy used in the proof of Proposition~\ref{prop:progCons}.
			The arena on the right is used in the $\omega$-regular case.}
		\label{fig:progCons}
	\end{figure}

	\subsection{Recognizability by the prefix-classifier} \label{sec:DBAnecessary}
	We now prove that for a DBA-recognizable objective, being recognized by a B\"uchi automaton built on top of its prefix-classifier $\minStateAtmtn$ is necessary for half-positionality.

	\begin{prop} \label{prop:necessaryDBA}
		Let $\wc \subseteq \colors^\omega$ be a DBA-recognizable objective that is half-positional over finite one-player arenas.
		Then, $\wc$ is recognized by a B\"uchi automaton built on top of~$\minStateAtmtn$.
	\end{prop}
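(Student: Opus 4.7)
The plan is to prove the contrapositive: assuming $\wc$ is DBA-recognizable but not recognized by any B\"uchi automaton built on top of the prefix-classifier $\minStateAtmtn$, I will construct a finite one-player arena witnessing that $\wc$ is not half-positional. First, take any DBA for $\wc$ and saturate it via Lemma~\ref{lem:saturatedAutomaton} to obtain a saturated DBA $\dba=\dbaFull$. By Lemma~\ref{lem:recoComp}, if $\wc$ were recognizable by a DBA on top of $\minStateAtmtn$, then the canonical choice $\accepSet_\prefEq = \{(\eqClass{\atmtnState},\clr)\in\minStateStates\times\colors : \forall\atmtnState'\in\eqClass{\atmtnState},\,(\atmtnState',\clr)\in\accepSet\}$ would yield such an automaton; so under my hypothesis the DBA $\dba'=(\minStateStates,\colors,\minStateInit,\minStateUpd,\accepSet_\prefEq)$ satisfies $\atmtnLang{\dba'}\neq\wc$. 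Since any B\"uchi transition of $\dba'$ corresponds, by definition of $\accepSet_\prefEq$, to a B\"uchi transition of $\dba$ at the same position of the projected run, $\atmtnLang{\dba'}\subseteq\wc$, hence $\atmtnLang{\dba'}\subsetneq\wc$.

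Second, using Lemma~\ref{lem:ultPer}, I pick an ultimately periodic word $\word_0\word_1^\omega\in\wc\setminus\atmtnLang{\dba'}$ and, after replacing $\word_1$ by a suitable power and lengthening $\word_0$, assume the run of $\dba$ on this word enters a self-loop on $\word_1$ at some reachable state $\atmtnState_\circ$ with $\atmtnUpdWord(\atmtnState_\circ,\word_1)=\atmtnState_\circ$, whose $\word_1$-cycle visits $\accepSet$, while the corresponding $\word_1$-cycle in $\dba'$ from $\eqClass{\atmtnState_\circ}$ is entirely $\accepSet_\prefEq$-free. The latter means: at each position of $\word_1$ along the cycle, there exists a reachable equivalent state of $\dba$ whose transition on the next letter is not in $\accepSet$. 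Combining this observation with saturation (Lemma~\ref{lem:safeComp}) and a finite-state stabilization argument inside the equivalence class $\eqClass{\atmtnState_\circ}$ (tracking how $\prefEq$ is preserved under extensions via Lemma~\ref{lem:alwaysEquiv}), I extract two reachable equivalent states $\atmtnState_A\prefEq\atmtnState_B$ together with words $\word_A\in\safeCycles{\atmtnState_A}$ and $\word_B\in\safeCycles{\atmtnState_B}$ such that the ``crossing'' runs $\dbaRun{\atmtnState_A,\word_B}$ and $\dbaRun{\atmtnState_B,\word_A}$ both visit $\accepSet$, and $(\word_A\word_B)^\omega$ is accepted from $\atmtnState_A$ (equivalently, by $\prefEq$-equivalence, from $\atmtnState_B$), whereas $\word_A^\omega$ and $\word_B^\omega$ are rejected from $\atmtnState_A$.

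Third, with those ingredients, I build a finite one-player arena. Let $\word_0^A\in\colors^*$ satisfy $\atmtnUpdWord(\atmtnInit,\word_0^A)=\atmtnState_A$. Starting from an initial $\Pone$-vertex, a linear stem of $\Pone$-vertices reads $\word_0^A$ letter by letter and reaches a central $\Pone$-vertex $\s$ from which two self-loops are implemented, by chains of intermediate $\Pone$-vertices, producing the color sequences $\word_A$ and $\word_B$. Any positional strategy at $\s$ commits to one of the two loops and so produces either $\word_0^A\word_A^\omega$ or $\word_0^A\word_B^\omega$, both rejected by $\dba$; but the non-positional alternating strategy produces $\word_0^A(\word_A\word_B)^\omega\in\wc$. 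Hence $\Pone$ has a winning strategy from the initial vertex but no positional one, contradicting half-positionality over finite one-player arenas.

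The main obstacle is the second step: turning the single-letter mismatch between equivalent reachable states provided by the $\accepSet_\prefEq$-free $\dba'$-cycle into two $\accepSet$-free self-cycles whose alternating iteration is accepted. Proving existence of $(\atmtnState_A,\word_A)$ and $(\atmtnState_B,\word_B)$ with the right crossing property requires iterating saturation (Lemma~\ref{lem:safeComp}) and carefully stabilizing the orbit of $\atmtnState_A$ under $(\word_A\word_B)$-iteration, while using that $\prefEq$-equivalent states share all winning continuations to force the crossings to visit $\accepSet$---otherwise an entirely $\accepSet$-free alternating suffix accessible from $\atmtnState_A$ would, together with $\atmtnState_A\prefEq\atmtnState_\circ$ and the $\accepSet$-visiting $\word_1$-cycle at $\atmtnState_\circ$, contradict the structural setup obtained in step one. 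The underlying ideas follow the canonicity analysis for good-for-games coB\"uchi automata~\cite{AbuRadiKupferman19Minimizing,AbuRadiKupferman20Canonicity}.
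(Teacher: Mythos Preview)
Your overall strategy---prove the contrapositive by exhibiting a finite one-player arena with a winning but no positional winning strategy---is a reasonable alternative to the paper's constructive route, and your first step is sound: from the assumption that $\wc$ is not recognizable on top of $\minStateAtmtn$ you correctly deduce $\atmtnLang{\dba'}\subsetneq\wc$ for the quotient automaton with $\accepSet_\prefEq$, and the inclusion $\atmtnLang{\dba'}\subseteq\wc$ indeed holds because the $\dba'$-run always sits at the $\prefEq$-class of the $\dba$-run.

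The genuine gap is your second step. You assert the existence of equivalent states $\atmtnState_A\prefEq\atmtnState_B$ and words $\word_A\in\safeCycles{\atmtnState_A}$, $\word_B\in\safeCycles{\atmtnState_B}$ such that $(\word_A\word_B)^\omega\in\inverse{\atmtnState_A}\wc$, but you do not prove it; you only gesture at ``saturation and a finite-state stabilization argument''. Two concrete difficulties are hidden here. First, the fact that the $\dba'$-cycle on $\word_1$ is $\accepSet_\prefEq$-free only gives you, at each position, \emph{some} equivalent state with a non-B\"uchi outgoing transition; collapsing this family of witnesses to a single pair $(\atmtnState_A,\atmtnState_B)$ with the right cycle structure is exactly the hard combinatorial core. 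Second, even granting the ``crossing'' property that $\dbaRun{\atmtnState_A,\word_B}$ and $\dbaRun{\atmtnState_B,\word_A}$ visit $\accepSet$, this does \emph{not} imply $(\word_A\word_B)^\omega\in\inverse{\atmtnState_A}\wc$: after reading $\word_A\word_B$ from $\atmtnState_A$ you land at some $\atmtnState_C\in\eqClass{\atmtnState_A}$ that need not be $\atmtnState_A$ or $\atmtnState_B$, and the subsequent segments may all be $\accepSet$-free. Your ``stabilization'' remark suggests awareness of this, but no argument is given. Finally, committing to a two-loop arena may be too restrictive in general; nothing rules out that more than two $\accepSet$-free cycles are needed to produce an accepting interleaving.

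For comparison, the paper avoids these difficulties by a different decomposition. It first treats the prefix-independent case: it builds an arena with \emph{all} non-empty $\accepSet$-free cycles as loops, uses half-positionality to show no winning strategy exists there, and from this derives the existence of an $\accepSet$-free-cycle-maximum state $\atmtnStateMax$, which yields $\wc=\Buchi{\goodColors}$ directly. For the general case, it does not try to build a witness arena over $\colors$; instead it changes alphabet to $\alphProof_{\eqClass{\qEquiv}}$ (cycles on a fixed $\prefEq$-class), obtains a prefix-independent half-positional DBA-recognizable objective, applies the first case, and uses the resulting $\alphProofGood$ to merge the class into a single state. This sidesteps precisely the extraction problem you left open.
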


	The rest of Section~\ref{sec:DBAnecessary} is devoted to the proof of this result, which is more involved than the proofs in Sections~\ref{sec:totalNecessary} and~\ref{sec:progConsNecessary}.
	We fix an objective $\wc\subseteq\colors^\omega$ recognized by a DBA $\dba = \dbaFull$.
	We make the assumption that $\wc$ is half-positional over finite one-player arenas.
	Our goal is to show that $\wc$ can be defined by a B\"uchi automaton built on top of $\minStateAtmtn$.
	We assume w.l.o.g.\ that $\dba$ is saturated.
	Many upcoming arguments heavily rely on this assumption through the use of Lemma~\ref{lem:safeComp} (any $\accepSet$-free word can be completed into an $\accepSet$-free cycle).

	Our proof first assumes in Section~\ref{sec:PI} that $\dba$ recognizes a prefix-independent objective.
	We will then build on this first case to conclude for the general case in Section~\ref{sec:generalCase}.
	We provide a proof sketch at the start of each subsection.

	\subsubsection{Prefix-independent case} \label{sec:PI}
	We assume that the objective $\wc$ recognized by $\dba$ is prefix-independent, so all the states of $\dba$ are equivalent for $\prefEq$.
	We want to show that $\wc$ can be recognized by a B\"uchi automaton built on top of $\minStateAtmtn$, and in this case, the automaton structure $\minStateAtmtn$ has just one state.
	Therefore, we want to find $\goodColors \subseteq \colors$ such that $\wc = \Buchi{\goodColors}$.
	We start with a high level description of the proof technique.
	\begin{proof}[Proof sketch]
		The goal is to find a suitable definition for $\goodColors$.
		To do so, we exhibit a state $\atmtnStateMax$ of $\dba$ that is ``the most rejecting state of the automaton'': it has the property that the set of $\accepSet$-free words from $\atmtnStateMax$ contains the $\accepSet$-free words from all the other states ($\atmtnStateMax$ is then called an \emph{$\accepSet$-free-maximum}) and that the set of $\accepSet$-free cycles on $\atmtnStateMax$ contains the $\accepSet$-free cycles on all the other states (it is also an \emph{$\accepSet$-free-cycle-maximum}).
		We define $\goodColors$ as the set of colors $\clr$ such that $(\atmtnStateMax, \clr)\in\accepSet$.

		We first show that if an $\accepSet$-free-maximum exists, we can assume w.l.o.g.\ that it is unique (Lemma~\ref{lem:uniqueMax}).
		In Lemmas~\ref{lem:noML},~\ref{lem:noStrat} and~\ref{ref:existenceStrongSafeMaximum}, we show the existence of an $\accepSet$-free-cycle-maximum.
		This part of the proof relies on the half-positionality over finite one-player arenas of $\wc$.
		Finally, defining $\goodColors$ using $\atmtnStateMax$ as explained above, we prove that $\wc = \Buchi{\goodColors}$ (Lemma~\ref{lem:PrefIndMeansBuchi}).
	\end{proof}

	We call a state $\atmtnStateMax\in\atmtnStates$ of $\dba$ an \emph{$\accepSet$-free-maximum} (resp.\ an \emph{$\accepSet$-free-cycle-maximum}) if for all $\atmtnState\in\atmtnStates$, we have $\safe{\atmtnState} \subseteq \safe{\atmtnStateMax}$ (resp.\ if for all $\atmtnState\in\atmtnStates$, we have $\safeCycles{\atmtnState} \subseteq \safeCycles{\atmtnStateMax}$).
	We remark that if $\dba$ is saturated, an $\accepSet$-free-cycle-maximum is also an $\accepSet$-free-maximum (this can be shown using Lemma~\ref{lem:safeComp}).

	\newcommand{\modTrans}{\ensuremath{\atmtnTrans_{\not\to \atmtnStateMaxTwo}}}
	\newcommand{\atmtnStateMaxOne}{\ensuremath{\atmtnStateMax^1}}
	\newcommand{\atmtnStateMaxTwo}{\ensuremath{\atmtnStateMax^2}}
	We first show that we can remove states from $\dba$, while still recognizing the same objective, until it has at most one $\accepSet$-free-maximum.%
	\begin{lem} \label{lem:uniqueMax}
		There exists a DBA $\dba'$ recognizing $\wc$ with at most one $\accepSet$-free-maximum.
	\end{lem}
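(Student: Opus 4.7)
The plan is to iterate a construction that removes one $\accepSet$-free-maximum at a time while preserving the recognized language. As long as $\dba$ has two distinct $\accepSet$-free-maxima, I will construct a DBA $\dba'$ with one fewer state and $\atmtnLang{\dba'} = \wc$; since the state count strictly decreases, this process terminates with a DBA having at most one $\accepSet$-free-maximum.

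Let $\atmtnStateMaxOne \neq \atmtnStateMaxTwo$ be two $\accepSet$-free-maxima of $\dba$. By maximality of each, $\safe{\atmtnStateMaxOne}$ and $\safe{\atmtnStateMaxTwo}$ contain each other, so $\safe{\atmtnStateMaxOne} = \safe{\atmtnStateMaxTwo}$. Assuming without loss of generality that $\atmtnInit \neq \atmtnStateMaxTwo$ (otherwise swap the names), I would define $\dba'$ with state set $\atmtnStates \setminus \{\atmtnStateMaxTwo\}$, same initial state $\atmtnInit$, and with transitions consisting of (i) every transition in $\modTrans$ whose source is not $\atmtnStateMaxTwo$, and (ii) for every transition $(q, \clr, \atmtnStateMaxTwo) \in \atmtnTrans$ with $q \neq \atmtnStateMaxTwo$, a replacement transition $(q, \clr, \atmtnStateMaxOne)$ inheriting the B\"uchi status of the original.

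The heart of the argument is to show $\atmtnLang{\dba'} = \wc$. Here, prefix-independence of $\wc$ is crucial: it gives $\inverse{q}\wc = \wc$ for every state $q$ of $\dba$. Given a word $w \in \colors^\omega$, I would compare the runs $\run$ of $\dba$ and $\run'$ of $\dba'$ on $w$. The two runs agree until $\run$ would enter $\atmtnStateMaxTwo$ for the first time, at which point $\run'$ enters $\atmtnStateMaxOne$ instead. Repeated applications of Lemma~\ref{lem:safeCongruence} (leveraging $\safe{\atmtnStateMaxOne} = \safe{\atmtnStateMaxTwo}$) guarantee that after reading any $\accepSet$-free stretch from $\atmtnStateMaxOne$ and $\atmtnStateMaxTwo$, the reached states have the same $\safe$ set; in particular, the corresponding single-color B\"uchi patterns agree. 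Combined with prefix-independence --- which ensures that acceptance of any suffix from any state visited by either run depends only on that suffix --- one can then conclude that $\run'$ visits a B\"uchi transition infinitely often if and only if $w \in \wc$.

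The main obstacle I expect is this language-preservation step, specifically when $\run'$ diverges from $\run$ infinitely often. Handling it requires carefully tracking the B\"uchi transitions visited in each segment between consecutive divergences, and using both the common $\safe$-set equality and the prefix-independent structure of $\wc$ to argue that each divergence preserves the long-run B\"uchi visit pattern.
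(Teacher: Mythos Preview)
Your proposal is correct and matches the paper's approach: redirect the incoming transitions of one $\accepSet$-free-maximum to the other, prove language preservation by comparing the two runs (splitting on finitely versus infinitely many divergences) using prefix-independence, the $\accepSet$-free-maximum property, and Lemma~\ref{lem:safeCongruence}, then iterate. One minor slip: in clause~(i) of your construction you presumably mean $\atmtnTrans$ (the original transition set) rather than $\modTrans$.
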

	\begin{proof}
		Assume that $\atmtnStateMaxOne, \atmtnStateMaxTwo\in\atmtnStates$ are distinct $\accepSet$-free-maxima.
		In particular, we have that $\safe{\atmtnStateMaxOne} = \safe{\atmtnStateMaxTwo}$.
		We show that in such a situation, the objective recognized by $\dba$ can be recognized by an automaton with one less state, in which we discard one of the two $\accepSet$-free-maxima.
		To simplify the upcoming arguments, we assume that $\atmtnInit = \atmtnStateMaxOne$ (which is without loss of generality as all states of $\dba$ are equivalent for $\prefEq$).

		We define a new automaton in which we remove $\atmtnStateMaxTwo$ and redirect all its incoming transitions to $\atmtnStateMaxOne$.
		Formally, let $\dba' = (\atmtnStates', \colors, \atmtnInit', \atmtnUpd', \accepSet')$ with update function $\atmtnUpd'$ be such that
		\begin{itemize}
			\item $\atmtnStates' = \atmtnStates \setminus \{\atmtnStateMaxTwo\}$, $\atmtnInit' = \atmtnStateMaxOne$,
			\item for $\atmtnState\in\atmtnStates'$ and $\clr\in\colors$, if $\atmtnUpd(\atmtnState, \clr) = \atmtnStateMaxTwo$, then $\atmtnUpd'(\atmtnState, \clr) = \atmtnStateMaxOne$; otherwise, $\atmtnUpd'(\atmtnState, \clr) = \atmtnUpd(\atmtnState, \clr)$,
			\item for $\atmtnState\in\atmtnStates'$ and $\clr\in\colors$, $(\atmtnState, \clr)\in\accepSet'$ if and only if $(\atmtnState, \clr)\in\accepSet$.
		\end{itemize}
		We also assume that states that are not reachable from $\atmtnInit'$ in $\dba'$ are removed from $\atmtnStates'$.

		We show that this automaton with (at least) one less state recognizes the same objective as $\dba$.
		Let $\word = \clr_1\clr_2\ldots\in\colors^\omega$ be an infinite word.
		We show that $\word$ is accepted by $\dba$ if and only if it is accepted by $\dba'$.

		Let $\modTrans = \{(\atmtnState, \clr) \in \atmtnStates' \times \colors \mid \atmtnUpd(\atmtnState, \clr) = \atmtnStateMaxTwo\}$ be the transitions of $\dba'$ that were directed to $\atmtnStateMaxTwo$ in $\dba$ but are now redirected to $\atmtnStateMaxOne$ in $\dba'$.
		Let $\run$ be the infinite run of $\dba$ on $\word$, and $\run' = (\atmtnState'_0, \clr_1)(\atmtnState'_1, \clr_2)\ldots$ be the infinite run of $\dba'$ on $\word$.
		The two runs start by taking corresponding transitions, but differ once a transition in $\modTrans$ is taken.

		We first assume that $\run'$ uses transitions in $\modTrans$ only finitely many times.
		Then, there exists $k\in\IN$ such that $\atmtnState'_k = \atmtnStateMaxOne$ and for all $l \ge k$, $(\atmtnState'_{l}, \clr_{l+1}) \notin \modTrans$.
		Let $\word_{> k} = \clr_{k+1}\clr_{k+2}\ldots$ be the infinite word consisting of the colors taken after the last occurrence of a transition in $\modTrans$.
		We have that
		\begin{align*}
			\text{$\dba$ accepts $\word$}
			&\Longleftrightarrow \text{$\dba$ accepts $\word_{> k}$}
			&&\text{as $\dba$ recognizes a prefix-independent objective}\\
			&\Longleftrightarrow \text{$\dba'$ accepts $\word_{> k}$}
			&&\text{as $\word_{> k}$ visits exactly the same transitions as in $\dba$}\\
			&\Longleftrightarrow \text{$\dba'$ accepts $\word$}
			&&\text{as $\clr_1\ldots\clr_k$ is a cycle on the initial state $\atmtnStateMaxOne$ of $\dba'$.}
		\end{align*}

		We now assume that $\run'$ uses transitions in $\modTrans$ infinitely many times.
		We decompose $\run'$ into infinitely many finite runs $\run'_1, \run'_2, \ldots$ such that $\run' = \run'_1\run'_2\ldots$ and every run $\run'_i$ contains exactly one transition in $\modTrans$ as its last transition.
		This implies that all these finite runs start in state $\atmtnStateMaxOne$.
		We represent run $\run'$ in Figure~\ref{fig:run'}.
		We define words $\word_1, \word_2, \ldots$ as the respective projection of runs $\run'_1, \run'_2, \ldots$ to their colors (we have $\word = \word_1\word_2\ldots$).
		Notice that
		\begin{align}
			\forall i \ge 1,
			\word_i\in\safe{\atmtnStateMaxOne} \Longleftrightarrow \word_i \in \safep{\atmtnStateMaxOne}
			\label{eq:safe},
		\end{align}
		as the transitions used by $\word_i$ from $\atmtnStateMaxOne$ in $\dba'$ correspond to the transitions used by $\word_i$ from $\atmtnStateMaxOne$ in $\dba$ (this property is not true for all words, but this is true for these words that read only one transition in $\modTrans$ as their last transition).
		We also have by construction that
		\begin{align}
			\forall i \ge 1,
			\atmtnUpdWord(\atmtnStateMaxOne, \word_i) = \atmtnStateMaxTwo.
			\label{eq:q1toq2}
		\end{align}
		We distinguish whether $\word$ is accepted or rejected by $\dba'$.

		Assume $\word$ is accepted by $\dba'$.
		Then, we know that for infinitely many $i\in\IN$, $\word_i\notin \safep{\atmtnStateMaxOne}$.
		This implies that for these indices $i$, $\word_i\notin \safe{\atmtnStateMaxOne}$ by~\eqref{eq:safe}.
		As $\atmtnStateMaxOne$ is an $\accepSet$-free-maximum, for all $\atmtnState\in\atmtnStates$, $\word_i\notin\safe{\atmtnState}$ (this is simply the contrapositive of the definition of $\accepSet$-free-maximum).
		Hence, for infinitely many $i\in\IN$, when $\word_i$ is read in $\dba$ (no matter from where), a B\"uchi transition is used, so $\word$ is accepted by $\dba$.

		Assume $\word$ is rejected by $\dba'$.
		Then there exists $k\in\IN$ such that for all $l \ge k$, $\word_l \in \safep{\atmtnStateMaxOne}$.
		As $\dba$ is prefix-independent, up to removing the start of $\word$, we assume w.l.o.g.\ that $k = 1$.
		We show by induction that
		\[
		\forall i\ge 1,
		\safe{\atmtnStateMaxOne} = \safe{\atmtnUpdWord(\atmtnStateMaxOne, \word_1\ldots\word_{i})}.
		\]
		This is true for $i = 1$, as $\atmtnUpdWord(\atmtnStateMaxOne, \word_1) = \atmtnStateMaxTwo$ by~\eqref{eq:q1toq2} and the fact that $\atmtnStateMaxOne$ and $\atmtnStateMaxTwo$ are both $\accepSet$-free-maxima.
		Assume $\safe{\atmtnStateMaxOne} = \safe{\atmtnUpdWord(\atmtnStateMaxOne, \word_1\ldots\word_{i-1})}$ for some $i \ge 2$.
		Then, by Lemma~\ref{lem:safeCongruence}, as $\word_i\in\safe{\atmtnStateMaxOne}$, we have $\safe{\atmtnUpdWord(\atmtnStateMaxOne, \word_i)} = \safe{\atmtnUpdWord(\atmtnStateMaxOne, \word_1\ldots\word_{i-1}\word_i)}$.
		Property~\eqref{eq:q1toq2} gives $\safe{\atmtnUpdWord(\atmtnStateMaxOne, \word_i)} = \safe{\atmtnStateMaxTwo}$, which is itself equal to $\safe{\atmtnStateMaxOne}$.
		We now know that for all $i\ge 1$, $\word_i\in\safe{\atmtnStateMaxOne}$ by~\eqref{eq:safe}.
		Therefore, we conclude that for all $i\ge 1$, $\word_i\in\safe{\atmtnUpdWord(\atmtnStateMaxOne, \word_1\ldots\word_{i-1})}$.
		In particular, $\word$ uses no B\"uchi transition when read from $\atmtnStateMaxOne$ in $\dba$ and is also rejected by $\dba$.

		We have shown that $\dba'$ is a DBA with fewer states than $\dba$ recognizing $\wc$.
		If $\dba'$ still has two or more $\accepSet$-free-maxima, we repeat our construction until there is at most one left.
	\end{proof}
	\begin{figure}[tbh]
		\centering
		\begin{tikzpicture}[every node/.style={font=\small,inner sep=1pt}]
			\draw (0,0) node[] (run') {$\run'\colon$};
			\draw ($(run')+(0.5,0)$) node[] (s1) {};
			\draw ($(s1)+(2,0)$) node[] (s2) {};
			\draw ($(s2)+(2,0)$) node[] (s3) {};
			\draw ($(s3)+(2,0)$) node[] (s4) {};
			\draw ($(s4)+(0.5,0)$) node[] (s5) {\hspace{2pt}$\cdots$};
			\draw ($(s1)+(0,0.5)$) node[] (q1) {$\atmtnStateMaxOne$};
			\draw ($(s2)+(0,0.5)$) node[] (q2) {$\atmtnStateMaxOne$};
			\draw ($(s3)+(0,0.5)$) node[] (q3) {$\atmtnStateMaxOne$};
			\draw ($(s4)+(0,0.5)$) node[] (q4) {$\atmtnStateMaxOne$};
			\draw ($(s2)+(-0.2,-0.5)$) node[] (q5) {$\modTrans$};
			\draw ($(s3)+(-0.2,-0.5)$) node[] (q6) {$\modTrans$};
			\draw ($(s4)+(-0.2,-0.5)$) node[] (q7) {$\modTrans$};
			\draw (s1) edge[-latex',decorate] node[above=4pt] {$\word_1$} (s2);
			\draw (s2) edge[-latex',decorate] node[above=4pt] {$\word_2$} (s3);
			\draw (s3) edge[-latex',decorate] node[above=4pt] {$\word_3$} (s4);
			\draw (q1) edge[dashed] (s1);
			\draw (q2) edge[dashed] (s2);
			\draw (q3) edge[dashed] (s3);
			\draw (q4) edge[dashed] (s4);
			\draw (q5) edge[dashed] ($(s2)+(-0.2,0)$);
			\draw (q6) edge[dashed] ($(s3)+(-0.2,0)$);
			\draw (q7) edge[dashed] ($(s4)+(-0.2,0)$);
		\end{tikzpicture}
		\caption{Features of run $\run'$ when it takes infinitely many transitions in $\modTrans$.}
		\label{fig:run'}
	\end{figure}

	Thanks to Lemma~\ref{lem:uniqueMax}, we now assume w.l.o.g.\ that $\dba$ has at most one $\accepSet$-free-maximum.
	We intend to show that there exists an $\accepSet$-free-cycle-maximum.
	To do so, we exhibit an (infinite) arena in which $\player{1}$ has no winning strategy, which we prove by using half-positionality of $\wc$ over finite one-player arenas.
	We then prove that the non-existence of an $\accepSet$-free-cycle-maximum would imply that $\player{1}$ has a winning strategy in this arena (Lemma~\ref{ref:existenceStrongSafeMaximum}).

	Let $\arena_\dba$ be the infinite one-player arena of $\Pone$ depicted in Figure~\ref{fig:arenaPI}.
	This arena consists of one vertex $\s$ with a choice to make among all non-empty words that are $\accepSet$-free cycles from some state of $\dba$.
	Vertex $\s$ is the only vertex with multiple outgoing edges.
	The goal of the next three short lemmas is to show that in this arena, $\Pone$ has no winning strategy.%
	\begin{figure}[tbh]
		\centering
		\begin{tikzpicture}[every node/.style={font=\small,inner sep=1pt}]
			\draw (0,0) node[rond, scale=1.4] (s) {$\s$};
			\draw (s) edge[-latex',decorate,in=170,out=-140,distance=0.75cm] node[left=4pt] {$w_1$} (s);
			\draw (s) edge[-latex',decorate,in=100,out=150,distance=0.75cm] node[above=4pt] {$w_2$} (s);
			\draw (s) edge[-latex',decorate,in=-10,out=40,distance=0.75cm] node[right=4pt] {$w_n$} (s);
			\draw (0.3,0.7) node[] (SafeCycles) {$\cdots$};
			\draw (0.6,-0.5) node[] (SafeCycles) {$\cdots$};
			\draw (5,0) node[] (SafeCycles) {$\{w_1,w_2,\dots \}= \bigcup_{\atmtnState\in\atmtnStates} \safeCycles{\atmtnState} \setminus \{\emptyWord\}$};
		\end{tikzpicture}%
		\caption{Infinite one-player arena $\arena_\dba$ of $\Pone$, with choices from $\s$ among every non-empty word in $\bigcup_{\atmtnState\in\atmtnStates} \safeCycles{\atmtnState}$.}%
		\label{fig:arenaPI}
	\end{figure}%

	\begin{lem} \label{lem:strategy-implies-positional}
		If $\Pone$ has a winning strategy in $\arena_\dba$, then $\Pone$ has a positional winning strategy.
	\end{lem}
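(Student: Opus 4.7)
Since $\arena_\dba$ is a one-player arena of $\Pone$ with a single choice-vertex $\s$ (each outgoing loop at $\s$ being a path back to $\s$ labeled by some $w_i \in \bigcup_{\atmtnState\in\atmtnStates}\safeCycles{\atmtnState}\setminus\{\emptyWord\}$), a positional strategy at $\s$ commits to one such $w_i$ and produces the play $(w_i)^\omega$, while any winning strategy yields a winning play $\play = w_{i_1}w_{i_2}\ldots \in \wc$. My plan is to extract from $\play$ a finite ``loop'' in the run of $\dba$ on $\play$ that carries a B\"uchi transition, then package the corresponding finite list of $w_i$'s into a finite sub-arena where the half-positionality hypothesis of the section applies directly.

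Let $\sigma$ be a winning strategy and $\play = w_{i_1}w_{i_2}\ldots \in \wc$ the unique play it produces from $\s$. Read $\play$ in $\dba$ from $\atmtnInit$ and let $q_j = \atmtnUpdWord(\atmtnInit, w_{i_1}\cdots w_{i_j})$. Since $\atmtnStates$ is finite, pigeonhole yields a state $q$ and indices $j_1 < j_2 < \ldots$ with $q_{j_k} = q$ for all $k$. Because $\play\in\wc$, the run of $\dba$ on $\play$ visits B\"uchi transitions infinitely often, and only finitely many can lie before position $j_1$, so at least one block $[j_k, j_{k+1}]$ with $k \ge 1$ must contain a B\"uchi transition. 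Setting $u = w_{i_{j_k+1}}\cdots w_{i_{j_{k+1}}}$, reading $u$ from $q$ in $\dba$ returns to $q$ while traversing at least one B\"uchi transition, so $u^\omega \in \inverse{q}\wc$; by the prefix-independence assumption of Section~\ref{sec:PI}, $\inverse{q}\wc = \wc$, and thus $u^\omega \in \wc$.

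Now define the \emph{finite} one-player sub-arena $\arena'$ of $\Pone$ consisting of $\s$ together with only the loops $w_{i_{j_k+1}},\ldots,w_{i_{j_{k+1}}}$. Cycling through these loops in this order is a winning strategy in $\arena'$ from $\s$, as it produces the winning play $u^\omega$. By the standing hypothesis that $\wc$ is half-positional over finite one-player arenas, $\Pone$ admits a positional winning strategy in $\arena'$; such a strategy must commit to a single loop $w_{i_{j^*}}$ with $j^* \in \{j_k+1,\ldots,j_{k+1}\}$, forcing $(w_{i_{j^*}})^\omega \in \wc$. Since $w_{i_{j^*}}$ is also an available loop in $\arena_\dba$, the positional strategy that always plays $w_{i_{j^*}}$ at $\s$ is a positional winning strategy in $\arena_\dba$.

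The only delicate step is locating a finite infix of $\play$ that, viewed as a cycle in $\dba$ from some state $q$, traverses a B\"uchi transition; this is the pigeonhole step and crucially uses the finiteness of $\atmtnStates$. Everything else is a clean reduction to the half-positionality hypothesis on the finite sub-arena carved out by this cycle, with prefix-independence discarding the irrelevant prefix $w_{i_1}\cdots w_{i_{j_k}}$.
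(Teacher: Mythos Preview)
Your proof is correct and follows essentially the same approach as the paper: both extract, via pigeonhole on automaton states along the winning play, a finite segment that forms an $\accepSet$-hitting cycle in $\dba$, restrict to the finite sub-arena carrying only the finitely many loops used in that segment, and invoke half-positionality over finite one-player arenas to obtain a single winning loop. The only cosmetic difference is that the paper keeps the prefix $w_{i_1}\cdots w_{i_{j_k}}$ in the winning word, whereas you discard it via prefix-independence; both are fine here.
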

	\begin{proof}
		Suppose that there is a winning strategy of $\Pone$ in $\arena_\dba$.
		Let $\word = \word_1\word_2\ldots$ be an infinite winning word such that for $i \ge 1$, $\word_i \in \bigcup_{\atmtnState\in\atmtnStates} \safeCycles{\atmtnState} \setminus \{\emptyWord\}$.
		Let $\atmtnState_0 = \atmtnInit$, and for $i \ge 1$, let $\atmtnState_i = \atmtnUpdWord(\atmtnInit, \word_1\ldots\word_i)$ be the current automaton state after reading the first $i$ finite words composing $\word$.
		As there are only finitely many automaton states and $\word$ is winning, there are $k, l\ge 1$ with $k < l$ such that $\atmtnState_k = \atmtnState_l$ and $\word_{k+1}\ldots\word_l \notin \safe{\atmtnState_k}$.
		Word $\word_1\ldots\word_k(\word_{k+1}\ldots\word_l)^\omega$ is also a winning word and uses only finitely many different words in~$\bigcup_{\atmtnState\in\atmtnStates} \safeCycles{\atmtnState} \setminus \{\emptyWord\}$.

		Hence, there is a finite restriction (``subarena'') $\arena'_\dba$ of the arena $\arena_\dba$ with at most~$l$ choices in $\s$ in which $\Pone$ has a winning strategy.
		Arena $\arena'_\dba$ being finite and one-player, half-positionality of $\wc$ over finite one-player arenas implies that $\Pone$ has a positional winning strategy in $\arena'_\dba$.
		This positional winning strategy can also be played in $\arena_\dba$ (as every choice available in $\arena_\dba'$ is also available in~$\arena_\dba$).
	\end{proof}

	\begin{lem} \label{lem:noML}
		No positional strategy of $\Pone$ is winning in $\arena_\dba$.
	\end{lem}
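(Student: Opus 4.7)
The plan is to unpack what a positional strategy of $\Pone$ must look like in $\arena_\dba$, and then show that the unique infinite word it produces from $\s$ lies outside $\wc$.

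First I would observe that at the distinguished vertex $\s$, a positional strategy must commit to a single outgoing edge, which (once the intermediate vertices of the corresponding path are forced) amounts to committing to a single word $\word \in \bigcup_{\atmtnState\in\atmtnStates} \safeCycles{\atmtnState} \setminus \{\emptyWord\}$. Since $\s$ is the only vertex with multiple outgoing edges, the produced infinite word is exactly $\word^\omega$. It therefore suffices to show that $\word^\omega \notin \wc$ for every such $\word$.

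Next, fix $\atmtnState\in\atmtnStates$ with $\word \in \safeCycles{\atmtnState}$; by definition, the run $\dbaRun{\atmtnState,\word}$ is $\accepSet$-free and $\atmtnUpdWord(\atmtnState,\word)=\atmtnState$. Concatenating infinitely many copies of this cycle gives the run $\dbaRun{\atmtnState,\word^\omega}$, which visits no transitions of $\accepSet$, so $\word^\omega \notin \inverse{\atmtnState}\wc$. Since $\wc$ is prefix-independent, $\inverse{\atmtnState}\wc = \wc$ (pick any finite word $u$ with $\atmtnUpdWord(\atmtnInit,u)=\atmtnState$; then $\inverse{\atmtnState}\wc = \inverse{u}\wc = \wc$ by Remark~\ref{rem:PIClass}). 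Hence $\word^\omega \notin \wc$, and the positional strategy is losing from $\s$.

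There is essentially no hard step: the argument is just combining the shape of $\arena_\dba$ (positional strategies produce $\omega$-iterates of $\accepSet$-free cycles) with prefix-independence (the choice of starting state does not matter). The only point that requires a little care is making sure that a positional choice at $\s$ really forces the whole word $\word$ to be replayed, which is clear because the intermediate vertices on the gadget realising each $w_i$ have no alternative but to complete the cycle back to $\s$.
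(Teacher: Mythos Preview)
Your proposal is correct and follows essentially the same argument as the paper: a positional strategy at $\s$ yields some $\word^\omega$ with $\word\in\safeCycles{\atmtnState}$, this word is rejected from $\atmtnState$, and prefix-independence (equivalently, $\atmtnState\prefEq\atmtnInit$) transfers rejection to $\atmtnInit$. The only difference is that you spell out a little more explicitly why the intermediate vertices of the gadget force the full word $\word$ to be replayed, which the paper leaves implicit.
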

	\begin{proof}
		Any positional strategy of $\Pone$ generates a word $\word^\omega$, where $\word\in\safeCycles{\atmtnState}\setminus\{\emptyWord\}$ for some $\atmtnState\in\atmtnStates$.
		In particular, word $\word^\omega$ is rejected when it is read from state $\atmtnState$.
		As all the states in $\atmtnStates$ are equivalent for $\prefEq$ (as we assume that $\wc$ is prefix-independent), we have $\atmtnState \prefEq \atmtnInit$, so $\word^\omega$ is also rejected when read from the initial state $\atmtnInit$ of the automaton.
	\end{proof}

	Using Lemmas~\ref{lem:strategy-implies-positional} and~\ref{lem:noML}, we deduce the desired result.
	\begin{lem}\label{lem:noStrat}
		No strategy of $\player{1}$ is winning in $\arena_\dba$.
	\end{lem}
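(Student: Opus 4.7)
The statement follows immediately by combining the two preceding lemmas. My plan is to argue by contradiction: suppose some strategy $\sigma_1$ of $\Pone$ is winning in $\arena_\dba$. Then, by Lemma~\ref{lem:strategy-implies-positional}, $\Pone$ also has a \emph{positional} winning strategy in $\arena_\dba$. However, Lemma~\ref{lem:noML} asserts that no positional strategy of $\Pone$ is winning in $\arena_\dba$, yielding the desired contradiction.

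There is no real obstacle here since all the work has already been done in the two previous lemmas: Lemma~\ref{lem:strategy-implies-positional} used the half-positionality hypothesis over finite one-player arenas together with a pumping argument to reduce winning in the infinite arena $\arena_\dba$ to winning in a finite subarena, and Lemma~\ref{lem:noML} used prefix-independence together with the fact that every outgoing choice from $\s$ is an $\accepSet$-free cycle on some state. The present lemma is merely the logical conjunction of these two facts.

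Since the proof is just a two-line chaining of the previous results, I would write it as a short contradiction argument rather than develop any further machinery.
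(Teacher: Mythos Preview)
Your proposal is correct and matches the paper's approach exactly: the paper does not even spell out a proof, simply stating that the result is deduced from Lemmas~\ref{lem:strategy-implies-positional} and~\ref{lem:noML}. Your contradiction argument is precisely the intended chaining of these two lemmas.
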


	We use the statement of Lemma~\ref{lem:noStrat} to show the existence of an $\accepSet$-free-cycle-maximum.

	\begin{lem} \label{ref:existenceStrongSafeMaximum}
		There exists a unique $\accepSet$-free-cycle-maximum in $\dba$.
	\end{lem}
	\begin{proof}
		We first show that there exists an $\accepSet$-free-cycle-maximum.
		Let us assume by contradiction that there is no $\accepSet$-free-cycle-maximum.
		We show how to build a winning strategy of $\Pone$ in $\arena_\dba$, contradicting Lemma~\ref{lem:noStrat}.
		To do so, we build an infinite word accepted by $\dba$ by combining finite words that are $\accepSet$-free cycles from some state.

		We claim that for every $\atmtnState\in\atmtnStates$ which is not an $\accepSet$-free-maximum, there exists
		\[
		\text{ $\word_\atmtnState\in\bigcup_{\atmtnState'\in\atmtnStates} \safeCycles{\atmtnState'} \setminus \{\emptyWord\}$ such that $\word_\atmtnState \notin \safe{\atmtnState}$.}
		\]
		Let $\atmtnState'\in\atmtnStates$ be such that $\safe{\atmtnState'} \not\subseteq \safe{\atmtnState}$, which exists as $\atmtnState$ is not an $\accepSet$-free-maximum.
		Let $\word_1\in \safe{\atmtnState'} \setminus \safe{\atmtnState}$.
		By Lemma~\ref{lem:safeComp}, there exists $\word_2\in\colors^*$ such that $\word_1\word_2\in\safeCycles{\atmtnState'}$ (this holds as we have assumed w.l.o.g.\ that $\dba$ is saturated).
		As $\word_1\notin\safe{\atmtnState}$, we also have $\word_1\word_2\notin\safe{\atmtnState}$.
		Taking $\word_\atmtnState = \word_1\word_2$ proves the claim.
		For $\atmtnState\in\atmtnStates$ not an $\accepSet$-free-maximum, we fix $\word_\atmtnState\in\colors^+$ such that $\word_\atmtnState\in\bigcup_{\atmtnState'\in\atmtnStates} \safeCycles{\atmtnState'} \setminus \{\emptyWord\}$ and $\word_\atmtnState \notin \safe{\atmtnState}$.

		Thanks to Lemma~\ref{lem:uniqueMax}, we can assume that there is at most one $\accepSet$-free-maximum in $\dba$.
		If it exists, let $\atmtnStateMax \in \atmtnStates$ be a unique $\accepSet$-free-maximum.
		Using our initial assumption, $\atmtnStateMax$ is not an $\accepSet$-free-cycle-maximum: there is $\atmtnState\in\atmtnStates$ such that $\safeCycles{\atmtnState} \not\subseteq \safeCycles{\atmtnStateMax}$.
		Let $\word_{\mathsf{max}}\in \safeCycles{\atmtnState} \setminus \safeCycles{\atmtnStateMax}$.
		Notice that as $\word_{\mathsf{max}}\in\safe{\atmtnState}$ and $\atmtnStateMax$ is an $\accepSet$-free-maximum, $\word_{\mathsf{max}}\in\safe{\atmtnStateMax}$.
		Therefore, $\word_{\mathsf{max}}$ cannot be a cycle on $\atmtnStateMax$, i.e., $\atmtnUpdWord(\atmtnStateMax, \word_\mathsf{max}) \neq \atmtnStateMax$.

		We build iteratively an infinite winning word that can be played by $\Pone$ in $\arena_\dba$ (our construction works whether there is an $\accepSet$-free-maximum or not).
		As $\Pone$ plays, we keep track in parallel of the current automaton state.
		The game starts in $\s$, with current automaton state $\atmtnState_0 = \atmtnInit$.
		Let $n \ge 0$.
		We distinguish two cases.
		\begin{itemize}
			\item If $\atmtnState_n$ is not an $\accepSet$-free-maximum, then $\Pone$ plays word $\word_{\atmtnState_n}$.
			As $\word_{\atmtnState_n} \notin\safe{\atmtnState_n}$, a B\"uchi transition is used along the way.
			The current automaton state becomes $\atmtnState_{n+1} = \atmtnUpdWord(\atmtnState_n, \word_{\atmtnState_n})$.
			\item If $\atmtnState_n = \atmtnStateMax$ is the $\accepSet$-free-maximum, then $\Pone$ plays $\word_\mathsf{max}$.
			The current automaton state becomes $\atmtnState_{n+1} = \atmtnUpd(\atmtnStateMax, \word_{\mathsf{max}})$, which is not equal to $\atmtnStateMax$.
		\end{itemize}
		For infinitely many $i\ge 0$, the automaton state $\atmtnState_i$ is not an $\accepSet$-free-maximum (as there is at most one $\accepSet$-free-maximum in $\dba$, and it cannot appear twice in a row).
		Therefore, we have described a winning strategy for $\Pone$, since the corresponding run over $\dba$ visits infinitely often a B\"uchi transition.

		We have obtained the existence of an $\accepSet$-free-cycle-maximum by contradiction.
		Using Lemma~\ref{lem:safeComp}, we can show that an $\accepSet$-free-cycle-maximum is also an $\accepSet$-free-maximum.
		As we have assumed there is at most one $\accepSet$-free-maximum (through Lemma~\ref{lem:uniqueMax}), we also obtain the uniqueness of the $\accepSet$-free-cycle-maximum.
	\end{proof}

	We now know that there exists a unique $\accepSet$-free-cycle-maximum $\atmtnStateMax\in\atmtnStates$.
	We show how to use the outgoing transitions of $\atmtnStateMax$ in order to realize $\wc$ as an objective of the kind $\Buchi{\goodColors}$ for some $\goodColors \subseteq \colors$, which is the goal of the current subsection.
	\begin{lem} \label{lem:PrefIndMeansBuchi}
		There exists $\goodColors\subseteq\colors$ such that $\wc = \Buchi{\goodColors}$.
	\end{lem}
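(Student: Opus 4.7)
My plan is to set $\goodColors = \{\clr \in \colors : (\atmtnStateMax, \clr) \in \accepSet\}$ and to verify the equality $\wc = \Buchi{\goodColors}$. The $\accepSet$-free-maximum property $\safe{\atmtnState} \subseteq \safe{\atmtnStateMax}$ for every state $\atmtnState$ shows that any $\clr \in \goodColors$ (equivalently, the length-one word $\clr$ is not in $\safe{\atmtnStateMax}$) triggers a B\"uchi transition from \emph{every} state of $\dba$. This immediately yields the easy inclusion $\Buchi{\goodColors} \subseteq \wc$: along the run of $\dba$ on any word that visits $\goodColors$ infinitely often, infinitely many B\"uchi transitions are taken, regardless of the start state.

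For the converse $\wc \subseteq \Buchi{\goodColors}$, I argue by contrapositive: if $\word$ visits $\goodColors$ only finitely often, then by prefix-independence of $\wc$ I may replace $\word$ by a tail so that every letter of $\word$ lies in $\safe{\atmtnStateMax}$. I aim to show that the run of $\dba$ on $\word$ from $\atmtnStateMax$ stays at $\atmtnStateMax$ forever, and is therefore $\accepSet$-free, so that $\word \notin \wc$. By induction on the length of prefixes, this reduces to a single \emph{absorption} claim: for every $\clr \in \safe{\atmtnStateMax}$, $\atmtnUpd(\atmtnStateMax, \clr) = \atmtnStateMax$.

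To establish absorption I argue by contradiction, supposing $\atmtnState' := \atmtnUpd(\atmtnStateMax, \clr) \neq \atmtnStateMax$ for some $\clr \in \safe{\atmtnStateMax}$. Saturation together with Lemma~\ref{lem:safeComp} gives $w$ with $\clr w \in \safeCycles{\atmtnStateMax}$, placing $\atmtnState'$ in the same $\accepSet$-free component as $\atmtnStateMax$ via the $\accepSet$-free words $u_1 = \clr$ and $u_2 = w$. Uniqueness of the $\accepSet$-free-maximum (Lemma~\ref{lem:uniqueMax}) forces $\safe{\atmtnState'} \subsetneq \safe{\atmtnStateMax}$, so I pick $v \in \safe{\atmtnStateMax} \setminus \safe{\atmtnState'}$ and use Lemma~\ref{lem:safeComp} once more to extend it to $v v' \in \safeCycles{\atmtnStateMax}$. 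The word $u_2 \cdot v \cdot v' \cdot u_1$ is then an $\accepSet$-free cycle at $\atmtnState'$, and the cycle-maximality from Lemma~\ref{ref:existenceStrongSafeMaximum} transfers it to $\safeCycles{\atmtnStateMax}$; iterating such transfers (together with Lemma~\ref{lem:safeCongruence} and further completions) is meant to ultimately force $\safe{\atmtnState'} = \safe{\atmtnStateMax}$, contradicting uniqueness.

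The main obstacle I anticipate is closing that last inductive step: a single cycle transfer only gives $\accepSet$-freeness at some state $\atmtnUpdWord(\atmtnStateMax, u_2)$ which a priori differs from $\atmtnState'$, so pulling the desired membership $v \in \safe{\atmtnState'}$ out of the transferred cycle requires carefully chained permutations of the component words. Should the purely combinatorial route get stuck, my fallback is to exploit half-positionality on an arena analogous to $\arena_\dba$ from Lemma~\ref{lem:noStrat} but augmented with an edge encoding $\clr$ composed with a witness $w_{\atmtnState'} \in \safeCycles{\atmtnStateMax}$ with $w_{\atmtnState'} \notin \safe{\atmtnState'}$ (as used in the proof of Lemma~\ref{ref:existenceStrongSafeMaximum}), engineering a scenario where $\Pone$ wins nonpositionally but no positional choice suffices.
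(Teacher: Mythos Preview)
Your choice of $\goodColors$ and your proof of the inclusion $\Buchi{\goodColors}\subseteq\wc$ are exactly those of the paper. The divergence is in the converse inclusion.

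You aim for a strictly stronger intermediate claim than the paper does: that for every $\clr\notin\goodColors$ one has $\atmtnUpd(\atmtnStateMax,\clr)=\atmtnStateMax$ (\emph{absorption}), so that the run on a $\goodColors$-free word never leaves $\atmtnStateMax$. This claim turns out to be true, but your sketch does not prove it: as you yourself note, the cycle transfer $u_2vv'u_1\in\safeCycles{\atmtnState'}\Rightarrow u_2vv'u_1\in\safeCycles{\atmtnStateMax}$ only gives $\accepSet$-freeness along a path starting at $\atmtnUpdWord(\atmtnStateMax,u_2)$, which need not be $\atmtnState'$, and the ``chained permutations'' you propose are not specified. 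The fallback via an arena construction is likewise only a hope, not an argument. So as written the proposal has a genuine gap at precisely the point you flag.

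The paper sidesteps absorption entirely and proves directly that the run on a $\goodColors$-free word from $\atmtnStateMax$ is $\accepSet$-free. Suppose not, and let $k$ be minimal with $\word_{\le k}\in\safe{\atmtnStateMax}$ but $\word_{\le k}\clr_{k+1}\notin\safe{\atmtnStateMax}$. Complete $\word_{\le k}$ to $\word_{\le k}\word'\in\safeCycles{\atmtnStateMax}$; then the \emph{rotation} $\word'\word_{\le k}$ lies in $\safeCycles{\atmtnState_1}$ where $\atmtnState_1=\atmtnUpdWord(\atmtnStateMax,\word_{\le k})$, and by cycle-maximality also in $\safeCycles{\atmtnStateMax}$. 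Hence there is a state $\atmtnState_2=\atmtnUpdWord(\atmtnStateMax,\word')$ with $\atmtnUpdWord(\atmtnState_2,\word_{\le k})=\atmtnStateMax$ and $\word_{\le k}\in\safe{\atmtnState_2}$. Since $\atmtnStateMax$ is an $\accepSet$-free-maximum, $\word_{\le k}\clr_{k+1}\notin\safe{\atmtnState_2}$; but the run from $\atmtnState_2$ on $\word_{\le k}$ is $\accepSet$-free and ends in $\atmtnStateMax$, so the offending transition must be $(\atmtnStateMax,\clr_{k+1})\in\accepSet$, i.e.\ $\clr_{k+1}\in\goodColors$, a contradiction. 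The single trick you are missing is this rotation: use cycle-maximality on the \emph{rotated} cycle to produce a state that reaches $\atmtnStateMax$ exactly after reading $\word_{\le k}$. Once you have this, absorption follows a posteriori (one gets $\safe{\atmtnStateMax}=(\colors\setminus\goodColors)^*$, which is closed under concatenation, forcing every state in the $\accepSet$-free component of $\atmtnStateMax$ to share its $\accepSet$-free words and hence, by uniqueness, to equal $\atmtnStateMax$); but it is not needed for the lemma.
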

	\begin{proof}
		Let $\goodColors = \{\clr\in\colors \mid (\atmtnStateMax, \clr) \in \accepSet\}$ --- equivalently, if we consider colors as words with one letter, $\goodColors$ is the set of colors $\clr$ such that $\clr\notin\safe{\atmtnStateMax}$.

		We first show that $\Buchi{\goodColors} \subseteq \wc$.
		Let $\clr\in\goodColors$.
		Then, $\clr\notin\safe{\atmtnStateMax}$.
		As $\atmtnStateMax$ is an $\accepSet$-free-maximum, for all $\atmtnState\in\atmtnStates$, $\clr\notin\safe{\atmtnState}$.
		Therefore, any word containing infinitely many occurrences of colors in $\goodColors$ uses infinitely many B\"uchi transitions and is accepted by $\dba$.

		We now show that $\wc \subseteq \Buchi{\goodColors}$.
		By contrapositive, let $\word = \clr_1\clr_2\ldots\notin\Buchi{\goodColors}$ be an infinite word with only finitely many colors in $\goodColors$.
		We show that $\word \notin \wc$.
		As $\wc$ is prefix-independent, we may assume w.l.o.g.\ that $\word$ has no color in $\goodColors$, i.e., that for all $i\ge 1$, $\clr_i \in \colors \setminus \goodColors$.
		We claim that when read from $\atmtnStateMax$, word $\word$ uses no B\"uchi transition and is thus rejected.
		This implies that $\word\notin\wc$ as $\atmtnStateMax \prefEq \atmtnInit$.

		Assume by contradiction that there is some B\"uchi transition when reading $\word$ from $\atmtnStateMax$, i.e., there exists $k \ge 0$ such that for $\word_{\le k} = \clr_1\ldots\clr_k$, $\word_{\le k}\in\safe{\atmtnStateMax}$, but $\word_{\le k}\clr_{k+1}\notin\safe{\atmtnStateMax}$.
		We will deduce that $(\atmtnStateMax, \clr_{k+1})$ is a B\"uchi transition, contradicting that~$\clr_{k+1} \in \colors \setminus \goodColors$.

		We depict the situation in Figure~\ref{fig:wkn}.
		Let $\atmtnState_1 = \atmtnUpdWord(\atmtnStateMax, \word_{\le k})$ (whether $\atmtnState_1$ equals $\atmtnStateMax$ or not does not matter).
		By Lemma~\ref{lem:safeComp}, there exists $\word'\in\colors^*$ such that $\word_{\le k}\word'\in\safeCycles{\atmtnStateMax}$.
		By construction, we have $\word'\word_{\le k} \in \safeCycles{\atmtnState_1}$.
		As $\atmtnStateMax$ is an $\accepSet$-free-cycle-maximum, we have $\safeCycles{\atmtnState_1} \subseteq \safeCycles{\atmtnStateMax}$, so we also have $\word'\word_{\le k} \in \safeCycles{\atmtnStateMax}$.
		Let $\atmtnState_2 = \atmtnUpdWord(\atmtnStateMax, \word')$.
		Notice that $\atmtnStateMax = \atmtnUpdWord(\atmtnState_2, \word_{\le k})$ and $\word_{\le k} \in \safe{\atmtnState_2}$.
		As $\atmtnStateMax$ is an $\accepSet$-free-maximum and $\word_{\le k}\clr_{k+1}\notin\safe{\atmtnStateMax}$, we also have that $\word_{\le k}\clr_{k+1}\notin\safe{\atmtnState_2}$.
		Therefore, transition $(\atmtnStateMax, \clr_{k+1})$ must be a B\"uchi transition, which contradicts that $\clr_{k+1} \in \colors \setminus \goodColors$.
	\end{proof}
	\begin{figure}[tbh]
		\centering
		\begin{tikzpicture}[every node/.style={font=\small,inner sep=1pt}]
			\draw (0,0) node[diamant] (q) {$\atmtnStateMax$};
			\draw ($(q)+(2.5,0)$) node[diamant] (q1) {$\atmtnState_1$};
			\draw ($(q)-(2.5,0)$) node[diamant] (q2) {$\atmtnState_2$};
			\draw (q) edge[-latex',out=30,in=180-30,decorate,distance=0.8cm] node[above=4pt] {$\word_{\le k}$} (q1);
			\draw (q1) edge[-latex',out=180+30,in=-30,decorate,distance=0.8cm] node[below=4pt] {$\word'$} (q);
			\draw (q1) edge[-latex',accepting] node[above=4pt] {$\clr_{k+1}$} ($(q1)+(1.5,0)$);
			\draw (q) edge[-latex',out=180-30,in=30,decorate,distance=0.8cm] node[above=4pt] {$\word'$} (q2);
			\draw (q2) edge[-latex',out=-30,in=180+30,decorate,distance=0.8cm] node[below=4pt] {$\word_{\le k}$} (q);
			\draw (q) edge[-latex',accepting] node[left=2pt] {$\clr_{k+1}$} ($(q)+(0,1.5)$);
		\end{tikzpicture}
		\caption{Situation in the proof of Lemma~\ref{lem:PrefIndMeansBuchi}, with $\word_{\le k}\in\safe{\atmtnStateMax}$ but $\word_{\le k}\clr_{k+1}\notin\safe{\atmtnStateMax}$.}
		\label{fig:wkn}
	\end{figure}

	\subsubsection{General case} \label{sec:generalCase}
	We now remove the prefix-independence assumption on $\wc$.
	We still assume that $\wc$ is half-positional over finite one-player arenas, and show that $\wc$ can be recognized by a B\"uchi automaton built on top of~$\minStateAtmtn$.
	If $\dba$ has exactly one state per equivalence class of $\prefEq$, it means that it is built on top of $\minStateAtmtn$, and we are done.
	If not, let $\qEquiv\in\atmtnStates$ be a state such that~$\card{\eqClass{\qEquiv}} \ge 2$.

	We briefly sketch the proof of this section.
	\begin{proof}[Proof sketch]
		Our proof will show how to modify $\dba$ by ``merging'' all states in equivalence class $\eqClass{\qEquiv}$ into a single state, while still recognizing the same objective $\wc$.
		The main technical argument is to build a variant $\wc_{\eqClass{\qEquiv}}$ of objective $\wc$ on a new set of colors $\alphProof_{\eqClass{\qEquiv}}$, that turns out to also be half-positional over finite one-player arenas and DBA-recognizable, but which is \emph{prefix-independent}.
		We can therefore use Lemma~\ref{lem:PrefIndMeansBuchi} from Section~\ref{sec:PI} and find $\alphProofGood \subseteq \alphProof_{\eqClass{\qEquiv}}$ such that $\wc_{\eqClass{\qEquiv}} = \Buchi{\alphProofGood}$.
		Then, we exhibit a state $\atmtnStateMax \in \eqClass{\qEquiv}$ whose $\accepSet$-free words are tightly linked to the elements of $\alphProofGood$ (Lemma~\ref{lem:generalCaseAccepting} and Corollary~\ref{cor:generalCase}).
		Finally, akin to the way we removed an $\accepSet$-free-maximum in Lemma~\ref{lem:uniqueMax}, we show that it is still possible to recognize $\wc$ while keeping only state $\atmtnStateMax$ in $\eqClass{\qEquiv}$ (Lemma~\ref{lem:B'sameLanguage}).

		Once we know how to merge the equivalence class $\eqClass{\qEquiv}$ into a single state, we can simply repeat the operation for each equivalence class with multiple states, until we obtain a DBA built on top of $\prefClass$.
	\end{proof}

	We define a new set of colors $\alphProof_{\eqClass{\qEquiv}}$ using finite words in $\colors^+$ such that
	\[
	\alphProof_{\eqClass{\qEquiv}} = \{ \word = \clr_1\ldots\clr_i\in\colors^+ \mid \atmtnUpdWord(\qEquiv, \word) \prefEq \qEquiv \land \forall j\ \text{s.t.}\ 1 \le j < i,   \atmtnUpdWord(\qEquiv, \clr_1\ldots\clr_j) \not\prefEq \qEquiv\}.
	\]
	This (possibly infinite) set contains all the finite words that, read from $\qEquiv$, come back to a state in $\eqClass{\qEquiv}$ for the first time after being read.
	By Lemma~\ref{lem:alwaysEquiv}, for all $\atmtnState \in \eqClass{\qEquiv}$, for all $\word\in\alphProof_{\eqClass{\qEquiv}}$, we also have that $\atmtnUpdWord(\atmtnState, \word) \prefEq \atmtnUpdWord(\qEquiv, \word) \prefEq \qEquiv$.
	The set $\alphProof_{\eqClass{\qEquiv}}$ therefore corresponds to the set of words with the seemingly stronger property that, when read from any state in $\eqClass{\qEquiv}$, come back to a state in $\eqClass{\qEquiv}$ for the first time.
	We define an objective $\wc_{\eqClass{\qEquiv}}$ of infinite words on this new set of colors such that
	\[
	\wc_{\eqClass{\qEquiv}} = \{\word_1\word_2\ldots \in \alphProof_{\eqClass{\qEquiv}}^\omega \mid \word_1\word_2\ldots \in \atmtnLang{\dba^{\qEquiv}}\}.
	\]
	We show that $\wc_{\eqClass{\qEquiv}}$ has the three conditions allowing us to apply Lemma~\ref{lem:PrefIndMeansBuchi} to it.
	\begin{itemize}
		\item Objective $\wc_{\eqClass{\qEquiv}}$ is DBA-recognizable: we consider the DBA $\dba_{\eqClass{\qEquiv}} = (\eqClass{\qEquiv}, \alphProof_{\eqClass{\qEquiv}}, \qEquiv, \atmtnUpd', \accepSet')$, whose update function $\atmtnUpd'$ is the restriction of $\atmtnUpdWord$ to $\eqClass{\qEquiv} \times \alphProof_{\eqClass{\qEquiv}}$, and $\accepSet' = \{(\atmtnState, \word) \in \eqClass{\qEquiv} \times \alphProof_{\eqClass{\qEquiv}} \mid \word \notin \safe{\atmtnState}\}$.
		\item Objective $\wc_{\eqClass{\qEquiv}}$ is prefix-independent, as adding or removing a finite number of cycles on $\qEquiv$ does not affect the accepted status of a word in $\atmtnLang{\dba^{\qEquiv}}$.
		\item Half-positionality of $\wc_{\eqClass{\qEquiv}}$ over finite one-player arenas is implied by half-positionality of $\wc$ over finite one-player arenas.
		Indeed, every (finite one-player) arena $\arena_{\eqClass{\qEquiv}}$ using colors in $\alphProof_{\eqClass{\qEquiv}}$ can be transformed into a (finite one-player) arena $\arena$ with similar properties using colors in $\colors$.
		Two transformations are applied: $(i)$ we replace every $\alphProof_{\eqClass{\qEquiv}}$-colored edge in $\arena_{\eqClass{\qEquiv}}$ by a corresponding finite chain of $\colors$-colored edges, and $(ii)$ for every vertex $\s$ of $\arena_{\eqClass{\qEquiv}}$, we prefix it with a finite chain of $\colors$-colored edges starting from a vertex $\s'$ reading a word $\word_{\qEquiv}$ such that $\atmtnUpdWord(\atmtnInit, \word_{\qEquiv}) = \qEquiv$.
		We then have that $\Pone$ has a winning strategy from a vertex $\s$ in $\arena_{\eqClass{\qEquiv}}$ if and only if $\Pone$ has a winning strategy from $\s'$ in $\arena$, and a positional winning strategy from $\s'$ in $\arena$ can be transformed into a positional winning strategy from $\s$ in $\arena_{\eqClass{\qEquiv}}$.
	\end{itemize}
	By Lemma~\ref{lem:PrefIndMeansBuchi}, there exists a set $\alphProofGood \subseteq \alphProof_{\eqClass{\qEquiv}}$ such that $\wc_{\eqClass{\qEquiv}} = \Buchi{\alphProofGood}$.

	We now show links between $\accepSet$-free words from states of $\eqClass{\qEquiv}$ and the words in $\alphProofGood$.
	The arguments once again rely on the saturation of $\dba$.
	\begin{lem} \label{lem:generalCaseAccepting}\hfill
		\begin{itemize}
			\item Let $\word\in\alphProof_{\eqClass{\qEquiv}}$.
			If $\word\in\alphProofGood$, then for all $\atmtnState\in\eqClass{\qEquiv}$, $\word \notin \safe{\atmtnState}$.
			\item There exists $\atmtnState\in\eqClass{\qEquiv}$ such that, for all $\word\in\alphProof_{\eqClass{\qEquiv}} \setminus \alphProofGood$, $\word\in\safe{\atmtnState}$.
		\end{itemize}
	\end{lem}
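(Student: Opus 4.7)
I will treat the two statements in turn, both leveraging the saturation of $\dba$ and the identity $\wc_{\eqClass{\qEquiv}} = \Buchi{\alphProofGood}$ produced by Lemma~\ref{lem:PrefIndMeansBuchi}.

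For the first statement, my plan is to argue by contradiction: assume $\word \in \alphProofGood$ while $\word \in \safe{\atmtnState}$ for some $\atmtnState \in \eqClass{\qEquiv}$, and produce a $\colors^\omega$-word that, read from $\qEquiv$, is simultaneously rejected by $\dba$ and accepted by the ``macro'' automaton $\dba_{\eqClass{\qEquiv}}$. Using that $\dba$ is saturated, Lemma~\ref{lem:safeComp} gives $\word' \in \colors^*$ with $\word\word' \in \safeCycles{\atmtnState}$, so $(\word\word')^\omega$ labels an $\accepSet$-free infinite run from $\atmtnState$ and is therefore rejected from $\atmtnState$; since $\atmtnState \prefEq \qEquiv$, this already gives $(\word\word')^\omega \notin \inverse{\qEquiv}\wc$. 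On the other hand, I will parse $(\word\word')^\omega$ as the element $\word\cdot\word'\cdot\word\cdot\word'\cdots$ of $\alphProof_{\eqClass{\qEquiv}}^\omega$. Here I need to check that $\word' \in \alphProof_{\eqClass{\qEquiv}}$, which follows from Lemma~\ref{lem:alwaysEquiv}: the states $\qEquiv$ and $\atmtnUpdWord(\atmtnState,\word)$ both lie in $\eqClass{\qEquiv}$ and are thus $\prefEq$-equivalent, and appending $\word'$ preserves this equivalence, so $\atmtnUpdWord(\qEquiv,\word') \prefEq \atmtnUpdWord(\atmtnState,\word\word') = \atmtnState \prefEq \qEquiv$. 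Since $\word \in \alphProofGood$, the decomposition hits $\alphProofGood$ infinitely often, placing it in $\Buchi{\alphProofGood} = \wc_{\eqClass{\qEquiv}}$; hence $(\word\word')^\omega \in \inverse{\qEquiv}\wc$, the desired contradiction.

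For the second statement, my plan is to unfold the construction behind Lemma~\ref{lem:PrefIndMeansBuchi} when applied to $\dba_{\eqClass{\qEquiv}}$. After (harmlessly) saturating $\dba_{\eqClass{\qEquiv}}$, Lemmas~\ref{lem:uniqueMax},~\ref{lem:noStrat}, and~\ref{ref:existenceStrongSafeMaximum} yield an $\accepSet'$-free-cycle-maximum $\atmtnStateMax \in \eqClass{\qEquiv}$, and the proof of Lemma~\ref{lem:PrefIndMeansBuchi} defines $\alphProofGood$ as exactly those letters $\word \in \alphProof_{\eqClass{\qEquiv}}$ for which the single transition $(\atmtnStateMax,\word)$ belongs to $\accepSet'$. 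Unwinding the definition of $\accepSet'$, the complement $\alphProof_{\eqClass{\qEquiv}} \setminus \alphProofGood$ is precisely $\{\word \in \alphProof_{\eqClass{\qEquiv}} \mid \word \in \safe{\atmtnStateMax}\}$, so taking $\atmtnState = \atmtnStateMax$ gives the witness.

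I expect the main obstacle to be the bookkeeping in part 1: one has to be careful to distinguish the ``macro-letter'' view of $\alphProof_{\eqClass{\qEquiv}}^\omega$ from its flattening to $\colors^\omega$, and in particular to verify that the chosen periodic decomposition really is a word over $\alphProof_{\eqClass{\qEquiv}}$. Part 2, by contrast, is essentially a matter of naming the object built in Section~\ref{sec:PI} and observing that, at the level of a single macro-letter, $\safe$-freeness in $\dba_{\eqClass{\qEquiv}}$ coincides with $\safe$-freeness in $\dba$.
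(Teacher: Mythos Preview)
Your argument for the first item is essentially the paper's: both extend $\word$ to an $\accepSet$-free cycle $\word\word'$ via Lemma~\ref{lem:safeComp}, observe that $(\word\word')^\omega \notin \inverse{\qEquiv}\wc$, and contrast this with the $\alphProof_{\eqClass{\qEquiv}}$-parsing in $\Buchi{\alphProofGood}$. Your check that $\word' \in \alphProof_{\eqClass{\qEquiv}}$ via Lemma~\ref{lem:alwaysEquiv} is even a bit more explicit than the paper. One small omission: Lemma~\ref{lem:safeComp} may return $\word' = \emptyWord$, in which case your parsing $\word\cdot\word'\cdot\word\cdot\word'\cdots$ is not a word over $\alphProof_{\eqClass{\qEquiv}} \subseteq \colors^+$; the paper handles this by replacing $\word'$ with $\word$ when needed, and you should too.

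For the second item your route is genuinely different. The paper treats $\alphProofGood$ as a black box satisfying $\wc_{\eqClass{\qEquiv}} = \Buchi{\alphProofGood}$ and argues by contradiction: if every $\atmtnState \in \eqClass{\qEquiv}$ admitted some $\word_\atmtnState \in (\alphProof_{\eqClass{\qEquiv}} \setminus \alphProofGood) \setminus \safe{\atmtnState}$, one chains the $\word_\atmtnState$'s along the (finite) orbit in $\eqClass{\qEquiv}$ into a periodic word whose $\dba$-run sees infinitely many B\"uchi transitions yet whose macro-letter sequence never meets $\alphProofGood$. Your approach instead opens up the proof of Lemma~\ref{lem:PrefIndMeansBuchi} and names the witnessing state as the $\accepSet'$-free-cycle-maximum $\atmtnStateMax$ produced there. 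This works, and it is conceptually pleasant because it identifies the witness explicitly rather than just asserting its existence. Two caveats, though. First, Lemma~\ref{lem:uniqueMax} may \emph{delete states} from $\dba_{\eqClass{\qEquiv}}$ and a subsequent re-saturation may \emph{enlarge} $\accepSet'$, so ``unwinding the definition of $\accepSet'$'' is not quite immediate; the reason the needed direction survives is that both operations can only add B\"uchi transitions, hence $(\atmtnStateMax,\word) \notin \widetilde{\accepSet}$ still forces $(\atmtnStateMax,\word) \notin \accepSet'_{\mathrm{orig}}$, i.e.\ $\word \in \safe{\atmtnStateMax}$. You should say this. Second, your claim that the complement is ``precisely'' $\{\word \mid \word \in \safe{\atmtnStateMax}\}$ asserts both inclusions, but only $\alphProof_{\eqClass{\qEquiv}} \setminus \alphProofGood \subseteq \{\word \mid \word \in \safe{\atmtnStateMax}\}$ is needed here (the reverse inclusion is essentially the content of the first item, and is exactly what Corollary~\ref{cor:generalCase} packages afterwards). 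The paper's black-box argument avoids all of this bookkeeping at the cost of being less constructive.
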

	\begin{proof}
		For the first statement, we prove the contrapositive.
		We assume that there exists a state $\atmtnState\in\eqClass{\qEquiv}$ such that $\word\in\safe{\atmtnState}$.
		By Lemma~\ref{lem:safeComp}, there is $\word'\in\colors^*$ such that $\word\word'\in\safeCycles{\atmtnState}$.
		In particular, $(\word\word')^\omega \notin \atmtnLang{\dba^{\atmtnState}} = \atmtnLang{\dba^{\qEquiv}}$.
		We can assume w.l.o.g.\ that $\word'\in\colors^+$ (if $\word' = \emptyWord$, then we can simply replace it with $\word' = \word$).
		As $\word'$ is read from a state in $\eqClass{\qEquiv}$ to a state in $\eqClass{\qEquiv}$, it can be written uniquely as $\word' = \word_1'\ldots\word_n'$ for some words $\word_i'\in\alphProof_{\eqClass{\qEquiv}}$.
		Therefore, $(\word\word_1'\ldots\word_n')^\omega$ is also an infinite word on $\alphProof_{\eqClass{\qEquiv}}$, and we have $(\word\word_1'\ldots\word_n')^\omega \notin \wc_{\eqClass{\qEquiv}}$ since $(\word\word_1'\ldots\word_n')^\omega \notin \atmtnLang{\dba^{\qEquiv}}$.
		As $\wc_{\eqClass{\qEquiv}} = \Buchi{\alphProofGood}$, clearly $\word\notin\alphProofGood$, which ends the proof of the first statement.

		For the second statement, assume by contradiction that for all $\atmtnState\in\eqClass{\qEquiv}$, there exists $\word_{\atmtnState}\in\alphProof_{\eqClass{\qEquiv}} \setminus \alphProofGood$ such that $\word_{\atmtnState}\notin\safe{\atmtnState}$.
		As there are only finitely many states in $\eqClass{\qEquiv}$, it is then possible to build a word $\word = \word_{\atmtnState_1}\ldots\word_{\atmtnState_n}$ such that for all $1 \le i < n$, $\atmtnUpdWord(\atmtnState_i, \word_{\atmtnState_i}) = \atmtnState_{i+1}$, $\atmtnUpdWord(\atmtnState_n, \word_{\atmtnState_n}) = \atmtnState_1$, and for all $1 \le i \le n$, $\word_{\atmtnState_i} \in \alphProof_{\eqClass{\qEquiv}} \setminus \alphProofGood$ and $\word_{\atmtnState_i} \notin \safe{\atmtnState_i}$.
		Word $\word^\omega$ is accepted from $\atmtnState_1$ as it uses infinitely many B\"uchi transitions, so it is in $\atmtnLang{\dba^{\atmtnState_1}} = \atmtnLang{\dba^{\qEquiv}}$.
		However, if we consider $(\word_{\atmtnState_1}\ldots\word_{\atmtnState_n})^\omega$ as an infinite word on $\alphProof_{\eqClass{\qEquiv}}$, then it is not in $\wc_{\eqClass{\qEquiv}} = \Buchi{\alphProofGood}$ as every letter of the word is in $\alphProof_{\eqClass{\qEquiv}} \setminus \alphProofGood$.
		This yields a contradiction.
	\end{proof}

	We use the previous result in a straightforward way to exhibit a state $\atmtnStateMax$ whose non-$\accepSet$-free words in $\alphProof_{\eqClass{\qEquiv}}$ are exactly the words in $\alphProofGood$.
	The reader may notice that, echoing the proof of the prefix-independent case (Section~\ref{sec:PI}), the state $\atmtnStateMax$ given by Corollary~\ref{cor:generalCase} is actually an $\accepSet$-free-maximum among states in $\eqClass{\qEquiv}$.
	\begin{cor} \label{cor:generalCase}
		There exists $\atmtnStateMax\in\eqClass{\qEquiv}$ such that for all $\word\in\alphProof_{\eqClass{\qEquiv}}$, $\word\in\alphProofGood$ if and only if $\word\notin\safe{\atmtnStateMax}$.
	\end{cor}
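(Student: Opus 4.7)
The plan is to let $\atmtnStateMax$ be precisely the state $\atmtnState \in \eqClass{\qEquiv}$ whose existence is guaranteed by the second item of Lemma~\ref{lem:generalCaseAccepting}, and then verify both directions of the biconditional by a short combination of the two items of that lemma.

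First I would fix $\atmtnStateMax \in \eqClass{\qEquiv}$ such that for every $\word \in \alphProof_{\eqClass{\qEquiv}} \setminus \alphProofGood$, we have $\word \in \safe{\atmtnStateMax}$; this is exactly the second item of Lemma~\ref{lem:generalCaseAccepting}. Then, to prove the $(\Rightarrow)$ direction of the corollary, suppose $\word \in \alphProofGood$. Applying the first item of Lemma~\ref{lem:generalCaseAccepting} with the specific state $\atmtnStateMax \in \eqClass{\qEquiv}$ yields $\word \notin \safe{\atmtnStateMax}$, as required.

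For the $(\Leftarrow)$ direction, I would argue by contrapositive: assume $\word \in \alphProof_{\eqClass{\qEquiv}}$ with $\word \notin \alphProofGood$. Then by the defining property of $\atmtnStateMax$ recalled above, $\word \in \safe{\atmtnStateMax}$. Taking the contrapositive gives that $\word \notin \safe{\atmtnStateMax}$ implies $\word \in \alphProofGood$, closing the biconditional.

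There is no substantial obstacle here: all the technical content (the use of saturation, the construction of cycles witnessing rejection, and the combinatorial pigeonhole argument over $\eqClass{\qEquiv}$) is already packaged inside Lemma~\ref{lem:generalCaseAccepting}. The corollary is essentially a one-line synthesis observing that the state $\atmtnStateMax$ produced by the second item also satisfies the implication of the first item when instantiated at $\atmtnState = \atmtnStateMax$, so the two halves match up and characterize $\alphProofGood$ as exactly the complement of $\safe{\atmtnStateMax}$ within $\alphProof_{\eqClass{\qEquiv}}$.
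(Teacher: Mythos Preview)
Your proposal is correct and follows essentially the same approach as the paper: pick $\atmtnStateMax$ from the second item of Lemma~\ref{lem:generalCaseAccepting}, use the first item for the forward direction, and the contrapositive of the defining property of $\atmtnStateMax$ for the backward direction. The paper's proof is identical in structure and content.
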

	\begin{proof}
		By the second item of Lemma~\ref{lem:generalCaseAccepting}, we take $\atmtnStateMax\in\eqClass{\qEquiv}$ such that, for all $\word\in\alphProof_{\eqClass{\qEquiv}} \setminus \alphProofGood$, $\word\in\safe{\atmtnStateMax}$.
		Let $\word\in\alphProof_{\eqClass{\qEquiv}}$.
		The property we already have on $\atmtnStateMax$ gives us by contrapositive that for $\word\in\alphProof_{\eqClass{\qEquiv}}$, $\word\notin\safe{\atmtnStateMax}$ implies that $\word\in \alphProofGood$.
		Reciprocally, the first item of Lemma~\ref{lem:generalCaseAccepting} gives us that if $\word\in\alphProofGood$, then $\word\notin\safe{\atmtnStateMax}$.
	\end{proof}

	From now on, we assume that $\atmtnStateMax\in\eqClass{\qEquiv}$ is a state having the property of Corollary~\ref{cor:generalCase}.
	We show that $\wc$ can be recognized by a smaller DBA consisting of DBA $\dba$ in which all the states in $\eqClass{\qEquiv}$ have been merged into the single state $\atmtnStateMax$, by redirecting all incoming transitions of $\eqClass{\qEquiv}$ to $\atmtnStateMax$.
	We assume w.l.o.g.\ that if $\atmtnInit \in \eqClass{\qEquiv}$, then $\atmtnInit = \atmtnStateMax$ (this does not change the objective recognized by $\dba$, and will be convenient in the upcoming construction).
	We consider DBA $\dba' = (\atmtnStates', \colors, \atmtnInit', \atmtnUpd', \accepSet')$ with
	\begin{itemize}
		\item $\atmtnStates' = (\atmtnStates \setminus \eqClass{\qEquiv}) \cup \{\atmtnStateMax\}$, $\atmtnInit' = \atmtnInit$,
		\item for $\atmtnState\in\atmtnStates'$ and $\clr\in\colors$, if $\atmtnUpd(\atmtnState, \clr) \in \eqClass{\qEquiv}$, then $\atmtnUpd'(\atmtnState, \clr) = \atmtnStateMax$; otherwise, $\atmtnUpd'(\atmtnState, \clr) = \atmtnUpd(\atmtnState, \clr)$,
		\item for $\atmtnState\in\atmtnStates'$ and $\clr\in\colors$, $(\atmtnState, \clr)\in\accepSet'$ if and only if $(\atmtnState, \clr)\in\accepSet$.
	\end{itemize}
	We also assume that states that are not reachable from $\atmtnInit'$ in $\dba'$ are removed from $\atmtnStates'$.

	\begin{lem} \label{lem:B'sameLanguage}
		The objective recognized by $\dba'$ is also $\wc$.
	\end{lem}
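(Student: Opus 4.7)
The plan is to fix an arbitrary infinite word $\word = \clr_1 \clr_2 \ldots \in \colors^\omega$ and show that its runs $\run$ in $\dba$ and $\run'$ in $\dba'$ either both visit B\"uchi transitions infinitely often or neither does. Writing $\run = (q_0,\clr_1,q_1)(q_1,\clr_2,q_2)\ldots$ and $\run' = (q'_0,\clr_1,q'_1)(q'_1,\clr_2,q'_2)\ldots$, a short induction using Lemma~\ref{lem:alwaysEquiv} (together with the convention that $\atmtnInit = \atmtnStateMax$ whenever $\atmtnInit \in \eqClass{\qEquiv}$) shows that $q_i \prefEq q'_i$ for every $i$, and hence that $q'_i = \atmtnStateMax$ if and only if $q_i \in \eqClass{\qEquiv}$. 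I would then split the argument according to whether $\run'$ visits $\atmtnStateMax$ finitely or infinitely often.

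In the \emph{finite} case, pick an index $k$ past which $\run'$ never returns to $\atmtnStateMax$. Every transition taken by $\run'$ from time $k+1$ onward has source outside $\eqClass{\qEquiv}$ and target distinct from $\atmtnStateMax$, so it is not redirected and coincides with a genuine $\dba$-transition; hence $\dba'$ accepts $\word$ iff the suffix $\clr_{k+2}\clr_{k+3}\ldots$ lies in $\inverse{(q'_{k+1})}\wc$. Since $q_{k+1} \prefEq q'_{k+1}$, this set equals $\inverse{(q_{k+1})}\wc$, so the condition is equivalent to $\dba$-acceptance of $\word$.

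In the \emph{infinite} case, decompose $\word = \word_0 \word_1 \word_2 \ldots$, where $\word_0 \in \colors^*$ is read before $\run'$ first reaches $\atmtnStateMax$ and each $\word_i \in \colors^+$ (for $i \ge 1$) is read strictly between the $i$-th and the $(i+1)$-th visits of $\run'$ to $\atmtnStateMax$. By construction, the $\dba'$ sub-run on $\word_i$ starts and ends at $\atmtnStateMax$ while avoiding $\atmtnStateMax$ in between, so its first $|\word_i|-1$ transitions are not redirected and match those of the $\dba$ sub-run on $\word_i$ from $\atmtnStateMax$; its final transition shares source and color — hence B\"uchi status — with the corresponding $\dba$-transition, even though its $\dba$-target may be another state of $\eqClass{\qEquiv}$. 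This gives $\word_i \in \alphProof_{\eqClass{\qEquiv}}$, and it shows that the $\dba'$ sub-run on $\word_i$ visits a B\"uchi transition iff $\word_i \notin \safe{\atmtnStateMax}$, equivalently (by Corollary~\ref{cor:generalCase}) iff $\word_i \in \alphProofGood$. So $\dba'$ accepts $\word$ iff $\word_i \in \alphProofGood$ for infinitely many $i$. Dually, $q_{|\word_0|} \prefEq \qEquiv$ yields $\inverse{(q_{|\word_0|})}\wc = \inverse{\qEquiv}\wc$, so $\dba$ accepts $\word$ iff $\word_1 \word_2 \ldots \in \inverse{\qEquiv}\wc$; viewing this sequence as an element of $\alphProof_{\eqClass{\qEquiv}}^\omega$, membership reduces by definition of $\wc_{\eqClass{\qEquiv}} = \Buchi{\alphProofGood}$ to the very same condition on the $\word_i$'s.

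The main obstacle will be the sub-run alignment in the infinite case: one must carefully exploit that $\run'$ visits $\atmtnStateMax$ only at the boundaries of the segments $\word_i$ to conclude that every transition of the $\dba'$ sub-run coincides with — or at least shares source and color with — the corresponding transition of the $\dba$ sub-run from $\atmtnStateMax$, so that the two sub-runs witness exactly the same B\"uchi transitions. Once this alignment is in place, Corollary~\ref{cor:generalCase} and the definition of $\wc_{\eqClass{\qEquiv}}$ complete the argument.
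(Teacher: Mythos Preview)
Your proposal is correct and follows essentially the same approach as the paper: establish the invariant $q_i \prefEq q'_i$ by induction, split into the finitely-many-visits and infinitely-many-visits cases, and in the infinite case decompose $\word$ into segments between consecutive visits to $\eqClass{\qEquiv}$ (equivalently, to $\atmtnStateMax$ in $\dba'$) and invoke Corollary~\ref{cor:generalCase} together with $\wc_{\eqClass{\qEquiv}} = \Buchi{\alphProofGood}$. The only cosmetic difference is that the paper phrases the case split and the decomposition from the $\dba$ side (visits of $\run$ to $\eqClass{\qEquiv}$) while you phrase it from the $\dba'$ side (visits of $\run'$ to $\atmtnStateMax$); your invariant makes these equivalent, and your explicit alignment of the sub-runs (first $|\word_i|-1$ transitions not redirected, last transition sharing source and color) is a clean way to justify that both sub-runs witness the same B\"uchi transitions.
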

	\begin{proof}
		Let $\word=\clr_1\clr_2\ldots\in\colors^\omega$, $\run = (\atmtnState_0, \clr_1, \atmtnState_1)(\atmtnState_1, \clr_2, \atmtnState_2)\ldots$ be the infinite run of $\dba$ on $\word$, and $\run' = (\atmtnState'_0, \clr_1, \atmtnState'_1)(\atmtnState'_1, \clr_2, \atmtnState'_2)\ldots$ be the infinite run of $\dba'$ on $\word$.
		We have that $\atmtnState_0 = \atmtnState'_0$, but states in both runs may not coincide after a state in $\eqClass{\qEquiv}$ has been reached.
		Yet, we show inductively that
		\begin{align}
			\forall i \ge 0,
			\atmtnState_i \prefEq_\dba \atmtnState'_i.
			\label{eq:equiv}
		\end{align}
		It is true for $i = 0$ (we even have equality in this case), and if $\atmtnState_n \prefEq_\dba \atmtnState'_n$, then by construction of the transitions of $\dba'$ and Lemma~\ref{lem:alwaysEquiv}, we still have $\atmtnState_{n+1} \prefEq_\dba \atmtnState'_{n+1}$.

		We want to show that $\word$ is accepted by $\dba$ if and only if it is accepted by $\dba'$.
		This is clear if $\word$ never goes through a state in $\eqClass{\qEquiv}$ (as the same transitions are then taken in $\dba$ and $\dba'$).

		We first assume that run $\run$ visits $\eqClass{\qEquiv}$ finitely many times, and that the last visit to $\eqClass{\qEquiv}$ happens in $\atmtnState_n$ for some $n \ge 0$.
		Notice that run $\run'$ also visits $\eqClass{\qEquiv}$ for the last time in $\atmtnState'_n$ by~\eqref{eq:equiv}.
		Therefore, word $\clr_{n+1}\clr_{n+2}\ldots$ is accepted from $\atmtnState'_n$ in $\dba'$ if and only if it is accepted from $\atmtnState'_n$ in $\dba$: all the subsequent transitions coincide.
		As $\atmtnState_n \prefEq_\dba \atmtnState'_n$, we have moreover that $\clr_{n+1}\clr_{n+2}\ldots$ is accepted from $\atmtnState'_n$ in $\dba'$ if and only if it is accepted from $\atmtnState_n$ in $\dba$.
		This implies that $\word$ is accepted by $\dba$ if and only if it is accepted by $\dba'$.

		We now assume that run $\run$ visits $\eqClass{\qEquiv}$ infinitely many times.
		We decompose $\word$ inductively into a prefix $\word_{\qEquiv}$ reaching $\eqClass{\qEquiv}$ followed by cycles $\word_1, \word_2, \ldots$ on $\eqClass{\qEquiv}$.
		Formally, let $\word_{\qEquiv}$ be any finite prefix of $\word$ such that $\atmtnUpdWord(\atmtnInit, \word_{\qEquiv})\in\eqClass{\qEquiv}$.
		We then proceed by induction: we assume that $\word_{\qEquiv}\word_1\ldots\word_n$ is a prefix of $\word$ such that $\atmtnUpdWord(\atmtnInit, \word_{\qEquiv}\word_1\ldots\word_n)\in\eqClass{\qEquiv}$.
		We define $\word_{n+1}\in\colors^+$ as the shortest non-empty word such that
		$\word_{\qEquiv}\word_1\ldots\word_n\word_{n+1}$ is a prefix of $\word$ and $\atmtnUpdWord(\atmtnInit, \word_{\qEquiv}\word_1\ldots\word_n\word_{n+1})\in\eqClass{\qEquiv}$.
		In particular, $\word_{n+1}\in\alphProof_{\eqClass{\qEquiv}}$.
		We have $\word = \word_{\qEquiv}\word_1\word_2\ldots$ by construction.
		By~\eqref{eq:equiv}, we also have that for all $n \ge 0$, $(\atmtnUpd')^*(\atmtnInit', \word_\atmtnState\word_1\ldots\word_n) \prefEq_\dba \qEquiv$, so $(\atmtnUpd')^*(\atmtnInit', \word_\atmtnState\word_1\ldots\word_n) = \atmtnStateMax$ as this is the only state left in that class in $\dba'$.

		If $\word$ is accepted by $\dba$, then for infinitely many $i \ge 1$, word $\word_i$ is in $\alphProofGood$.
		By Corollary~\ref{cor:generalCase}, all these infinitely many words are not in $\safe{\atmtnStateMax}$ and therefore use a B\"uchi transition when read from $\atmtnStateMax$, so $\word$ is also accepted by $\dba'$.

		If $\word$ is rejected by $\dba$, then there exists $n\ge 1$ such that for all $n' \ge n$, $\word_{n'}\in \alphProof_{\eqClass{\qEquiv}} \setminus \alphProofGood$.
		By Corollary~\ref{cor:generalCase}, for all $n' \ge n$, $\word_{n'}$ is in $\safe{\atmtnStateMax}$ and therefore does not use a B\"uchi transition when read from $\atmtnStateMax$, so $\word$ is also rejected by $\dba'$.
	\end{proof}

	We have all the arguments to show our goal for the section (Proposition~\ref{prop:necessaryDBA}), that is, to show that $\wc$ can be recognized by a B\"uchi automaton built on top of $\prefClass$.
	\begin{proof}[Proof of Proposition~\ref{prop:necessaryDBA}]
		We have shown in Lemma~\ref{lem:B'sameLanguage} how to merge an equivalence class of $\dba$ into a single state, while still recognizing the same objective.
		Repeating this construction for each equivalence class with two or more states, we end up with a DBA with exactly one state per equivalence class of $\prefEq$ still recognizing $\wc$.
		By definition of $\prefClass$, this DBA is necessarily built on top of (some automaton isomorphic to) $\prefClass$.
	\end{proof}

	\section{Sufficiency of the conditions} \label{sec:sufficient}
	We show that a DBA $\dba$ with the three conditions from Section~\ref{sec:three-conditions} (recognizing a progress-consistent objective having a total prefix preorder and being recognizable by a B\"uchi automaton built on top of $\minStateAtmtn$) recognizes a half-positional objective.
	As these three conditions have been shown to be necessary for the half-positionality of objectives recognized by a DBA, this will imply a characterization of half-positionality.

	Our main technical tool is to construct, thanks to these three conditions, a family of \emph{completely well-monotonic universal graphs}.
	The existence of such objects implies, thanks to recent results~\cite{Ohl23}, that $\Pone$ has positional optimal strategies, even in two-player arenas of arbitrary cardinality.

	\subsection{Completely well-monotonic universal graphs}
	We fix extra terminology about graphs only used in Section~\ref{sec:sufficient}, and recall the relevant results from~\cite{Ohl23}.

	\paragraph{Extra preliminaries on graphs.}
	Let $\gr = \grFull$ be a graph and $\wc\subseteq\colors^\omega$ be an objective.
	A vertex $\grSt$ of $\gr$ \emph{satisfies $\wc$} if for all infinite paths $\grSt_0 \grTr{\clr_1} \grSt_1 \grTr{\clr_2} \ldots$ from $\grSt$, we have $\clr_1\clr_2\ldots\in\wc$.

	Given two graphs $\gr = \grFull$ and $\gr' = (\grStates', \grEdges')$, a \emph{(graph) morphism from $\gr$ to $\gr'$} is a function $\morph\colon \grStates \to \grStates'$ such that $(\grSt_1, \clr, \grSt_2)\in\grEdges$ implies $(\morph(\grSt_1), \clr, \morph(\grSt_2))\in\grEdges'$.

	A morphism $\morph$ from $\gr$ to $\gr'$ is \emph{$\wc$-preserving} if for all $\grSt\in\grStates$, $\grSt$ satisfies $\wc$ implies that $\morph(\grSt)$ satisfies $\wc$.
	Notice that if $\morph(\grSt)$ satisfies $\wc$, then $\grSt$ satisfies $\wc$, as any path $\grSt \grTr{\clr_1} \grSt_1 \grTr{\clr_2} \ldots$ of $\gr$ implies the existence of a path $\morph(\grSt) \grTr{\clr_1} \morph(\grSt_1) \grTr{\clr_2} \ldots$ of $\gr'$ --- there are ``more paths'' in $\gr'$.
	For $\kappa$ a cardinal, a graph $\univGr$ is \emph{$(\cardinal, \wc)$-universal} if for all graphs $\gr$ of cardinality $< \cardinal$, there is a $\wc$-preserving morphism from $\gr$ to $\univGr$.

	We consider a graph $\gr = \grFull$ along with a total order $\le$ on its vertex set $\grStates$.
	We say that $\gr$ is \emph{monotonic} if for all $\grSt, \grSt', \grSt''\in\grStates$, for all $\clr\in\colors$,
	\begin{itemize}
		\item $(\grSt \grTr{\clr} \grSt'\ \text{and}\ \grSt' \ge \grSt'') \implies \grSt \grTr{\clr} \grSt''$, and
		\item $(\grSt \ge \grSt'\ \text{and}\ \grSt' \grTr{\clr} \grSt'') \implies \grSt \grTr{\clr} \grSt''$.
	\end{itemize}
	This means that $(i)$ whenever there is an edge $\grSt \grTr{\clr} \grSt'$, there is also an edge with color $\clr$ from $\grSt$ to all vertices smaller than $\grSt'$ for $\le$, and $(ii)$ whenever $\grSt \ge \grSt'$, then $\grSt$ has at least the same outgoing edges as $\grSt'$.
	Graph $\gr$ is \emph{well-monotonic} if it is monotonic and the total order $\le$ is a well-order (i.e., any set of vertices has a minimum).
	Graph $\gr$ is \emph{completely well-monotonic} if it is well-monotonic and there exists a vertex $\top\in\grStates$ maximum for $\le$ such that for all $\grSt\in\grStates$, $\clr\in\colors$, $\top \grTr{\clr} \grSt$.

	\begin{exa}
		We provide an example illustrating these notions with $\colors = \{a, b\}$ and $\wc = \Buchi{\{a\}}$.
		This is a special case of our upcoming construction, and it is already discussed in more depth in~\cite[Section~4]{Ohl23}.
		For $\mainOrd$ an ordinal, we define a graph $\univGr_\mainOrd$ with vertices $\univGrStates_\mainOrd = \mainOrd \cup \{\top\}$, and such that for all ordinals $\ord, \ord' < \mainOrd$,
		\begin{itemize}
			\item $\ord \grTr{a} \ord'$, and
			\item $\ord \grTr{b} \ord'$ if and only if $\ord' < \ord$.
		\end{itemize}
		Moreover, for all $\grSt \in \univGrStates_\mainOrd$ (including the case $\grSt = \top$), we define edges $\top \grTr{a} \grSt$ and $\top \grTr{b} \grSt$.
		We order vertices using the natural order on the ordinals, and with $\ord < \top$ for all $\ord < \mainOrd$.

		Vertex $\top$ does not satisfy $\wc$, as there is an infinite path $\top \grTr{b} \top \grTr{b} \ldots$, and $b^\omega\notin\wc$.
		All other vertices satisfy $\wc$ by construction: they cannot reach $\top$, and as reading $b$ decreases the current ordinal, a path cannot have an infinite suffix using only color $b$ (there is no infinite decreasing sequence of ordinals).

		This graph is $(\cardinal, \wc)$-universal for $\cardinal = \card{\mainOrd}$.
		Intuitively, for any graph $\gr$ with less than $\cardinal$ vertices, a $\wc$-preserving morphism from $\gr$ to $\univGr_\mainOrd$ can be defined by mapping vertices not satisfying $\wc$ to $\top$, and vertices satisfying $\wc$ to an ordinal $\ord$ that depends on how long it may take to guarantee reading an $a$ from them.

		Graph $\univGr_\mainOrd$ is monotonic, which can be quickly checked with the definition.
		As $\le$ is a well-order and there is a vertex $\top$ with the right properties, $\univGr_\mainOrd$ is even completely well-monotonic.
		The properties of $\univGr_\mainOrd$ imply half-positionality of $\Buchi{\{a\}}$, thanks to the \linebreak[1] following theorem.
		\qedEx
	\end{exa}

	We state an important result linking half-positionality and completely well-monotonic universal graphs from~\cite{Ohl23}.
	\begin{thmC}[{(One direction of~\cite[Theorem~3.1]{Ohl23})}] \label{thm:univGr}
		Let $\wc\subseteq \colors^\omega$ be an objective.
		If for all cardinals $\cardinal$, there exists a completely well-monotonic $(\cardinal, \wc)$-universal graph, then $\wc$ is half-positional (over all arenas).
	\end{thmC}
	A more precise result from Ohlmann's thesis~\cite[Theorem~1.1]{OhlmannThesis} can actually be instantiated on more precise classes of arenas.
	However, we use it to prove here half-positionality of a family of objectives over \emph{all} arenas, so the above result turns out to be sufficient.

	\newcommand{\neutralLetter}{\ensuremath{e}}
	\begin{rem}\label{rem:epsilon-positional}
		Our approach also implies the half-positionality of $\wc$ over arenas with $\epsilon$-edges (cf.\ Remark~\ref{rem:eps-transitions-definition}).
		Indeed, we can add a fresh color $\neutralLetter$ to $\colors$ and define an objective $\wc_\neutralLetter\subseteq (\colors\cup \{\neutralLetter\})^\omega$ such that for $w\in (\colors\cup \{\neutralLetter\})^\omega$,
		\[ w\in \wc_\neutralLetter \; \Longleftrightarrow \; \left\{ \begin{array}{c}
			\text{the word obtained by removing $\neutralLetter$ is infinite and belongs to $\wc$, or}\\
			w = w'\neutralLetter^\omega \text{ for some } w'\in (\colors\cup \{\neutralLetter\})^*.
		\end{array}\right.\]
		Then, the existence of a family of completely well-monotonic $(\cardinal, \wc)$-universal graphs for $\wc$ implies that $\wc_\neutralLetter$ is half-positional~\cite{Ohl23} (in the terminology of~\cite{Ohl23}, $\neutralLetter$ is a \emph{strongly neutral letter}).
		Therefore, we can label the $\epsilon$-edges of any arena with $\neutralLetter$ and obtain an equivalent game with objective~$\wc_\neutralLetter$.
		\qedEx
	\end{rem}

	\subsection{Universal graphs for B\"uchi automata}
	We show that for a DBA-recognizable objective, the three conditions that were shown to be necessary for half-positionality in Section~\ref{sec:necessary} are actually sufficient.

	\begin{prop} \label{prop:sufficient}
		Let $\wc\subseteq\colors^\omega$ be an objective that has a total prefix preorder, is progress-consistent, and is recognizable by a B\"uchi automaton built on top of $\prefClass$.
		Then, $\wc$ is half-positional.
	\end{prop}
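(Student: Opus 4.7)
The plan is to construct, for every cardinal $\cardinal$, a completely well-monotonic $(\cardinal,\wc)$-universal graph and then invoke Theorem~\ref{thm:univGr}. By condition~\ref{cond:3}, we can fix a DBA $\dba=\dbaFull$ recognizing $\wc$ and built on top of $\prefClass$, which we further assume saturated by Lemma~\ref{lem:saturatedAutomaton}; since every $\prefEq_\dba$-class then contains a single state, condition~\ref{cond:1} makes $\prefOrd_\dba$ a strict linear order on $\atmtnStates$. For an ordinal $\mainOrd$ with $\card{\mainOrd}\ge\cardinal$, define $\univGrB$ with vertex set $\univGrStatesB=(\atmtnStates\times\mainOrd)\cup\{\top\}$, totally ordered lexicographically (states via $\prefOrd_\dba$, ordinals via $\le$, $\top$ on top). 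From $(\atmtnState,\ord)$ reading $\clr$, the edge goes to $(\atmtnUpd(\atmtnState,\clr),\ord')$ for any $\ord'$ whenever $(\atmtnState,\clr)\in\accepSet$ or $\atmtnUpd(\atmtnState,\clr)\strictInvPrefOrd_\dba\atmtnState$ (strict state-progress), and for $\ord'<\ord$ otherwise; edges to every smaller vertex are added by monotonicity, and $\top$ has an outgoing edge to every vertex with every color.

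The first step is to check that $\univGrB$ is completely well-monotonic. Well-foundedness and the presence of $\top$ are immediate, and monotonicity~M1 holds by construction. For~M2, given $\atmtnState_2\prefOrd_\dba\atmtnState_1$ and an edge out of $(\atmtnState_2,\ord_2)$, Lemma~\ref{lem:monotonTrans} gives $\atmtnUpd(\atmtnState_2,\clr)\prefOrd_\dba\atmtnUpd(\atmtnState_1,\clr)$, from which most subcases follow straightforwardly. The delicate subcase is $\atmtnUpd(\atmtnState_1,\clr)=\atmtnUpd(\atmtnState_2,\clr)\prefOrd_\dba\atmtnState_1$ with $(\atmtnState_2,\clr)\in\accepSet$ and $\atmtnState_2\strictPrefOrd_\dba\atmtnState_1$: I would argue that $(\atmtnState_1,\clr)$ must then also be in $\accepSet$. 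Indeed, if not, saturation and Lemma~\ref{lem:safeComp} yield an $\accepSet$-free word $\word'$ with $\clr\word'\in\safeCycles{\atmtnState_1}$. Then $\atmtnUpdWord(\atmtnState_2,\clr\word')=\atmtnState_1$, so $\atmtnState_2\strictPrefOrd_\dba\atmtnUpdWord(\atmtnState_2,\clr\word')$, and condition~\ref{cond:2} (progress-consistency) demands $(\clr\word')^\omega\in\inverse{\atmtnState_2}\wc$; but the run of $\dba$ from $\atmtnState_2$ on $(\clr\word')^\omega$ uses only the initial B\"uchi transition, hence is rejecting, a contradiction.

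The second step is to check that every non-top vertex $(\atmtnState,\ord)$ satisfies $\wc$: any infinite path $(\atmtnState,\ord)\grTr{\clr_1}(\atmtnState_1,\ord_1)\grTr{\clr_2}\ldots$ yields $\clr_1\clr_2\ldots\in\inverse{\atmtnState}\wc$. A Lemma~\ref{lem:monotonTrans}-based induction gives $\atmtnState_i\prefOrd_\dba\atmtnUpdWord(\atmtnState,\clr_1\ldots\clr_i)$. If the word were rejected, the DBA run would eventually stay in a single $\accepSet$-free component; letting $\atmtnState_*$ be the $\prefOrd_\dba$-minimum state visited infinitely often in the universal path, one examines the segments between consecutive returns to $\atmtnState_*$ and shows, using well-foundedness of $\mainOrd$ and progress-consistency through saturation, that these segments must eventually trigger a B\"uchi transition of $\dba$ along the DBA run itself, contradicting the trap. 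The third step, $(\cardinal,\wc)$-universality, is more routine: for $\gr=\grFull$ with $\card{\grStates}<\cardinal$, define $\morph$ mapping vertices not satisfying $\wc$ to $\top$, and a vertex $\grSt$ satisfying $\wc$ to $(\atmtnState,\ord)$ where $\atmtnState$ is the $\prefOrd_\dba$-maximum state in $\atmtnStates$ such that every infinite path from $\grSt$ lies in $\inverse{\atmtnState}\wc$, and $\ord$ is a progress measure defined by transfinite recursion, bounded above by $\card{\grStates}$ and hence fitting in $\mainOrd$. The main obstacle lies in the second step: matching the intricate interaction between the universal-graph path and the DBA run requires deploying progress-consistency, saturation, and the prefix-classifier structure in concert.
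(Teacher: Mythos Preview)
Your construction of $\univGrB$ has one rule the paper's does not: you reset the ordinal whenever $\atmtnUpd(\atmtnState,\clr)\strictInvPrefOrd_\dba\atmtnState$, even if $(\atmtnState,\clr)\notin\accepSet$. This single addition breaks your second step. Take the DBA of Remark~\ref{rmk:usefulProduct} (Figure~\ref{fig:usefulProduct}), which satisfies all three hypotheses and is saturated. There $\atmtnState_1\grTr{a}\atmtnState_2$ is non-B\"uchi strict state-progress and $\atmtnState_2\grTr{b}\atmtnState_1$ is non-B\"uchi state-regress, so in your graph
\[
(\atmtnState_1,0)\grTr{a}(\atmtnState_2,1)\grTr{b}(\atmtnState_1,0)\grTr{a}(\atmtnState_2,1)\grTr{b}\cdots
\]
is a legitimate infinite path. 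Its label $(ab)^\omega$ is an $\accepSet$-free cycle on $\atmtnState_1$, hence $(ab)^\omega\notin\inverse{\atmtnState_1}\wc$, and $(\atmtnState_1,0)$ does \emph{not} satisfy $\inverse{\atmtnState_1}\wc$. Your outlined argument for step two cannot salvage this: here $\atmtnState_*=\atmtnState_1$, the ordinal at consecutive visits to $\atmtnState_*$ is constantly $0$, the segment word $ab$ is a genuine DBA-cycle on $\atmtnState_1$ (so progress-consistency is vacuous), and the DBA run sees no B\"uchi transition.

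The paper's graph forbids exactly this: on a non-B\"uchi transition the ordinal must strictly decrease whenever you land on the \emph{exact} successor $\atmtnUpd(\atmtnState,\clr)$, irrespective of whether that successor is larger than $\atmtnState$; the ordinal is unrestricted only when you drop to some $\atmtnState''\strictPrefOrd\atmtnUpd(\atmtnState,\clr)$. Lemma~\ref{lem:winningCycle} then gives the key invariant (a graph-cycle with non-decreasing ordinal on a fixed first coordinate forces $\word^\omega$ to be accepted), and its second bullet uses precisely that following the DBA exactly on non-B\"uchi transitions strictly decreases the ordinal. Two smaller points: in your morphism the automaton component should be the $\prefOrd_\dba$-\emph{minimum} $\atmtnState$ with $\grSt$ satisfying $\inverse{\atmtnState}\wc$ (the set of such $\atmtnState$ is upward-closed, so its maximum is always the top state); and the ordinal bound must include the factor $\card{\atmtnStates}$, as Remark~\ref{rmk:usefulProduct} shows that $\card{\grStates}$ alone does not suffice even for finite graphs.
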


	The rest of the section is devoted to the proof of this result, using Theorem~\ref{thm:univGr}.
	Let $\wc\subseteq\colors^\omega$ be an objective with a total prefix preorder, that is progress-consistent, and that is recognized by a DBA $\dba = \dbaFull$ built on top of $\minStateAtmtn$ for the rest of this section.
	We assume as in previous sections that $\dba$ is saturated (in particular, by Lemma~\ref{lem:safeComp}, any $\accepSet$-free path can be extended to an $\accepSet$-free cycle).
	As $\dba$ is built on top of $\minStateAtmtn$, for $\atmtnState, \atmtnState'\in \atmtnStates$, $\atmtnState \prefEq \atmtnState'$ if and only if~$\atmtnState = \atmtnState'$.

	For $\mainOrd$ an ordinal, we build a graph $\univGrB$ in the following way.
	\begin{itemize}
		\item We set the vertices as $\univGrStatesB = \{(\atmtnState, \ord) \mid \atmtnState\in\atmtnStates, \ord < \mainOrd\} \cup \{\top\}$.
		\item For every transition $\atmtnUpd(\atmtnState, \clr) = \atmtnState'$ of $\dba$,
		\begin{itemize}
			\item if $(\atmtnState, \clr) \in \accepSet$, then for all ordinals $\ord$, $\ord'$, we define an edge $(\atmtnState, \ord) \grTr\clr (\atmtnState', \ord')$;
			\item if $(\atmtnState, \clr) \notin \accepSet$, then for all ordinals $\ord$, $\ord'$ s.t.\ $\ord' < \ord$, we define an edge $(\atmtnState, \ord) \grTr\clr (\atmtnState', \ord')$.
			\item for $\atmtnState'' \strictPrefOrd \atmtnState'$, then for all ordinals $\ord$, $\ord''$, we define an edge $(\atmtnState, \ord) \grTr\clr (\atmtnState'', \ord'')$.
		\end{itemize}
		\item For all $\clr\in\colors$, $\grSt\in\univGrStatesB$, we define an edge $\top \grTr\clr \grSt$.
	\end{itemize}
	We order the vertices lexicographically: $(\atmtnState, \ord) \le (\atmtnState', \ord')$ if $\atmtnState \strictPrefOrd \atmtnState'$ or ($\atmtnState = \atmtnState'$ and $\ord \le \ord'$), and we define $\top$ as the maximum for $\le$ ($(\atmtnState, \ord) < \top$ for all $\atmtnState\in\atmtnStates$, $\ord<\mainOrd$).

	Graph $\univGrB$ is built such that on the one hand, it is sufficiently large and has sufficiently many edges so that there is a morphism from any graph $\gr$ (of cardinality smaller than some function of $\card{\mainOrd}$) to $\univGrB$.
	On the other hand, for the morphism to be $\wc$-preserving, at least some vertices of $\univGrB$ need to satisfy $\wc$, which imposes a restriction on the infinite paths from vertices.
	Graph $\univGrB$ is actually built so that for any automaton state $\atmtnState\in \atmtnStates$ and ordinal $\ord < \mainOrd$, the vertex $(\atmtnState,\ord)$ satisfies $\atmtnLang{\dba^{\atmtnState}}$ (see Lemma~\ref{lem:satisfies}).
	The intuitive idea is that for a non-B\"uchi transition $(\atmtnState, \clr) \notin \accepSet$ of the automaton such that $\atmtnUpd(\atmtnState, \clr) = \atmtnState'$, a $\clr$-colored edge from a vertex $(\atmtnState, \ord)$ in the graph either $(i)$ reaches a vertex with first component $\atmtnState'$, in which case the ordinal must decrease on the second component, or $(ii)$ reaches a vertex with first component $\atmtnState'' \strictPrefOrd \atmtnState'$, with no restriction on the second component, but therefore with fewer winning continuations.
	Using progress-consistency and the fact that there is no infinitely decreasing sequence of ordinals, we can show that this implies that no infinite path in $\univGrB$ corresponds to an infinite run in the automaton visiting only non-B\"uchi transitions.

	We state two properties that directly follow from the definition of $\univGrB$:
	\begin{align}
		&\text{if $(\atmtnState, \ord) \grTr{\clr} (\atmtnState', \ord')$, then $\atmtnState' \prefOrd \atmtnUpd(\atmtnState, \clr)$;} \label{eq:graphProp1}
		\\
		&\text{if $(\atmtnState, \ord) \grTr{\clr} (\atmtnState', \ord')$ and $\ord'' \le \ord'$, then $(\atmtnState, \ord) \grTr{\clr} (\atmtnState', \ord'')$}. \label{eq:graphProp2}
	\end{align}

	\begin{exa} \label{ex:univGrExample}
		We consider again the DBA $\dba$ from Example~\ref{ex:AAorBuchiA}, recognizing the words containing infinitely many occurrences of $a$, or $a$ twice in a row at some point.
		We represent the graph $\univGrB$, with $\mainOrd = \omega$ in Figure~\ref{fig:univGrExample}.
		\qedEx
	\end{exa}
	\begin{figure}[htb]
		\centering
		\begin{tikzpicture}[every node/.style={font=\small}]
			\draw (0,0) node[matrix,carre] (qinit) {
				\draw (0,0) node[rond,font=\scriptsize,minimum height=3mm] (qinit1) {$0$};
				\draw ($(qinit1)+(0.8,0)$) node[rond,font=\scriptsize,minimum height=3mm] () {$1$};
				\draw ($(qinit1)+(1.6,0)$) node[font=\scriptsize] () {$\cdots$};
				\draw ($(qinit1)+(-0.8,0.05)$) node[] () {$\atmtnInit$};
				\\
			};
			\draw ($(qinit)+(4,0)$) node[matrix,carre] (qa) {
				\draw (0,0) node[rond,font=\scriptsize,minimum height=3mm] (qa1) {$0$};
				\draw ($(qa1)+(0.8,0)$) node[rond,font=\scriptsize,minimum height=3mm] () {$1$};
				\draw ($(qa1)+(1.6,0)$) node[font=\scriptsize] () {$\cdots$};
				\draw ($(qa1)+(-0.8,0.05)$) node[] () {$\atmtnState_a$};
				\\
			};
			\draw ($(qa)+(4,0)$) node[matrix,carre] (qaa) {
				\draw (0,0) node[rond,font=\scriptsize,minimum height=3mm] (qaa1) {$0$};
				\draw ($(qaa1)+(0.8,0)$) node[rond,font=\scriptsize,minimum height=3mm] () {$1$};
				\draw ($(qaa1)+(1.6,0)$) node[font=\scriptsize] () {$\cdots$};
				\draw ($(qaa1)+(-0.8,0.05)$) node[] () {$\atmtnState_{aa}$};
				\\
			};
			\draw ($(qaa)+(3.5,0)$) node[rond] (top) {$\top$};

			\draw (qinit) edge[-latex',out=30,in=150,very thick] node[above=2pt] {$a$} (qa);
			\draw (qinit) edge[-latex',out=120,in=60,distance=0.8cm] node[above=2pt] {$a$} (qinit);
			\draw (qa) edge[-latex',out=120,in=60,distance=0.8cm] node[above=2pt] {$a$} (qa);
			\draw (qa) edge[-latex',out=30,in=150,very thick] node[above=2pt] {$a$} (qaa);
			\draw (qaa) edge[-latex',out=120,in=60,distance=0.8cm,very thick] node[above=2pt] {$a,b$} (qaa);
			\draw (top) edge[-latex',out=120,in=60,distance=0.8cm] node[above=2pt] {$a,b$} (top);

			\draw (qinit) edge[-latex',out=-60,in=-120,distance=0.8cm,dashed,very thick] node[below=2pt] {$b$} (qinit);
			\draw (qa) edge[-latex',out=175,in=5] node[above=2pt] {$a$} (qinit);
			\draw (qa) edge[-latex',out=185,in=-5,very thick] node[below=2pt] {$b$} (qinit);

			\draw (qaa) edge[-latex'] node[above=2pt] {$a, b$} (qa);
			\draw (qaa) edge[-latex',out=210,in=-30,distance=1.34cm] node[below=2pt] {$a, b$} (qinit);

			\draw (top) edge[-latex'] node[above=2pt] {$a,b$} ($(top)+(-1.2,0.2)$);
			\draw (top) edge[-latex'] node[below=2pt] {$a,b$} ($(top)+(-1.2,-0.2)$);
		\end{tikzpicture}
		\caption{The graph $\univGr_{\dba, \omega}$, where $\dba$ is the automaton from Example~\ref{ex:AAorBuchiA} ($\atmtnLang{\dba}=\Buchi{\{a\}}\cup \colors^*aa\colors^\omega$).
			The dashed edge with color $b$ indicates that $(\atmtnInit, \ord) \grTr{b} (\atmtnInit, \ord')$ if and only if $\ord' < \ord$ (it corresponds to the only non-B\"uchi transition of~$\dba$).
			Elsewhere, an edge labeled with color $\clr$ between two rectangles $\atmtnState, \atmtnState'$ means that for all natural numbers (i.e., ordinals less than $\omega$) $\ord, \ord'$, $(\atmtnState, \ord) \grTr{\clr} (\atmtnState', \ord')$.
			Thick edges correspond to the original transitions of $\dba$.
			There are edges from $\top$ to all vertices of the graph with colors $a$ and $b$.
			Vertices are totally ordered from left to right.}
		\label{fig:univGrExample}
	\end{figure}

	In order to use Theorem~\ref{thm:univGr}, we show that the graph $\univGrB$ is completely well-monotonic (Lemma~\ref{lem:monotonic}) and, for all cardinals $\cardinal$, is $(\cardinal, \wc)$-universal for sufficiently large $\mainOrd$ (Proposition~\ref{prop:universality}).

	\begin{lem} \label{lem:monotonic}
		Graph $\univGrB$ is completely well-monotonic.
	\end{lem}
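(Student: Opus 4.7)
The plan is to verify the three components of ``completely well-monotonic'' in turn: $(\univGrStatesB, \le)$ is a well-order with a maximum $\top$ linked by every color to every vertex, and the edge relation obeys the two monotonicity closure axioms. For the first component, since $\dba$ is built on top of $\prefClass$, the relation $\prefEq_\dba$ coincides with equality on $\atmtnStates$, so the total preorder $\prefOrd_\wc$ descends to a total order on the finite set $\atmtnStates$; the lexicographic product with the ordinals below $\mainOrd$ is thus a well-order, and adjoining $\top$ as maximum preserves this. The construction explicitly places an edge $\top \grTr{\clr} \grSt$ for every color $\clr$ and every vertex $\grSt$, so the completeness clause holds directly.

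For the first monotonicity axiom, suppose $\grSt \grTr{\clr} \grSt'$ and $\grSt'' \le \grSt'$. The case $\grSt = \top$ is immediate. Otherwise $\grSt'' \ne \top$ (since $\grSt' < \top$), and writing the vertices in the form $(\atmtnState, \ord)$ one splits on whether $\atmtnState'' \strictPrefOrd \atmtnState'$ --- in which case \eqref{eq:graphProp1} gives $\atmtnState'' \strictPrefOrd \atmtnUpd(\atmtnState, \clr)$ and the third clause of the construction produces the edge --- or $\atmtnState'' = \atmtnState'$ with $\ord'' \le \ord'$, where \eqref{eq:graphProp2} concludes.

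The second monotonicity axiom is the substantive part. Again $\grSt = \top$ is immediate and $\grSt'' \ne \top$ (no edge lands on $\top$), so write the three vertices as $(\atmtnState, \ord)$, $(\atmtnState', \ord')$, $(\atmtnState'', \ord'')$. Lemma~\ref{lem:monotonTrans} gives $\atmtnUpd(\atmtnState', \clr) \prefOrd \atmtnUpd(\atmtnState, \clr)$, and \eqref{eq:graphProp1} gives $\atmtnState'' \prefOrd \atmtnUpd(\atmtnState', \clr)$, so $\atmtnState'' \prefOrd \atmtnUpd(\atmtnState, \clr)$. Strict inequality is dispatched by the third construction clause; equality forces $\atmtnUpd(\atmtnState, \clr) = \atmtnUpd(\atmtnState', \clr) = \atmtnState''$. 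When additionally $\atmtnState = \atmtnState'$, we have $\ord' \le \ord$ and the edge transfers directly from clause (a) or (b). The delicate case that remains is $\atmtnState' \strictPrefOrd \atmtnState$ with a shared $\clr$-successor $\atmtnState''$, where $\ord$ and $\ord'$ are \emph{a priori} unconstrained.

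The main obstacle, and the only place progress-consistency enters, will be to show in this last case that $(\atmtnState, \clr) \in \accepSet$, whence clause (a) delivers the edge for any ordinals. Suppose to the contrary that $(\atmtnState, \clr) \notin \accepSet$. Lemma~\ref{lem:safeComp} (using saturation) produces $u$ with $\clr u \in \safeCycles{\atmtnState}$. Picking $\word_1$ with $\atmtnUpdWord(\atmtnInit, \word_1) = \atmtnState'$, we have $\atmtnUpdWord(\atmtnInit, \word_1 \clr u) = \atmtnState$, hence $\word_1 \strictPrefOrd \word_1 \clr u$, and progress-consistency yields $\word_1 (\clr u)^\omega \in \wc$. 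Yet the $\dba$-run on $\word_1 (\clr u)^\omega$ contains at most one B\"uchi transition (the first $\clr$, out of $\atmtnState'$) and thereafter loops indefinitely on the $\accepSet$-free cycle $\clr u$ at $\atmtnState$, so the word is rejected --- a contradiction.
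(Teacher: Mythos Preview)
Your proof is correct and follows essentially the same approach as the paper: the well-order and first monotonicity axiom are handled identically, and for the second axiom you reach the same case analysis, using Lemma~\ref{lem:monotonTrans}, saturation via Lemma~\ref{lem:safeComp}, and progress-consistency to rule out the case $\atmtnState' \strictPrefOrd \atmtnState$ with $(\atmtnState,\clr)\notin\accepSet$. Two harmless imprecisions: the parenthetical ``no edge lands on $\top$'' is only true for edges out of non-$\top$ vertices (which is all you need, since $\grSt' \le \grSt < \top$), and the run on $\word_1(\clr u)^\omega$ may see B\"uchi transitions along the finite prefix $\word_1$ --- but still only finitely many overall, which is what matters.
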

	\begin{proof}
		The order $\le$ on the vertices is a well-order, and there exists a vertex $\top\in\univGrStatesB$ maximum for $\le$ such that for all $\grSt\in\univGrStatesB$, $\clr\in\colors$, $\top \grTr{\clr} \grSt$.
		To show that $\univGrB$ is completely well-monotonic, it now suffices to show that $\univGrB$ is monotonic.

		The first item of the monotonicity definition follows from the construction of the graph.
		We assume that $(\atmtnState, \ord) \grTr{\clr} (\atmtnState', \ord')$ and $(\atmtnState', \ord') \ge (\atmtnState'', \ord'')$, and we show that $(\atmtnState, \ord) \grTr{\clr} (\atmtnState'', \ord'')$.
		By~\eqref{eq:graphProp1}, we have $\atmtnState' \prefOrd \atmtnUpd(\atmtnState, \clr)$.
		The inequality $(\atmtnState', \ord') \ge (\atmtnState'', \ord'')$ means by definition that $\atmtnState'' \strictPrefOrd \atmtnState'$ or ($\atmtnState'' = \atmtnState'$ and $\ord'' \le \ord'$).
		If $\atmtnState'' \strictPrefOrd \atmtnState'$, we obtain that $\atmtnState'' \strictPrefOrd \atmtnUpd(\atmtnState, \clr)$, so by construction of the graph (third case for transitions) we also have $(\atmtnState, \ord) \grTr{\clr} (\atmtnState'', \ord'')$.
		If $\atmtnState'' = \atmtnState'$ and $\ord'' \le \ord'$, then we also have $(\atmtnState, \ord) \grTr{\clr} (\atmtnState', \ord'')$ by~\eqref{eq:graphProp2}, which is what we want as $\atmtnState' = \atmtnState''$.

		The second item of the monotonicity definition is slightly more involved and follows from progress-consistency, the fact that the prefix preorder is total, and the saturation of $\dba$.
		We assume that
		$(\atmtnState, \ord) \ge (\atmtnState', \ord')$ and $(\atmtnState', \ord') \grTr{\clr} (\atmtnState'', \ord'')$,
		and we show that $(\atmtnState, \ord) \grTr{\clr} (\atmtnState'', \ord'')$.
		The assumption $(\atmtnState, \ord) \ge (\atmtnState', \ord')$ implies that $\atmtnState \invPrefOrd \atmtnState'$, so $\atmtnUpd(\atmtnState, \clr) \invPrefOrd \atmtnUpd(\atmtnState', \clr)$ by Lemma~\ref{lem:monotonTrans}.
		Moreover, $(\atmtnState', \ord') \grTr{\clr} (\atmtnState'', \ord'')$ implies that $\atmtnUpd(\atmtnState', \clr) \invPrefOrd \atmtnState''$ by~\eqref{eq:graphProp1}.
		Hence, $\atmtnUpd(\atmtnState, \clr) \invPrefOrd \atmtnState''$.
		If $\atmtnUpd(\atmtnState, \clr) \strictInvPrefOrd \atmtnState''$, then $(\atmtnState, \ord) \grTr{\clr} (\atmtnState'', \ord'')$ by definition of the graph.
		The same holds if $\atmtnUpd(\atmtnState, \clr) = \atmtnState''$ and $(\atmtnState, \clr)\in\accepSet$.

		It is left to discuss the case $\atmtnUpd(\atmtnState, \clr) = \atmtnState''$ and $(\atmtnState, \clr)\notin\accepSet$.
		By the above inequalities, this implies that we also have $\atmtnUpd(\atmtnState', \clr) = \atmtnState''$.
		\begin{itemize}
			\item If $\atmtnState = \atmtnState'$, then $\ord \ge \ord'$.
			Moreover, as $(\atmtnState', \clr) = (\atmtnState, \clr) \notin \accepSet$, the existence of edge $(\atmtnState', \ord') \grTr{\clr} (\atmtnState'', \ord'')$ implies that $\ord' > \ord''$.
			So $\ord > \ord''$ and we also have $(\atmtnState, \ord) \grTr{\clr} (\atmtnState'', \ord'')$.
			\item We show that $\atmtnState' \strictPrefOrd \atmtnState$ is not possible --- we assume it holds and draw a contradiction.
			As $(\atmtnState, \clr)\notin\accepSet$, we have $\clr\in\safe{\atmtnState}$.
			By Lemma~\ref{lem:safeComp}, there is $\word\in\colors^*$ such that $\clr\word\in\safeCycles{\atmtnState}$.
			As $\atmtnUpd(\atmtnState, \clr) = \atmtnUpd(\atmtnState', \clr)$ and $\atmtnUpdWord(\atmtnState, \clr\word) = \atmtnState$, we have $\atmtnUpdWord(\atmtnState', \clr\word) = \atmtnState$.
			As $\atmtnState' \strictPrefOrd \atmtnState$, by progress-consistency, the word $(\clr\word)^\omega$ must be accepted by $\dba$ when read from $\atmtnState'$.
			It must therefore also be accepted when read from $\atmtnState$ (as $\atmtnState' \strictPrefOrd \atmtnState$), which contradicts that $\clr\word\in\safeCycles{\atmtnState}$.
			\qedhere
		\end{itemize}
	\end{proof}

	We now intend to show, for all cardinals $\cardinal$, $(\cardinal, \wc)$-universality of some $\univGrB$ with a sufficiently large $\mainOrd$.
	Lemmas~\ref{lem:underapprox} and~\ref{lem:winningCycle} give insight into properties of the paths of $\univGrB$, to then establish which vertices of $\univGrB$ satisfy $\wc$ (Lemma~\ref{lem:satisfies}).
	Understanding which vertices satisfy $\wc$ is useful to later define a $\wc$-preserving morphism into $\univGrB$.
	We first show that paths in this graph ``underapproximate'' corresponding runs in the automaton: a finite path $\finGrPth = (\atmtnState_0, \ord_0)\grTr{\clr_1}\ldots\grTr{\clr_n}(\atmtnState_n, \ord_n)$ in $\univGrB$ visits vertices labeled with automaton states at most as large (for $\prefOrd$) as the corresponding states visited by the finite run from $\atmtnState_0$ on $\clr_1\ldots\clr_n$.

	\begin{lem} \label{lem:underapprox}
		Let $\finGrPth = (\atmtnState_0, \ord_0)\grTr{\clr_1}\ldots\grTr{\clr_n}(\atmtnState_n, \ord_n)$ be a finite path of $\univGrB$ and $\word = \clr_1\ldots\clr_n$.
		Then, $\atmtnState_n \prefOrd \atmtnUpdWord(\atmtnState_0, \word)$.
	\end{lem}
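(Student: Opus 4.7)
The plan is to prove this by a straightforward induction on $n$, the length of the path, using Equation~\eqref{eq:graphProp1} as the one-step estimate and Lemma~\ref{lem:monotonTrans} to propagate the prefix-preorder inequality through one more transition.

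For the base case $n = 0$, the path consists of the single vertex $(\atmtnState_0, \ord_0)$ and $\word = \emptyWord$, so the claim $\atmtnState_0 \prefOrd \atmtnUpdWord(\atmtnState_0, \emptyWord) = \atmtnState_0$ holds trivially by reflexivity of $\prefOrd$.

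For the inductive step, assume the result holds for paths of length $n-1$, and consider $\finGrPth$ of length $n$. Let $\atmtnState^\star = \atmtnUpdWord(\atmtnState_0, \clr_1\ldots\clr_{n-1})$. By the induction hypothesis applied to the prefix $(\atmtnState_0, \ord_0)\grTr{\clr_1}\ldots\grTr{\clr_{n-1}}(\atmtnState_{n-1}, \ord_{n-1})$, we obtain $\atmtnState_{n-1} \prefOrd \atmtnState^\star$. By Lemma~\ref{lem:monotonTrans}, this yields $\atmtnUpd(\atmtnState_{n-1}, \clr_n) \prefOrd \atmtnUpd(\atmtnState^\star, \clr_n) = \atmtnUpdWord(\atmtnState_0, \word)$. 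Independently, from the last edge $(\atmtnState_{n-1}, \ord_{n-1}) \grTr{\clr_n} (\atmtnState_n, \ord_n)$ of $\finGrPth$, Equation~\eqref{eq:graphProp1} gives $\atmtnState_n \prefOrd \atmtnUpd(\atmtnState_{n-1}, \clr_n)$. Chaining these two inequalities via transitivity of $\prefOrd$ yields the desired $\atmtnState_n \prefOrd \atmtnUpdWord(\atmtnState_0, \word)$.

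There is no real obstacle here: the statement is essentially the concatenation of the per-edge estimate~\eqref{eq:graphProp1} with the monotonicity of the automaton update function along a given word (Lemma~\ref{lem:monotonTrans}). The only point worth emphasizing is that we must invoke Lemma~\ref{lem:monotonTrans} on the single letter $\clr_n$ (not the whole word $\word$) in order to transport the induction hypothesis through the final transition; otherwise the argument is purely mechanical.
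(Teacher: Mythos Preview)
Your proof is correct and follows essentially the same approach as the paper: induction on the path length, with the inductive step combining the induction hypothesis, Lemma~\ref{lem:monotonTrans} applied to the single letter $\clr_n$, and Equation~\eqref{eq:graphProp1} for the last edge, then concluding by transitivity.
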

	\begin{proof}
		We proceed by induction on the length $n$ of $\finGrPth$.
		If $n = 0$, then $\word = \emptyWord$, so $\atmtnState_n = \atmtnState_0 = \atmtnUpdWord(\atmtnState_0, \word)$, and the result holds.
		If $n \ge 1$, we assume by induction hypothesis that $\atmtnState_{n-1} \prefOrd \atmtnUpdWord(\atmtnState_0, \clr_1\ldots\clr_{n-1})$.
		By Lemma~\ref{lem:monotonTrans}, we have $\atmtnUpd(\atmtnState_{n-1}, \clr_n) \prefOrd \atmtnUpd(\atmtnUpdWord(\atmtnState_0, \clr_1\ldots\clr_{n-1}), \clr_n) = \atmtnUpdWord(\atmtnState_0, \word)$.
		By~\eqref{eq:graphProp1}, if $(\atmtnState_{n-1}, \ord_{n-1}) \grTr{\clr_n}(\atmtnState_n, \ord_n)$, then $\atmtnState_n \prefOrd \atmtnUpd(\atmtnState_{n-1}, \clr_n)$.
		By transitivity, $\atmtnState_n \prefOrd \atmtnUpdWord(\atmtnState_0, \word)$.
	\end{proof}

	We now show that in $\univGrB$, a finite path that goes back to its initial value w.r.t.\ the first component without decreasing the ordinal necessarily induces an accepted word when repeated.
	\begin{lem} \label{lem:winningCycle}
		Let $\finGrPth = (\atmtnState_0, \ord_0)\grTr{\clr_1}\ldots\grTr{\clr_n}(\atmtnState_n, \ord_n)$ be a finite path of $\univGrB$ with $n \ge 1$, $\atmtnState_0 = \atmtnState_n$, and $\ord_0 \le \ord_n$.
		Let $\word = \clr_1\ldots\clr_n$.
		Then, $\word^\omega \in \atmtnLang{\dba^{\atmtnState_0}}$.
	\end{lem}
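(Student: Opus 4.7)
The plan is to combine Lemma~\ref{lem:underapprox} with progress-consistency and the ordinal-decrease structure of $\univGrB$. Write $\atmtnState_i' = \atmtnUpdWord(\atmtnState_0, \clr_1\ldots\clr_i)$ for the automaton states reached when reading $\clr_1\ldots\clr_i$ from $\atmtnState_0$. Lemma~\ref{lem:underapprox} applied to each prefix of $\finGrPth$ gives $\atmtnState_i \prefOrd \atmtnState_i'$ for every $i$. Since $\dba$ is built on top of $\prefClass$, $\prefEq$ coincides with equality on $\atmtnStates$, so each of these inequalities is either strict or an equality. At $i = n$, using $\atmtnState_n = \atmtnState_0$, this specializes to $\atmtnState_0 \prefOrd \atmtnUpdWord(\atmtnState_0, \word)$, and I would split on whether this comparison is strict.

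In the strict case $\atmtnState_0 \strictPrefOrd \atmtnUpdWord(\atmtnState_0, \word)$, pick any $u\in\colors^*$ with $\atmtnUpdWord(\atmtnInit, u) = \atmtnState_0$; then $u \strictPrefOrd u\word$, so progress-consistency yields $u\word^\omega \in \wc$, i.e., $\word^\omega \in \inverse{\atmtnState_0}\wc$.

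Otherwise $\atmtnUpdWord(\atmtnState_0, \word) = \atmtnState_0$, so $\word$ labels a cycle on $\atmtnState_0$ in the automaton. Split further according to whether $\atmtnState_i = \atmtnState_i'$ holds for every $i$, or some intermediate index violates this. In the former case, the path $\finGrPth$ exactly mirrors the automaton run: if the cycle were $\accepSet$-free, every edge of $\finGrPth$ would fall under the non-B\"uchi clause of the graph construction, forcing $\ord_i < \ord_{i-1}$ at each step and contradicting $\ord_0 \le \ord_n$; hence the cycle crosses a B\"uchi transition and $\word^\omega \in \inverse{\atmtnState_0}\wc$.

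The main obstacle is the remaining subcase, where some smallest index $i \ge 1$ satisfies $\atmtnState_i \strictPrefOrd \atmtnState_i'$. I plan to rotate the word by setting $\word' = \clr_{i+1}\ldots\clr_n\clr_1\ldots\clr_i$. Applying Lemma~\ref{lem:underapprox} to the subpath of $\finGrPth$ from index $i$ to $n$ gives $\atmtnState_n \prefOrd \atmtnUpdWord(\atmtnState_i, \clr_{i+1}\ldots\clr_n)$, while Lemma~\ref{lem:monotonTrans} together with $\atmtnState_i \prefOrd \atmtnState_i'$ yields $\atmtnUpdWord(\atmtnState_i, \clr_{i+1}\ldots\clr_n) \prefOrd \atmtnUpdWord(\atmtnState_0, \word) = \atmtnState_0 = \atmtnState_n$. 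The two inequalities collapse (using that $\prefEq$ is equality on $\atmtnStates$), so $\atmtnUpdWord(\atmtnState_i, \clr_{i+1}\ldots\clr_n) = \atmtnState_0$ and thus $\atmtnUpdWord(\atmtnState_i, \word') = \atmtnState_i'$, giving $\atmtnState_i \strictPrefOrd \atmtnUpdWord(\atmtnState_i, \word')$. Progress-consistency applied to any $u$ reaching $\atmtnState_i$ now delivers $(\word')^\omega \in \inverse{\atmtnState_i}\wc$, and reading this word from $\atmtnState_i$ passes through $\atmtnState_0$ after $\clr_{i+1}\ldots\clr_n$, so we recover $\word^\omega \in \inverse{\atmtnState_0}\wc$, completing the proof.
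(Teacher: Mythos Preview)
Your proof is correct and follows essentially the same approach as the paper: the same three-way case split on whether $\atmtnState_0 \strictPrefOrd \atmtnUpdWord(\atmtnState_0, \word)$ and, in the equality case, whether the graph path tracks the automaton run exactly; and in the third subcase the same rotation argument using Lemma~\ref{lem:underapprox}, Lemma~\ref{lem:monotonTrans}, and progress-consistency to conclude. The only cosmetic differences are that you pick the smallest index $i$ (unnecessary, since any witness works and $i < n$ is forced by $\atmtnState_n = \atmtnState_n'$) and name the rotated word $\word'$.
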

	\begin{proof}
		Let $\run = \dbaRun{\atmtnState_0, \word} = (\atmtnState_0', \clr_1, \atmtnState_1')\ldots(\atmtnState_{n-1}', \clr_n, \atmtnState_n')$ be the finite run of $\dba$ obtained by reading $\word$ from $\atmtnState_0$.
		States $\atmtnState_0, \ldots, \atmtnState_n$ correspond to the first component of the vertices visited by $\finGrPth$ in $\univGrB$, whereas $\atmtnState'_0,\ldots,\atmtnState'_n$ are the states visited by the finite word $\word$ in $\dba$.
		We have that $\atmtnState_0 = \atmtnState_0'$, but the subsequent states may or may not correspond.
		By Lemma~\ref{lem:underapprox}, we still know that for all $i$, $0 \le i \le n$, $\atmtnState_i \prefOrd \atmtnState_i'$.
		In particular, $\atmtnState_0 = \atmtnState_n \prefOrd \atmtnUpdWord(\atmtnState_0, \word)= \atmtnState_n'$.
		We distinguish three cases, depending on whether $\atmtnState_0 \strictPrefOrd \atmtnUpdWord(\atmtnState_0, \word)$ or $\atmtnState_0 \prefEq \atmtnUpdWord(\atmtnState_0, \word)$ (which implies $\atmtnState_0 = \atmtnUpdWord(\atmtnState_0, \word)$ as $\dba$ is built on top of its prefix-classifier), and depending on whether $\atmtnState_i = \atmtnState'_i$ for all $0 \le i \le n$ or not.
		\begin{itemize}
			\item If $\atmtnState_0 \strictPrefOrd \atmtnUpdWord(\atmtnState_0, \word)$, by progress-consistency, $\word^\omega \in \atmtnLang{\dba^{\atmtnState_0}}$.
			\item If $\atmtnState_0 = \atmtnUpdWord(\atmtnState_0, \word)$ and for all $i$, $0 \le i \le n$, $\atmtnState_i = \atmtnState_i'$ (i.e., $\finGrPth$ only uses edges that directly correspond to transitions of the automaton $\dba$), then for the ordinal to be greater than or equal to its starting value, some B\"uchi transition has to be taken since non-B\"uchi transitions strictly decrease the ordinal on the second component.
			Hence, $\word$ is not an $\accepSet$-free cycle from $\atmtnState_0$, so $\word^\omega \in \atmtnLang{\dba^{\atmtnState_0}}$.
			\item If $\atmtnState_0 = \atmtnUpdWord(\atmtnState_0, \word)$ and for some index $i$, $1 \le i < n$, we have $\atmtnState_i \strictPrefOrd \atmtnState_i'$ (i.e., $\finGrPth$ takes at least one edge that does not correspond to a transition of the automaton).
			We represent the situation in Figure~\ref{fig:winningCycle}.
			We have that $\atmtnUpdWord(\atmtnState_i', \clr_{i+1}\ldots\clr_n) = \atmtnUpdWord(\atmtnState_0, \word) = \atmtnState_0$.
			By Lemma~\ref{lem:monotonTrans}, as $\atmtnState_i \strictPrefOrd \atmtnState_i'$, this implies that $\atmtnUpdWord(\atmtnState_i, \clr_{i+1}\ldots\clr_n) \prefOrd \atmtnState_0$.
			Also, in the graph, there is a path from $\atmtnState_i$ to $\atmtnState_0$ with colors $\clr_{i+1}\ldots\clr_n$.
			Therefore, by Lemma~\ref{lem:underapprox}, $\atmtnState_0 \prefOrd \atmtnUpdWord(\atmtnState_i, \clr_{i+1}\ldots\clr_n)$.
			Thus, $\atmtnState_0 \prefEq \atmtnUpdWord(\atmtnState_i, \clr_{i+1}\ldots\clr_n)$, and as $\dba$ is built on top of $\minStateAtmtn$, $\atmtnState_0 = \atmtnUpdWord(\atmtnState_i, \clr_{i+1}\ldots\clr_n)$.
			Therefore, $\atmtnState_i \strictPrefOrd \atmtnUpdWord(\atmtnState_i, \clr_{i+1}\ldots\clr_n\clr_1\ldots\clr_i) = \atmtnState_i' $.
			By progress-consistency, $(\clr_{i+1}\ldots\clr_n\clr_1\ldots\clr_i)^\omega \in \atmtnLang{\dba^{\atmtnState_i}}$.
			As $\atmtnUpdWord(\atmtnState_i, \clr_{i+1}\ldots\clr_n) = \atmtnState_0$, this implies that $(\clr_1\ldots\clr_n)^\omega = \word^\omega \in \atmtnLang{\dba^{\atmtnState_0}}$.
			\qedhere
		\end{itemize}
	\end{proof}
	\begin{figure}[htb]
		\centering
		\begin{tikzpicture}[every node/.style={font=\small,inner sep=1pt}]
			\draw (0,0) node[diamant] (q0) {$\atmtnState_0$};
			\draw ($(q0)+(2.5,0)$) node[diamant] (qi) {$\atmtnState_i$};
			\draw ($(qi)+(2.5,0)$) node[diamant] (qi') {$\atmtnState_i'$};
			\draw ($(qi)!0.5!(qi')$) node[] () {$\strictPrefOrd$};
			\draw (q0) edge[-latex',out=60,in=180-60,decorate,distance=0.8cm] node[above=4pt] {$\clr_1\ldots\clr_i$} (qi');
			\draw (qi') edge[-latex',out=-150,in=-30,decorate,distance=1.2cm] node[below=4pt] {$\clr_{i+1}\ldots\clr_n$} (q0);
			\draw (qi) edge[-latex',decorate] node[above=4pt] {$\clr_{i+1}\ldots\clr_n$} (q0);
		\end{tikzpicture}
		\caption{Situation in the last case of the proof of Lemma~\ref{lem:winningCycle}.}
		\label{fig:winningCycle}
	\end{figure}

	We show that every vertex $(\atmtnState_0, \ord_0)$ such that $\atmtnState_0 \prefOrd \atmtnInit$ satisfies $\wc$.
	\begin{lem} \label{lem:satisfies}
		Let $\atmtnState_0\in\atmtnStates$.
		For all ordinals $\ord_0<\mainOrd$, the vertex $(\atmtnState_0, \ord_0)$ of $\univGrB$ satisfies $\atmtnLang{\dba^{\atmtnState_0}}$.
		In particular, if $\atmtnState_0 \prefOrd \atmtnInit$, $(\atmtnState_0, \ord_0)$ satisfies $\wc$.
	\end{lem}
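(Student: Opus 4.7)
The plan is to prove the first claim by transfinite induction on the well-ordered vertex set of $\univGrB$ (excluding $\top$). The second claim follows immediately because $\atmtnState_0 \prefOrd \atmtnInit$ yields $\inverse{\atmtnState_0}\wc \subseteq \inverse{\atmtnInit}\wc = \wc$, so any infinite path producing a word in $\inverse{\atmtnState_0}\wc$ also produces a word in $\wc$. Before doing the induction, I note that $\top$ has no incoming edges from any vertex other than itself, so every infinite path starting from a non-$\top$ vertex stays inside $\{(\atmtnState, \ord) \mid \atmtnState \in \atmtnStates,\ \ord < \mainOrd\}$.

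Fix an infinite path $\grPth = (\atmtnState_0, \ord_0) \grTr{\clr_1} (\atmtnState_1, \ord_1) \grTr{\clr_2} \ldots$ in $\univGrB$, and assume the claim holds for all vertices strictly smaller than $(\atmtnState_0, \ord_0)$ in the lexicographic well-order. I split on whether the path ever strictly descends. \emph{Descent case:} if there is some $j \geq 1$ with $(\atmtnState_j, \ord_j) < (\atmtnState_0, \ord_0)$, then by the induction hypothesis $\clr_{j+1}\clr_{j+2}\ldots \in \inverse{\atmtnState_j}\wc$. Lemma~\ref{lem:underapprox} gives $\atmtnState_j \prefOrd \atmtnUpdWord(\atmtnState_0, \clr_1\ldots\clr_j)$, hence $\inverse{\atmtnState_j}\wc \subseteq \inverse{(\atmtnUpdWord(\atmtnState_0, \clr_1\ldots\clr_j))}\wc$, and concatenating with the prefix $\clr_1\ldots\clr_j$ gives $\clr_1\clr_2\ldots \in \inverse{\atmtnState_0}\wc$.

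\emph{No-descent case:} suppose $(\atmtnState_i, \ord_i) \geq (\atmtnState_0, \ord_0)$ for all $i$; in particular $\atmtnState_i \invPrefOrd \atmtnState_0$ for all $i$. Since $\atmtnStates$ is finite and totally ordered by $\prefOrd$, there is a $\prefOrd$-maximum state $\atmtnState^*$ appearing infinitely often in $(\atmtnState_i)_i$, and past some threshold every $\atmtnState_i$ is $\prefOrd$-bounded by $\atmtnState^*$ and equal to $\atmtnState^*$ infinitely often. Looking at the ordinals $\ord_{i_k}$ at the indices $i_k$ where $\atmtnState_{i_k} = \atmtnState^*$, a $\liminf$ argument in the well-order produces infinitely many pairs $i_k < i_{k+1}$ with $\ord_{i_k} \leq \ord_{i_{k+1}}$; each such pair is a graph cycle on $\atmtnState^*$ of the type covered by Lemma~\ref{lem:winningCycle}, so the corresponding cycle word, when iterated, is in $\inverse{\atmtnState^*}\wc$. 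I then transfer these ``local'' winning cycles to the global run via Lemma~\ref{lem:underapprox} (which forces the actual run state from $\atmtnState_0$ to stay $\invPrefOrd \atmtnState^*$), using progress-consistency and the fact that $\dba$ is built on top of $\minStateAtmtn$ and is saturated, to show the run on $\clr_1\clr_2\ldots$ from $\atmtnState_0$ sees a B\"uchi transition infinitely often.

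The main obstacle is precisely this last step of the no-descent case: going from infinitely many ``iterated-cycle-winning'' witnesses on $\atmtnState^*$ inside the graph to the genuine statement that the run of $\dba$ on the single word $\clr_1\clr_2\ldots$ is accepting. Progress-consistency, combined with the uniqueness of $\prefEq$-representatives in $\dba$ and saturation (so every $\accepSet$-free path extends to a cycle by Lemma~\ref{lem:safeComp}), is what blocks the adversarial pattern in which the graph cycles on $\atmtnState^*$ would all correspond to $\accepSet$-free cycles in $\dba$ while the cycles themselves repeatedly pass through a strictly larger state $\atmtnState^* \strictPrefOrd \atmtnUpdWord(\atmtnState^*, \word)$ without ever forcing a B\"uchi transition.
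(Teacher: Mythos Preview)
Your transfinite-induction framing and the descent case are fine, but the no-descent case has a genuine gap that your final paragraph does not close. After extracting indices $i_k$ with $\atmtnState_{i_k}=\atmtnState^*$ and $\ord_{i_k}\le\ord_{i_{k+1}}$, Lemma~\ref{lem:winningCycle} indeed yields $\word_k^\omega\in\inverse{\atmtnState^*}\wc$ for each segment $\word_k$. But what you must show is that the \emph{actual} run of $\dba$ from $\atmtnState_0$ on $\clr_1\clr_2\ldots$ is accepting, and the automaton state at time $i_k$ is $\atmtnState'_{i_k}=\atmtnUpdWord(\atmtnState_0,\clr_1\ldots\clr_{i_k})$, not $\atmtnState^*$. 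From Lemma~\ref{lem:underapprox} you only get $\atmtnState^*\prefOrd\atmtnState'_{i_k}$, so although $\word_k^\omega\in\inverse{\atmtnState'_{i_k}}\wc$, the word $\word_k$ need not be a cycle on $\atmtnState'_{i_k}$ at all (the states $\atmtnState'_{i_k}$ may vary with $k$), and hence nothing forces a B\"uchi transition along the single read of $\word_k$ from $\atmtnState'_{i_k}$. Your appeal to progress-consistency and saturation is not a substitute: those properties are already baked into Lemma~\ref{lem:winningCycle}, and you give no concrete argument turning ``$\word_k^\omega$ accepted from $\atmtnState'_{i_k}$'' into ``the non-periodic concatenation $\word_1\word_2\ldots$ is accepted from $\atmtnState'_{i_1}$''.

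The paper's proof closes exactly this gap with a second extraction: after fixing the graph state $\atmtnState$ and a non-decreasing ordinal subsequence, it further refines the index sequence so that the \emph{automaton} state $\atmtnState'_{i_k}$ is also constant, say equal to some $\atmtnState'$. Then each $\word_k$ is a genuine cycle on $\atmtnState'$ in $\dba$, and $\word_k^\omega\in\inverse{\atmtnState}\wc\subseteq\inverse{(\atmtnState')}\wc$ forces that cycle to contain a B\"uchi transition; since this holds for every $k$, the full run is accepting. Adding this second extraction (which survives because $\atmtnStates$ is finite and the ordinal subsequence is already non-decreasing, hence any further subsequence remains non-decreasing) is the missing idea in your argument. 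Once you insert it, the transfinite induction wrapper becomes unnecessary: the double extraction handles every path uniformly, which is exactly how the paper proceeds.
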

	\begin{proof}
		Let $\grPth = (\atmtnState_0, \ord_0)\grTr{\clr_1}(\atmtnState_1, \ord_1)\grTr{\clr_2} \ldots$ be an infinite path of $\univGrB$ from $(\atmtnState_0, \ord_0)$ and $\word = \clr_1\clr_2\ldots$ be the sequence of colors along its edges.
		We show that $\word\in\atmtnLang{\dba^{\atmtnState_0}}$.

		Let $q_i'=\atmtnUpdWord(q_0, c_1\ldots c_i)$ be the state of $\dba$ reached after reading the first $i$ colors of $\word$.
		We claim that there are two states $\atmtnState, \atmtnState'\in \atmtnStates$ occurring infinitely often in the sequences $(\atmtnState_i)_{i \ge 0}$, $(\atmtnState_i')_{i \ge 0}$, respectively, and an increasing sequence of indices $(i_k)_{k\ge 1}$ satisfying that for all $k\ge 1$,
		\begin{itemize}
			\item $\atmtnState_{i_k} = \atmtnState$, $\atmtnState_{i_k}' = \atmtnState'$, and
			\item $\ord_{i_k} \le \ord_{i_{k+1}}$.
		\end{itemize}

		Indeed, we can first choose a state $q$ appearing infinitely often in $(\atmtnState_i)_{i \ge 0}$ and pick a sequence $(i_j)_{j\ge 1}$ such that $q_{i_j} = q$ and $(\ord_{i_j})_{j\ge 1}$ is not decreasing (this is possible since there is no infinite decreasing sequence of ordinals).
		Then, we can just pick $q'$ appearing infinitely often in $(q_{i_j}')_{j\ge 1}$ and extract the subsequence corresponding to its occurrences.

		Let $q,q'\in \atmtnStates$, $(i_k)_{k\ge 1}$ satisfying the above properties.
		Let $\word_0 = c_1\ldots c_{i_1}$ and for $k \ge 1$, let $\word_k = \clr_{i_k+1}\ldots\clr_{i_{k+1}} \in \colors^+$ be the colors over the edges in $\grPth$ from $(\atmtnState_{i_k}, \ord_{i_k})$ to $(\atmtnState_{i_{k+1}}, \ord_{i_{k+1}})$.

		By Lemma~\ref{lem:underapprox}, it holds that $\atmtnState \prefOrd \atmtnState' = \atmtnUpdWord(\atmtnState_0, \word_0)$.
		Moreover, since $\ord_{i_k} \le \ord_{i_{k+1}}$, by Lemma~\ref{lem:winningCycle} we have that $\word_k^\omega \in \atmtnLang{\dba^{\atmtnState}}\subseteq \atmtnLang{\dba^{\atmtnState'}}$ for all $k\ge 1$.
		We conclude that for all $k\ge 1$, the word $\word_k$ labels a cycle over $q'$ in $\dba$ visiting some B\"uchi transition, and therefore $\word = \word_0\word_1\word_2\ldots \in \atmtnLang{\dba^{\atmtnState_0}}$.
	\end{proof}

	We now have all the tools to show, for all cardinals $\cardinal$, $(\cardinal, \wc)$-universality of $\univGrB$ for sufficiently large $\mainOrd$.

	\begin{prop} \label{prop:universality}
		Let $\cardinal$ be a cardinal, and $\mainOrd'$ be an ordinal such that $\cardinal \le \card{\mainOrd'}$.
		Let $\mainOrd = \card{\atmtnStates}\cdot\mainOrd'$.
		Graph $\univGrB$ is $(\cardinal, \wc)$-universal.
	\end{prop}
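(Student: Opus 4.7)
The plan is to build, for each graph $\gr = \grFull$ with $\card{\grStates} < \cardinal$, a $\wc$-preserving morphism $\morph \colon \gr \to \univGrB$. For each automaton state $\atmtnState \in \atmtnStates$, I introduce $W_\atmtnState \subseteq \grStates$ as the set of graph vertices $\grSt$ such that every infinite path from $\grSt$ in $\gr$ produces an infinite word in $\inverse{\atmtnState}\wc$. Since $\atmtnState \prefOrd \atmtnState'$ implies $\inverse{\atmtnState}\wc \subseteq \inverse{\atmtnState'}\wc$, the family $(W_\atmtnState)_{\atmtnState \in \atmtnStates}$ is monotone in $\atmtnState$. For each $\grSt$ belonging to some $W_\atmtnState$, I define $\atmtnState(\grSt)$ as the $\prefOrd$-minimum such state, well-defined since $\prefOrd$ is a total order on the finite set $\atmtnStates$; if no such $\atmtnState$ exists, I set $\morph(\grSt) = \top$. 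Notice that whenever $\grSt$ satisfies $\wc$ in $\gr$, then $\grSt \in W_{\atmtnInit}$, so $\atmtnState(\grSt) \prefOrd \atmtnInit$, which together with Lemma~\ref{lem:satisfies} will yield $\wc$-preservation at $\grSt$.

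For the second coordinate, I consider the binary relation $\prec$ on $X = \{\grSt \mid \atmtnState(\grSt) \text{ is defined}\}$ declaring $\grSt' \prec \grSt$ precisely when there is $\clr \in \colors$ with $\grSt \grTr{\clr} \grSt'$, $(\atmtnState(\grSt), \clr) \notin \accepSet$, and $\atmtnState(\grSt') = \atmtnUpd(\atmtnState(\grSt), \clr)$. An infinite descending chain $\grSt_0 \succ \grSt_1 \succ \ldots$ would produce an infinite path in $\gr$ whose run in $\dba$ from $\atmtnState(\grSt_0)$ traverses exactly the states $\atmtnState(\grSt_0), \atmtnState(\grSt_1), \ldots$ without taking any B\"uchi transition, giving a word outside $\inverse{\atmtnState(\grSt_0)}\wc$ and contradicting $\grSt_0 \in W_{\atmtnState(\grSt_0)}$. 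Thus $\prec$ is well-founded, and I assign to each $\grSt \in X$ its standard well-founded rank $\ord(\grSt)$; since $\card{X} < \cardinal \le \card{\mainOrd'}$, the usual bound on the height of a well-founded relation on a set of cardinality $< \cardinal$ gives $\ord(\grSt) < \mainOrd'$, so $\morph(\grSt) = (\atmtnState(\grSt), \ord(\grSt))$ is a valid vertex of $\univGrB$.

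It remains to verify that $\morph$ is a morphism. Given an edge $\grSt \grTr{\clr} \grSt'$ in $\gr$, if $\morph(\grSt) = \top$ the required edge exists by construction of the outgoing edges from $\top$. Otherwise $\grSt \in X$; then $\grSt' \in W_{\atmtnUpd(\atmtnState(\grSt), \clr)}$ (suffixes of paths from $\grSt'$ lie in $\inverse{\atmtnState(\grSt)}\wc$ after absorbing $\clr$), so $\grSt' \in X$ with $\atmtnState(\grSt') \prefOrd \atmtnUpd(\atmtnState(\grSt), \clr)$. A case analysis on the three edge-creation rules of $\univGrB$ completes the verification: if $\atmtnState(\grSt') \strictPrefOrd \atmtnUpd(\atmtnState(\grSt), \clr)$, the third rule applies; if $(\atmtnState(\grSt), \clr) \in \accepSet$, the first rule does; and in the remaining case where $(\atmtnState(\grSt), \clr) \notin \accepSet$ and $\atmtnState(\grSt') = \atmtnUpd(\atmtnState(\grSt), \clr)$, we have $\grSt' \prec \grSt$ and hence $\ord(\grSt') < \ord(\grSt)$, invoking the second rule. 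The main technical point is the well-foundedness of $\prec$, which critically uses the definition of $\atmtnState(\grSt)$ to force the tracked automaton states to follow the $\dba$-dynamics exactly along $\prec$-chains, converting any hypothetical infinite chain into a genuine $\accepSet$-free infinite run of $\dba$ that falsifies membership in $W_{\atmtnState(\grSt_0)}$.
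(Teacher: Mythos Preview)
Your proof is correct and follows a genuinely different route from the paper's for the second coordinate. Both arguments define the first coordinate identically as the $\prefOrd$-least automaton state $\atmtnState_\grSt$ with $\grSt\in W_{\atmtnState_\grSt}$. For the second coordinate, the paper sets up a transfinite induction defining sets $\grStates_\ord^\atmtnState$ simultaneously for all $\atmtnState\in\atmtnStates$, where membership of $\grSt$ in $\grStates_\ord^\atmtnState$ tracks how quickly a B\"uchi transition is guaranteed when reading paths from $\grSt$ starting at $\atmtnState$; the morphism then uses $\min\{\ord \mid \grSt\in\grStates_\ord^{\atmtnState_\grSt}\}$. Because that induction interleaves the different automaton states, the paper genuinely needs the factor $\card{\atmtnStates}$ in $\mainOrd = \card{\atmtnStates}\cdot\mainOrd'$ (see Remark~\ref{rmk:usefulProduct}, where a two-vertex graph forces ordinals up to $5$). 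Your approach sidesteps this: the relation $\prec$ only records edges $\grSt\grTr{\clr}\grSt'$ along which the \emph{minimal} states already happen to track the DBA exactly, so a single well-founded rank on $X$ suffices, and its height is bounded by $\card{X} < \cardinal \le \card{\mainOrd'}$, hence by $\mainOrd'$. In the example of Remark~\ref{rmk:usefulProduct} your ranks are $0$ and $1$ rather than $4$ and $5$, and your argument would in fact go through with $\mainOrd = \mainOrd'$. The paper's route is closer to a generic value-iteration/attractor computation and makes the connection to the sets $\grStates_\ord^\atmtnState$ explicit; yours is more economical and isolates precisely the one obstruction (infinite $\accepSet$-free tracked runs) whose impossibility is exactly the content of $\grSt\in W_{\atmtnState(\grSt)}$.
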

	\begin{proof}
		Let $\gr = (\grStates, \grEdges)$ be a graph such that $\card{\grStates} < \cardinal$ (in particular, $\card{\grStates} < \card{\mainOrd'}$).
		For $\grSt\in\grStates$, let $\atmtnState_\grSt\in \atmtnStates \cup \{\top\}$ be the smallest automaton state (for $\prefOrd$) such that $\grSt$ satisfies $\atmtnLang{\dba^{\atmtnState_\grSt}}$ (i.e., such that all infinite paths from $\grSt$ are in objective $\atmtnLang{\dba^{\atmtnState_\grSt}}$), or $\top$ if it satisfies none of them.
		We remark that $\atmtnState_\grSt \prefOrd \atmtnState$ if and only if $\grSt$ satisfies $\atmtnLang{\dba^{\atmtnState}}$.
		To show that there is a $\wc$-preserving morphism from $\gr$ to $\univGrB$, we follow the six steps outlined in~\cite[Lemma~4.3]{Ohl23}.
		\begin{enumerate}[(i)]
			\item \label{item:i}
			In this first step, we classify and order vertices of $\gr$ in an inductive way, which will later be used to map them to vertices of $\univGrB$.
			For $\atmtnState\in\atmtnStates$ and $\ord$ an ordinal, we define by transfinite induction
			\[
			\grStates_\ord^\atmtnState =
			\left\{\grSt\in\grStates \mid \atmtnState_\grSt \prefOrd \atmtnState,\ \text{and}\ \forall \clr\in \colors \, \left(
			\grSt \grTr{\clr} \grSt' \implies [(\atmtnState, \clr) \in \accepSet\ \text{or}\ \exists\ordBis < \ord, \grSt'\in \grStates^{\atmtnUpd(\atmtnState, \clr)}_{\ordBis}]\right )\right\}.
			\]
			Intuitively, for $\grSt$ to be in $\grStates_\ord^\atmtnState$, it has to satisfy $\atmtnLang{\dba^{\atmtnState}}$ and to guarantee that a B\"uchi transition is visited ``quickly'' when colors of paths from $\grSt$ are read from $\atmtnState$ in $\dba$ (how quickly depends on the value of $\ord$).
			We remark that for each state $\atmtnState\in \atmtnStates$, the sequence $(\grStates_\ord^\atmtnState)_{\ord}$ is non-decreasing: for $\ord\leq \ord'$, $\grStates_\ord^\atmtnState \subseteq \grStates_{\ord'}^\atmtnState$.
			We illustrate this induction on a concrete case in Example~\ref{ex:graphMorphism}.
			We also discuss another example in Remark~\ref{rmk:usefulProduct} to illustrate why we need to consider the product $\mainOrd = \card{\atmtnStates}\cdot\mainOrd'$ rather than just taking $\mainOrd = \mainOrd'$ for our construction.
			The subsequent steps mostly follow from this definition.
			\item \label{univ:ii}
			Let $\grStates^\atmtnState = \bigcup_{\ord} \grStates_\ord^{\atmtnState}$.
			We show that if $\grSt$ satisfies $\atmtnLang{\dba^{\atmtnState}}$, then it is in $\grStates^\atmtnState$.
			Assume that $\grSt \notin \grStates^\atmtnState$.
			If $\atmtnState \strictPrefOrd \atmtnState_\grSt$, then we immediately have that $\grSt$ does not satisfy $\atmtnLang{\dba^{\atmtnState}}$.
			If $\atmtnState_\grSt \prefOrd \atmtnState$, then $\grSt$ has an outgoing edge $\grSt \grTr{\clr} \grSt'$ such that $(\atmtnState, \clr) \notin \accepSet$ and $\grSt' \notin \bigcup_{\ord} \grStates_{\ord}^{\atmtnUpd(\atmtnState, \clr)}$.
			By induction, we build an infinite path from $\grSt$ whose projection in $\dba$ only uses non-B\"uchi transitions, so $\grSt$ does not satisfy $\atmtnLang{\dba^{\atmtnState}}$.
			\item \label{item:iii}
			In this step and the next one, we show that there is no use in considering ordinals beyond $\mainOrd$ in our construction.
			We first show that if for all $\atmtnState\in\atmtnStates$, $\grStates_\ord^\atmtnState = \grStates_{\ord + 1}^\atmtnState$, then for all $\atmtnState\in\atmtnStates$ and all $\ord' \ge \ord$, $\grStates_\ord^\atmtnState = \grStates_{\ord'}^\atmtnState$.
			For $\ord \le \ord'$, we always have $\grStates_\ord^\atmtnState \subseteq \grStates_{\ord'}^\atmtnState$.
			For the other inclusion, we assume by transfinite induction that $\grStates_\ord^{\atmtnState'} = \grStates_{\ordBis}^{\atmtnState'}$ for all $\atmtnState'\in\atmtnStates$ and for all $\ordBis$ such that $\ord \le \ordBis < \ord'$.
			Let $\grSt\in\grStates_{\ord'}^\atmtnState$.
			Every edge $\grSt \grTr{\clr} \grSt'$ either satisfies $(\atmtnState, \clr) \in \accepSet$, or there exists $\ordBis < \ord'$ such that $\grSt'\in \grStates^{\atmtnUpd(\atmtnState, \clr)}_{\ordBis}$.
			Since $\grStates^{\atmtnUpd(\atmtnState, \clr)}_{\ordBis} \subseteq \grStates^{\atmtnUpd(\atmtnState, \clr)}_{\ord}$ by induction hypothesis, we have $\grSt'\in \grStates^{\atmtnUpd(\atmtnState, \clr)}_{\ord}$.
			Hence, $\grSt\in\grStates_{\ord + 1}^\atmtnState = \grStates_{\ord}^\atmtnState$.
			\item \label{item:iv}
			We prove that there exists $\ord < \mainOrd$ such that for all $\atmtnState\in\atmtnStates$, $\grStates_\ord^\atmtnState = \grStates_{\ord+1}^\atmtnState$.
			If not, using the axiom of choice, we can build a map $\psi\colon \mainOrd \to \atmtnStates \times \grStates$ such that for $\ord < \mainOrd$, $\psi(\ord) = (\atmtnState, \grSt)$ for some $\atmtnState\in\atmtnStates$ and $\grSt \in \grStates_{\ord + 1}^\atmtnState \setminus \grStates_{\ord}^\atmtnState$.
			This map is injective, as any pair $(\atmtnState, \grSt)$ can be chosen at most once (as $(\grStates_{\ord}^\atmtnState)_\ord$ is non-decreasing).
			This implies that $\card{\mainOrd} = \card{\atmtnStates}\cdot \card{\mainOrd'} \le \card{\atmtnStates}\cdot\card{\grStates}$, a contradiction since $\card{\grStates} < \card{\mainOrd'}$.

			Using additionally Item~\ref{item:iii}, we deduce that there exists $\ord < \mainOrd$ such that for all $\ord' \ge \ord$, $\grStates_\ord^\atmtnState = \grStates_{\ord'}^\atmtnState$.
			\item \label{item:v}
			Let $\morph\colon \grStates \to \univGrStatesB$ be such that
			\[
			\phi(\grSt) = \begin{cases}
				(\atmtnState_\grSt, \min \{\ord \mid \grSt \in \grStates^{\atmtnState_\grSt}_\ord\}) &\text{if $\atmtnState_\grSt \strictPrefOrd \top$,}\\
				\top &\text{if $\atmtnState_\grSt = \top$.}
			\end{cases}
			\]
			By Item~\ref{univ:ii}, for all $\grSt\in\grStates$, there exists $\ord$ such that $\grSt\in\grStates_\ord^{\atmtnState_\grSt}$, so $\{\ord \mid \grSt \in \grStates^{\atmtnState_\grSt}_\ord\}$ is non-empty.
			By Item~\ref{item:iv}, we have that if $\morph(\grSt) = (\atmtnState, \ord)$, then $\ord < \mainOrd$, so the image of $\morph$ is indeed in $\univGrStatesB$.
			We show that $\morph$ is $\wc$-preserving: if $\grSt$ satisfies $\wc$, then $\atmtnState_\grSt \prefOrd \atmtnInit$, so by Lemma~\ref{lem:satisfies}, $\morph(\grSt)$ also satisfies $\wc$.
			\item \label{item:vi}
			We show that $\morph$ is a graph morphism.
			Let $\grSt \grTr{\clr} \grSt'$ be an edge of $\gr$ --- we need to show that $\morph(\grSt) \grTr{\clr} \morph(\grSt')$ is an edge of $\univGrB$.
			If $\morph(\grSt) = \top$, this is clear as there are all possible outgoing edges from $\top$.
			If not, we denote $\morph(\grSt) = (\atmtnState, \ord)$ and $\morph(\grSt') = (\atmtnState', \ord')$.
			We have that $\grSt$ satisfies $\atmtnLang{\dba^{\atmtnState}}$.
			Thus, $\grSt'$ must satisfy $\atmtnLang{\dba^{\atmtnUpd(\atmtnState, \clr)}}$.
			This implies that $\atmtnState' \prefOrd \atmtnUpd(\atmtnState, \clr)$.
			We distinguish two cases.
			\begin{itemize}
				\item If $(\atmtnState, \clr) \in \accepSet$, then by construction of $\univGrB$, there are $\clr$-colored edges from $(\atmtnState, \ord)$ to $(\atmtnUpd(\atmtnState, \clr), \ord'')$ for all ordinals $\ord''$.
				By monotonicity, as $\atmtnState' \prefOrd \atmtnUpd(\atmtnState, \clr)$, there is also a $\clr$-colored edge from $(\atmtnState, \ord)$ to $(\atmtnState', \ord')$.
				\item We assume that $(\atmtnState, \clr) \notin \accepSet$.
				If $\atmtnState' = \atmtnUpd(\atmtnState, \clr)$, this means that there exists $\ordBis < \ord$ such that $\grSt'\in \grStates^{\atmtnState'}_{\ordBis}$.
				Therefore, $\ord' \le \ordBis < \ord$.
				So the edge $(\atmtnState, \ord) \grTr{\clr} (\atmtnState', \ord')$ exists by construction of $\univGrB$.
				If $\atmtnState' \strictPrefOrd \atmtnUpd(\atmtnState, \clr)$, then by construction of $\univGrB$, there are $\clr$-colored edges from $(\atmtnState, \ord)$ to $(\atmtnState', \ord'')$ for all ordinals $\ord''$.
			\end{itemize}
		\end{enumerate}
		We have shown in Item~\ref{item:v} and Item~\ref{item:vi} that $\morph$ is a $\wc$-preserving morphism from $\gr$ to~$\univGrB$.
	\end{proof}

	\begin{exa} \label{ex:graphMorphism}
		We consider the objective $\wc=\Buchi{\{a\}}\cup \colors^*aa\colors^\omega$ recognized by the DBA $\dba$ from Example~\ref{ex:AAorBuchiA}, for which graph $\univGr_{\dba, \omega}$ was shown in Example~\ref{ex:univGrExample}.
		We discuss how our construction maps the vertices of graph $\gr$ in Figure~\ref{fig:graphMorphism}.
		Notice that $\s_1$, $\s_2$ and $\s_3$ satisfy $\wc$, but not $\s_4$ and $\s_5$.

		We build explicitly the sets $\grStates_\ord^\atmtnState$ from Item~\ref{item:i} of the proof of Proposition~\ref{prop:universality}.
		All five vertices are in $\grStates_\ord^{\atmtnState_{aa}}$ for every ordinal $\ord$, as all vertices satisfy $\atmtnLang{\dba^{\atmtnState_{aa}}} = \colors^\omega$, and any transition from $\atmtnState_{aa}$ is a B\"uchi transition.
		For the same reasons, $\s_1$, $\s_2$, $\s_3$, and $\s_4$ are in $\grStates_\ord^{\atmtnState_{a}}$ for all~$\ord$.
		We determine the sets $\grStates_\ord^{\atmtnInit}$ using the inductive definition.
		First, $\s_3$ is in $\grStates_\ord^{\atmtnInit}$ for all $\ord$, since $(\atmtnInit,a)\in \accepSet$.
		Therefore, $\s_2\in \grStates_\ord^{\atmtnInit}$ for $\ord\geq 1$ (the $b$-colored edge from $\s_2$ leads to $\s_3$, and $\atmtnUpd(\atmtnInit, b) = \atmtnInit$).
		Finally, $\s_1\in \grStates_\ord^{\atmtnInit}$ for $\ord\geq 2$ for the same reason.
		The morphism $\morph$ from Item~\ref{item:v} assigns $\morph(\s_1) = (\atmtnInit, 2)$, $\morph(\s_2) = (\atmtnInit, 1)$, $\morph(\s_3) = (\atmtnInit, 0)$, $\morph(\s_4) = (\atmtnState_a, 0)$, $\morph(\s_5) = (\atmtnState_{aa}, 0)$.
		\qedEx
	\end{exa}

	\begin{figure}[tbh]
		\centering
		\begin{tikzpicture}[every node/.style={font=\small,inner sep=1pt}]
			\draw (0,0) node[rond] (v1) {$\s_1$};
			\draw ($(v1)+(1.5,0)$) node[rond] (v2) {$\s_2$};
			\draw ($(v1)+(3,0)$) node[rond] (v21) {$\s_3$};
			\draw ($(v1)+(4.5,0)$) node[rond] (v3) {$\s_4$};
			\draw ($(v1)+(6,0)$) node[rond] (v4) {$\s_5$};
			\draw (v1) edge[-latex',out=30,in=180-30] node[above=4pt] {$b$} (v2);
			\draw (v2) edge[-latex',out=180+30,in=-30] node[below=4pt] {$a$} (v1);
			\draw (v2) edge[-latex'] node[above=4pt] {$b$} (v21);
			\draw (v21) edge[-latex',out=30,in=180-30] node[above=4pt] {$a$} (v3);
			\draw (v3) edge[-latex',out=180+30,in=-30] node[below=4pt] {$b$} (v21);
			\draw (v3) edge[-latex'] node[above=4pt] {$a$} (v4);
			\draw (v4) edge[-latex',out=-30,in=30,distance=0.8cm] node[right=4pt] {$b$} (v4);
		\end{tikzpicture}
		\caption{Graph $\gr$ used in Example~\ref{ex:graphMorphism}.}
		\label{fig:graphMorphism}
	\end{figure}

	\begin{rem} \label{rmk:usefulProduct}
		We illustrate why, in the statement of Proposition~\ref{prop:universality}, we use $\card{\atmtnStates}\cdot\mainOrd'$ vertices for each automaton state and not just $\mainOrd'$.

		Let $\colors = \{a, b, c\}$, $\wc$ be the objective recognized by the DBA in Figure~\ref{fig:usefulProduct} (left), and $\gr$ be the graph with two vertices in Figure~\ref{fig:usefulProduct} (right).
		This objective has a total prefix preorder, is progress-consistent, is recognized by a DBA built on top of its prefix-classifier, and the DBA in Figure~\ref{fig:usefulProduct} is saturated.
		Both vertices of $\gr$ satisfy $\wc = \atmtnLang{\dba^{\atmtnState_1}}$ but do not satisfy $\atmtnLang{\dba^{\atmtnState_0}}$, so $\atmtnState_{\s_1} = \atmtnState_{\s_2} = \atmtnState_1$.
		We have $\s_1 \in \grStates_0^{\atmtnState_3}$ (the base case of the induction), and this inductively implies that $\s_2 \in \grStates_1^{\atmtnState_3}$, $\s_1 \in \grStates_2^{\atmtnState_2}$, $\s_2 \in \grStates_3^{\atmtnState_2}$, $\s_1 \in \grStates_4^{\atmtnState_1}$, $\s_2 \in \grStates_5^{\atmtnState_1}$.
		We represent these steps in the figure, below the graph.
		Thus, $\morph(\s_1) = (\atmtnState_1, 4)$ and $\morph(\s_2) = (\atmtnState_1, 5)$.
		We see here that two copies of each automaton state in our universal graph construction would not have sufficed; we actually used six different indices to fully understand the situation, which is due to the interlacing between the graph and the structure of the automaton.
		\qedEx
	\end{rem}

	\begin{figure}[htb]
		\centering
		\begin{minipage}{0.61\linewidth}
			\centering
			\begin{tikzpicture}[every node/.style={font=\small,inner sep=1pt}]
				\draw (0,0) node[diamant] (q1) {$\atmtnState_1$};
				\draw ($(q1.south)-(0,0.4)$) edge[-latex'] (q1);
				\draw ($(q1)-(1.5,0)$) node[diamant] (q0) {$\atmtnState_0$};
				\draw ($(q1)+(1.5,0)$) node[diamant] (q2) {$\atmtnState_2$};
				\draw ($(q1)+(3,0)$) node[diamant] (q3) {$\atmtnState_3$};

				\draw (q1) edge[-latex'] node[above=4pt] {$b$} (q0);
				\draw (q1) edge[-latex',out=30,in=180-30] node[above=4pt] {$a$} (q2);
				\draw (q2) edge[-latex',out=180+30,in=-30] node[below=4pt] {$b$} (q1);
				\draw (q2) edge[-latex',out=30,in=180-30] node[above=4pt] {$a$} (q3);
				\draw (q3) edge[-latex',out=180+30,in=-30] node[below=4pt] {$b$} (q2);
				\draw (q3) edge[-latex',out=-30,in=30,distance=0.8cm,accepting] node[right=4pt] {$a$} (q3);
				\draw (q3) edge[-latex',out=60,in=120,distance=0.8cm] node[above=4pt] {$c$} (q3);
				\draw (q1) edge[-latex',out=60,in=120,distance=0.8cm] node[above=4pt] {$c$} (q1);
				\draw (q2) edge[-latex',out=60,in=120,distance=0.8cm] node[above=4pt] {$c$} (q2);
				\draw (q0) edge[-latex',out=60,in=120,distance=0.8cm] node[above=4pt] {$a, b, c$} (q0);
			\end{tikzpicture}
		\end{minipage}%
		\begin{minipage}{0.39\linewidth}
			\centering
			\begin{tikzpicture}[every node/.style={font=\small,inner sep=1pt}]
				\draw (0,0) node[rond] (v1) {$\s_1$};
				\draw ($(v1)+(1.5,0)$) node[rond] (v2) {$\s_2$};
				\draw (v1) edge[-latex',out=30,in=180-30] node[above=4pt] {$a$} (v2);
				\draw (v2) edge[-latex',out=180+30,in=-30] node[above=4pt] {$c$} (v1);

				\draw ($(v1)-(0,.58)$) node[rond,draw=none] (in1) {\rotatebox{-90}{$\in$}};
				\draw ($(v2)-(0,.75)$) node[rond,draw=none] (in1) {\rotatebox{-90}{$\in$}};
				\draw ($(v1)-(0,1.)$) node[rond,draw=none] (v11) {$\grStates_0^{\atmtnState_3}$};
				\draw ($(v2)-(0,1.4)$) node[rond,draw=none] (v21) {$\grStates_1^{\atmtnState_3}$};
				\draw ($(v11)-(0,0.8)$) node[rond,draw=none] (v12) {$\grStates_2^{\atmtnState_2}$};
				\draw ($(v21)-(0,0.8)$) node[rond,draw=none] (v22) {$\grStates_3^{\atmtnState_2}$};
				\draw ($(v12)-(0,0.8)$) node[rond,draw=none] (v13) {$\grStates_4^{\atmtnState_1}$};
				\draw ($(v22)-(0,0.8)$) node[rond,draw=none] (v23) {$\grStates_5^{\atmtnState_1}$};
				\draw (v21) edge[-latex'] node[above=1pt,xshift=3pt] {$c$} (v11);
				\draw (v22) edge[-latex'] node[above=1pt,xshift=3pt] {$c$} (v12);
				\draw (v23) edge[-latex'] node[above=1pt,xshift=3pt] {$c$} (v13);
				\draw (v12) edge[-latex'] node[above=1pt,xshift=-3pt] {$a$} (v21);
				\draw (v13) edge[-latex'] node[above=1pt,xshift=-3pt] {$a$} (v22);
			\end{tikzpicture}
		\end{minipage}
		\caption{DBA (left) and graph $\gr$ (right) used in Remark~\ref{fig:usefulProduct} to illustrate the need for $\card{\atmtnStates}\cdot\mainOrd'$ vertices for each automaton state in the universal graph used in the construction of Proposition~\ref{prop:universality}.}
		\label{fig:usefulProduct}
	\end{figure}

	We conclude this section by proving Proposition~\ref{prop:sufficient}, showing that $\wc$ is half-positional under the three conditions from Theorem~\ref{thm:mainChar}.
	\begin{proof}[Proof of Proposition~\ref{prop:sufficient}]
		Using Lemma~\ref{lem:monotonic} and Proposition~\ref{prop:universality}, we have for all cardinals $\cardinal$ that there exists a completely well-monotonic $(\cardinal, \wc)$-universal graph.
		By Theorem~\ref{thm:univGr}, this implies that $\wc$ is half-positional.
	\end{proof}

	\section{Conclusion and future work}
	We have provided a characterization of half-positionality for DBA-recognizable objectives.
	This novel result is necessary to advance toward a full understanding of half-positionality of $\omega$-regular objectives.

	Besides $\omega$-regular objectives, one could study the half-positionality of classes of objectives defined by topological properties; for instance, those corresponding to different levels of the Borel hierarchy. In particular, a natural step in this direction would be to study the half-positionality of $\Pi_2^0$ objectives, which correspond to those that can be recognized by deterministic, but infinite, B\"uchi automata.

	Another interesting extension is to characterize the memory requirements of DBA-recognizable objectives.
	An intermediate and already seemingly difficult step would be to characterize the memory requirements of objectives recognized by deterministic \emph{weak} automata~\cite{Wagner1979omega,Sta83,Lod01}, generalizing the characterization for safety specifications~\cite{CFH14}.

	\section*{Acknowledgments}
	The authors would like to thank Igor Walukiewicz for suggesting a simplification of the proof of Lemma~\ref{lem:PrefIndMeansBuchi} and Pierre Ohlmann for interesting discussions on the subject.

	\bibliographystyle{alphaurl}
	\bibliography{articlesBuchi}

	\appendix
	\newpage

	\newcommand{\worseSet}{\ensuremath{\sqsubseteq_\wc}}
	\newcommand{\worseStrict}{\ensuremath{\sqsubset_\wc}}

	\section{Relation with other properties from the literature} \label{app:relation}
	We show precise links between the first two conditions we have defined in Section~\ref{sec:three-conditions} and the \emph{(strong) monotony} and \emph{(strong) selectivity} notions from~\cite{GZ05,BFMM11}.

	Let $\wc\subseteq\colors^\omega$ be an objective.
	For $M \subseteq \colors^*$ a set of \emph{finite} words, we write $[M]$ for the set of infinite words whose every finite prefix is a prefix of a finite word in $M$.
	For $\word\in\colors^*$, we write $\word\wc$ for set of infinite words $\word\word'$ with $\word'\in\wc$.
	For two sets of infinite words $M, N \subseteq \colors^\omega$, we write $M \worseSet N$ if $M \cap \wc \neq \emptyset$ implies $N \cap \wc \neq \emptyset$ (in other words, either all words in $M$ are losing, or there exists a winning word in $N$).
	We write $M \worseStrict N$ if $M \cap \wc = \emptyset$ and $N \cap \wc \neq \emptyset$ (in other words, all words in $M$ are losing and there exists a winning word in $N$).

	\paragraph{Properties of prefixes.}
	Objective $\wc$ is \emph{strongly monotone} if for all languages of infinite words $M, N \subseteq \colors^\omega$, if there exists $\word\in\colors^*$ such that $\word M \worseStrict \word N$, then for all $\word'\in\colors^*$, $\word' M \worseSet \word' N$.
	Objective $\wc$ is \emph{monotone} if for all \emph{regular} languages (of finite words) $M, N \subseteq \colors^*$, if there exists $\word\in\colors^*$ such that $\word [M] \worseStrict \word [N]$, then for all $\word'\in\colors^*$, $\word' [M] \worseSet \word' [N]$.

	We show that having a total prefix preorder is in general equivalent to strong monotony as defined in~\cite{BFMM11}, and is even equivalent to monotony as defined in~\cite{GZ05} for $\omega$-regular objectives.
	In particular, strong monotony and monotony coincide for $\omega$-regular objectives.
	\begin{lem}
		Let $\wc\subseteq \colors^\omega$ be an objective.
		Objective $\wc$ has a total prefix preorder if and only if $\wc$ is strongly monotone.
		If $\wc$ is $\omega$-regular, $\wc$ has a total prefix preorder if and only if $\wc$ is monotone.
	\end{lem}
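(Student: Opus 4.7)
I would treat both equivalences by splitting each into two directions, since the forward direction (total preorder implies (strong) monotony) is essentially a direct unfolding of the definitions, while the backward direction requires constructing explicit witnesses, which is where $\omega$-regularity enters in the second statement.

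For the first equivalence (general objectives, strong monotony), I would first assume $\prefOrd$ is total and derive strong monotony. Suppose $wM\worseStrict wN$. For any fixed $w'\in\colors^*$, totality gives either $w\prefOrd w'$ or $w'\prefOrd w$. In the latter case, $(w')^{-1}\wc\subseteq w^{-1}\wc$, so no $u\in M$ can satisfy $w'u\in\wc$ without contradicting $wM\cap\wc=\emptyset$; hence $w'M\worseSet w'N$ vacuously. In the former case, any witness $v\in N$ with $wv\in\wc$ satisfies $v\in w^{-1}\wc\subseteq(w')^{-1}\wc$, giving $w'v\in\wc$ and thus $w'N\cap\wc\neq\emptyset$. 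Conversely, assuming $\prefOrd$ is not total, pick incomparable $w_1,w_2$ and witnesses $u\in w_1^{-1}\wc\setminus w_2^{-1}\wc$ and $v\in w_2^{-1}\wc\setminus w_1^{-1}\wc$. Setting $M=\{v\}$ and $N=\{u\}$ in $\colors^\omega$ gives $w_1M\worseStrict w_1N$ but $w_2M\not\worseSet w_2N$, refuting strong monotony.

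For the second equivalence, assuming $\wc$ is $\omega$-regular, the forward direction is immediate since monotony only demands the conclusion on subclasses $[M],[N]$ with $M,N$ regular. The backward direction is the part where the hypothesis of $\omega$-regularity is used: I would again take incomparable $w_1,w_2$ and note that $w_1^{-1}\wc$ and $w_2^{-1}\wc$ are both $\omega$-regular. Applying Lemma~\ref{lem:ultPer} to the non-empty $\omega$-regular sets $w_1^{-1}\wc\setminus w_2^{-1}\wc$ and $w_2^{-1}\wc\setminus w_1^{-1}\wc$, I obtain ultimately periodic witnesses $u=x_1(y_1)^\omega$ and $v=x_2(y_2)^\omega$ with $y_1,y_2\in\colors^+$. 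Taking the regular finite-word languages $M=\{x_2y_2^n\mid n\ge 0\}$ and $N=\{x_1y_1^n\mid n\ge 0\}$, the same argument yields a failure of monotony once we verify $[M]=\{v\}$ and $[N]=\{u\}$.

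The only slightly delicate step is this last verification: the inclusion $\{v\}\subseteq[M]$ is immediate from the definition of $[\cdot]$, and the reverse inclusion uses that $y_2\neq\emptyWord$, so any infinite word all of whose finite prefixes extend to words in $M$ must literally be $x_2(y_2)^\omega$. This is where forcing $y_1,y_2\in\colors^+$ in Lemma~\ref{lem:ultPer} is essential; without it, the language $M$ could be finite and $[M]$ empty. Everything else is routine set-chasing once these witnesses are in place, so I expect no significant obstacle beyond carefully stating which inclusions of $(\cdot)^{-1}\wc$ drive each implication.
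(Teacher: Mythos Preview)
Your proposal is correct and follows essentially the same approach as the paper: both split into the two equivalences, handle the forward directions by a direct case analysis on $\word\prefOrd\word'$ versus $\word'\prefOrd\word$, and handle the backward directions using singleton (respectively, $uv^*$-shaped regular) witness languages built from incomparability witnesses, invoking Lemma~\ref{lem:ultPer} in the $\omega$-regular case to make the witnesses ultimately periodic. The only cosmetic difference is that the paper phrases the backward directions as direct proofs (assume $x\not\prefOrd y$ and show $\inverse{y}\wc\subseteq\inverse{x}\wc$ element by element, re-applying (strong) monotony for each continuation), whereas you argue by contrapositive and produce a single counterexample; the underlying witnesses and the use of $[uv^*]=\{uv^\omega\}$ are identical.
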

	\begin{proof}
		We first prove the first equivalence, with no restriction on $\wc$.

		We assume that $\wc$ has a total prefix preorder and we show that it is strongly monotone.
		Let $\word\in\colors^*$, $M, N\subseteq \colors^\omega$ such that $\word M \worseStrict \word N$.
		Let $\word'\in\colors^*$.
		As the prefix-preorder is total, we have that $\word \prefOrd \word'$ or $\word' \prefOrd \word$.
		If $\word \prefOrd \word'$, as $\word$ has a winning continuation in $N$, this continuation is also winning for $\word'$ --- so $\word' M \worseSet \word' N$.
		If $\word' \prefOrd \word$, then all the infinite words in $\word' M$ are also losing, thus $\word' M \worseSet \word' N$.

		We assume that $\wc$ is strongly monotone and we show that the prefix preorder is total.
		Let $x, y\in\colors^*$.
		We show that $x$ and $y$ are comparable for $\prefOrd$: we assume w.l.o.g.\ that $x \not\prefOrd y$, and we show that $y \prefOrd x$, i.e., that $\inverse{y}\wc \subseteq \inverse{x}\wc$.
		Let $y' \in \inverse{y}\wc$.
		As $x \not\prefOrd y$, there is $x'\in\colors^\omega$ such that $xx' \in \wc$ but $yx' \notin \wc$.
		By taking $M = \{x'\}$ and $N = \{y'\}$, we have $yM \worseStrict yN$.
		Hence, $xM \worseSet xN$ by strong monotony.
		As $xM$ contains a winning word, so does $xN$, so $y'\in\inverse{x}\wc$.

		We now assume that $\wc$ is $\omega$-regular.
		The left-to-right implication still holds, as monotony is a weaker notion than strong monotony.
		We reprove the right-to-left implication with this extra assumption.
		We assume that $\wc$ is monotone and we show that the prefix preorder is total.
		Let $x, y\in\colors^*$.
		We show that $x$ and $y$ are comparable for $\prefOrd$: we assume w.l.o.g.\ that $x \not\prefOrd y$, and we show that $y \prefOrd x$, i.e., that $\inverse{y}\wc \subseteq \inverse{x}\wc$.
		As $\inverse{x}\wc$ and $\inverse{y}\wc$ are also $\omega$-regular, by Lemma~\ref{lem:ultPer}, it suffices to show that all ultimately periodic words in $\inverse{y}\wc$ are in $\inverse{x}\wc$.
		Let $u\in\colors^*, v\in\colors^+$ such that $uv^\omega \in \inverse{y}\wc$.
		As $x \not\prefOrd y$, once again by Lemma~\ref{lem:ultPer}, there is $u'\in\colors^*, v'\in\colors^+$ such that $xu'(v')^\omega \in \wc$ but $yu'(v')^\omega \notin \wc$.
		By taking $M = u'(v')^*$ and $N = uv^*$ (which are regular languages), we have $[M] = \{u'(v')^\omega\}$ and $[N] = \{uv^\omega\}$.
		Therefore, $y[M] \worseStrict y[N]$.
		Hence, $xM \worseSet xN$ by monotony.
		As $xM$ contains a winning word, so does $xN$, so $uv^\omega\in\inverse{x}\wc$.
	\end{proof}
	As having a total prefix preorder holds symmetrically for an objective and its complement, we deduce that so does strong monotony, and so does monotony for $\omega$-regular objectives.
	The latter is especially interesting for the study of $\omega$-regular objectives, as monotony is not symmetric in general (there are objectives that are monotone but whose complement is not).

	\paragraph{Properties of cycles.}
	Objective $\wc$ is \emph{strongly selective} if for all languages of finite words $M, N, K \subseteq \colors^*$, for all $\word\in\colors^*$, $\word[(M \cup N)^*K] \worseSet \word[M^*] \cup \word[N^*] \cup \word[K]$.
	Selectivity is the same definition with $M, N, K$ being restricted to being \emph{regular} languages.

	As for having a total prefix preorder and (strong) monotony, we can link progress-consistency and (strong) selectivity.
	We have that strong selectivity implies progress-consistency, and that selectivity implies progress-consistency when $\wc$ is $\omega$-regular.

	\begin{lem}
		Let $\wc\subseteq \colors^\omega$ be an objective.
		If $\wc$ is strongly selective, then it is progress-consistent.
		If $\wc$ is $\omega$-regular and selective, then it is progress-consistent.
	\end{lem}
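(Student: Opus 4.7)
The plan is to prove both statements by contrapositive, using the exact same choice of three languages to apply (strong) selectivity. Assume $\wc$ is not progress-consistent: then there are $\word_1 \in \colors^*$ and $\word_2 \in \colors^+$ with $\word_1 \strictPrefOrd \word_1\word_2$ and $\word_1\word_2^\omega \notin \wc$. Unfolding the definition of $\strictPrefOrd$, there exists a continuation $\word' \in \colors^\omega$ with $\word_1\word_2\word' \in \wc$ and $\word_1\word' \notin \wc$. In the $\omega$-regular case, I would additionally invoke Lemma~\ref{lem:ultPer} on the $\omega$-regular objective $(\inverse{\word_1\word_2}\wc) \setminus (\inverse{\word_1}\wc)$ to further assume that $\word'$ is ultimately periodic, say $\word' = xy^\omega$ with $x\in\colors^*$ and $y\in\colors^+$.

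Next I apply (strong) selectivity with $\word = \word_1$ and the languages $M = \{\word_2\}$, $N = \emptyset$, and $K$ chosen so that $[K] = \{\word'\}$: in the general case $K$ is the set of finite prefixes of $\word'$, and in the $\omega$-regular case $K = xy^*$ (which is regular, as are the finite $M$ and $N$, so selectivity suffices there). A direct computation gives $(M \cup N)^*K = \word_2^* K$, and the infinite completion $[(M \cup N)^*K]$ contains both $\word_2^\omega$ and every word of the form $\word_2^n \word'$ for $n \ge 0$. In particular $\word_1\word_2\word' \in \word_1[(M \cup N)^*K] \cap \wc$, so this set is non-empty.

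By (strong) selectivity, at least one of the three sets $\word_1[M^*]$, $\word_1[N^*]$, $\word_1[K]$ intersects $\wc$. However, $[N^*] = [\{\emptyWord\}] = \emptyset$ (no infinite word has all of its finite prefixes being prefixes of $\emptyWord$), and $\word_1[K] = \{\word_1\word'\}$ is disjoint from $\wc$ by construction. The only remaining possibility is $\word_1[M^*] \cap \wc \neq \emptyset$; since $[M^*] = \{\word_2^\omega\}$, this forces $\word_1\word_2^\omega \in \wc$, contradicting the failure of progress-consistency. The only real subtlety is ensuring the ``limit'' operator $[\cdot]$ behaves as expected on the degenerate inputs $\{\emptyWord\}$ and on $\word_2^*$ (using $\word_2 \in \colors^+$ so that $\word_2^\omega$ is well-defined), which is routine.
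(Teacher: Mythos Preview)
Your proof is correct and follows essentially the same approach as the paper: the same contrapositive setup, the same choice $M=\{\word_2\}$, $N=\emptyset$, and $K$ the prefixes of the witnessing continuation (respectively $K=xy^*$ in the $\omega$-regular case via Lemma~\ref{lem:ultPer}). The only cosmetic difference is that the paper phrases the conclusion as ``selectivity fails'' by directly noting that $\word_1[M^*]=\{\word_1\word_2^\omega\}$ is already disjoint from $\wc$ by hypothesis, whereas you phrase it as deriving the contradiction $\word_1\word_2^\omega\in\wc$; this is the same argument.
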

	\begin{proof}
		We prove the first implication by showing the contrapositive.
		We assume that $\wc$ is not progress-consistent, i.e., there exist $\word_1\in\colors^*$ and $\word_2\in\colors^+$ such that $\word_1 \strictPrefOrd \word_1\word_2$ and $\word_1(\word_2)^\omega \notin \wc$.
		Let $\word_3\in\colors^\omega$ be such that $\word_1\word_3\notin\wc$ and $\word_1\word_2\word_3\in\wc$, which exists as $\word_1 \strictPrefOrd \word_1\word_2$.
		Let $\word = \word_1$, $M = \{\word_2\}$, $N = \emptyset$, and $K = \{\word_3'\in\colors^* \mid \word_3'\ \text{is a prefix of}\ \word_3\}$.
		We have that $\word[(M \cup N)^*K]$ contains the winning word $\word_1\word_2\word_3$.
		However, we have $\word[M^*] = \{\word_1(\word_2)^\omega\}$, $\word[N^*] = \emptyset$, and $\word[K] = \{\word_1\word_3\}$: the set $\word[M^*] \cup \word[N^*] \cup \word[K]$ contains only losing words.
		We do not have $\word[(M \cup N)^*K] \worseSet \word[M^*] \cup \word[N^*] \cup \word[K]$.

		For the second implication, we assume that $\wc$ is $\omega$-regular.
		The proof proceeds in the exact same way, except that we have to make sure that $M$, $N$, and $K$ are regular.
		This is the case for $M$ and $N$.
		To make $K$ regular, we use Lemma~\ref{lem:ultPer} and obtain that there is an ultimately periodic continuation $uv^\omega$ witnessing that $\word_1 \strictPrefOrd \word_1\word_2$.
		We can set $K = uv^*$ (which is regular) and the proof ends in the same way.
	\end{proof}

\end{document}